\newtheorem{theorem}{Theorem}
\newtheorem{lemma}[theorem]{Lemma}
\newtheorem{proposition}[theorem]{Proposition}
\newtheorem{corollary}[theorem]{Corollary}
\theoremstyle{remark}
\newtheorem{remark}[theorem]{Remark}
\newtheorem{example}[theorem]{Example}
\theoremstyle{definition}
\newtheorem{algorithm}[theorem]{Algorithm}
\newcommand{\nxt}[1]{\bar{#1}}
\newcommand{\nxte}[2]{#1_{#2}}
\newcommand{\funfrdist}[1]{\delta^{#1}_{\mathrm{FR}}}
\newcommand{\frdist}[3]{\funfrdist{#1}(#2;#3)}
\newcommand{\simfrdist}[1]{\funfrdist{2}(#1)}
\newcommand{\funsimfrdist}{\funfrdist{2}}
\newcommand{\frnumber}[2]{\mathrm{E}\left(#1,#2\right)}
\newcommand{\simfrnumber}[1]{\frnumber{#1}{2}}
\newcommand{\setdiv}[2]{\mathrm{D}_{#1}(#2)}
\newcommand{\simsetdiv}[1]{\mathrm{D}(#1)}
\newcommand{\cardsetdiv}[2]{\#\setdiv{#1}{#2}}
\newcommand{\cardsimsetdiv}[1]{\#\simsetdiv{#1}}
\newcommand{\apery}[2]{\mathrm{Ap}(#1,#2)}
\newcommand{\cardapery}[2]{\#\apery{#1}{#2}}
\newcommand{\finset}[1]{\{#1\}}
\newcommand{\setpossibles}[2]{\mathcal{P}^{#1}(#2)}
\newcommand{\finpossibles}[2]{\mathcal{X}^{#1}(#2)}
\title[On the second Feng-Rao distance of AG codes related to Arf semigroups]{On the second Feng-Rao distance of Algebraic Geometry codes related to Arf semigroups}
\author{Jos\'{e} I. Farr\'{a}n}
\address{Departamento de Matem\'{a}tica Aplicada, Universidad de Valladolid, Escuela de Ingenier\'{\i}a Inform\'{a}tica de Segovia, Espa\~{n}a}
\email{jifarran@eii.uva.es}
\thanks{The first author is supported by the project MTM2015-65764-C3-1-P (MINECO/FEDER)}
\author{Pedro A. Garc\'{\i}a-S\'anchez}
\address{IEMath-GR and Departamento de \'Algebra, Universidad de Granada, E-18071 Granada, Espa\~na}
\email{pedro@ugr.es} 
\thanks{The second author is supported by the projects MTM2014-55367-P, FQM-343,  FQM-5849, and FEDER funds}
\author{Benjam\'{\i}n A. Heredia}
\address{Departamento de Matem\'atica e Centro de Matem\'atica e Aplica\c{c}oes (CMA), FCT, Universidade Nova de Lisboa}
\email{b.heredia@fct.unl.pt}
\thanks{The third author is supported by the Funda\c{c}\~ao para a Ci\^encia e a Tecnologia (Portuguese Foundation for Science and Technology) through the project UID/MAT/00297/2013 (Centro de Matem\'atica e Aplica\c{c}\~oes).}
\begin{document}

\keywords{AG codes, towers of function fields, generalized Hamming weights, order bounds, Feng-Rao numbers, Arf semigroups, inductive semigroups, Ap\'{e}ry sets} 

\subjclass[2010]{11T71, 20M14, 11Y55} 
\maketitle

\begin{abstract}
	
	We describe the second (generalized) Feng-Rao distance for elements in an Arf numerical semigroup that are greater than or equal to the conductor of the semigroup. This provides a lower bound for the second Hamming weight for one point AG codes. In particular, we can obtain the second Feng-Rao distance for the codes defined by asymptotically good towers of function fields whose Weierstrass semigroups are inductive. In addition, we compute the second Feng-Rao number, and provide some examples and comparisons with previous results on this topic. These calculations rely on Ap\'{e}ry sets, and thus several results concerning Ap\'ery sets of Arf semigroups are presented.
\end{abstract}

\section{Introduction}

Algebraic Geometry codes (AG codes for short) are a family of error-correcting codes whose parameters improve the Gilbert-Varshamov codes (see~\cite{HvLP}). These codes can be efficiently decoded by means of the so-called Feng-Rao majority voting decoding algorithm introduced in~\cite{f-r}. This algorithm decodes up to half the so-called Feng-Rao distance, which is a lower bound for the minimum distance, better than the Goppa distance. The Feng-Rao distance depends only on a Weierstrass semigroup of the underlying algebraic curve, and this allows us to study this parameter in general numerical semigroups. 

The construction of asymptotically good sequences of AG codes became more explicit from the introduction of asymptotically good towers of function fields by Garc\'{\i}a and Stichtenoth \cite{GS}. Numerical semigroups associated to this construction are inductive, and in particular are Arf semigroups (see \cite{Arf-IEEE} and \cite{inductivos}). 

The minimum distance was extended with the introduction of the generalized Hamming weight, independently in \cite{HKM} and \cite{Wei} for coding and cryptographic purposes, respectively. The generalization of the Feng-Rao bound was first considered in~\cite{H-P}, where it was proven that the so-called generalized (Feng-Rao) order bounds are also lower bounds for the generalized Hamming weights. 

For large elements of the underlying numerical semigroup $\Gamma$, the $r^{th}$ (generalized) Feng-Rao distance equals the Goppa bound up to a number, which is called the $r^{th}$ Feng-Rao number, denoted $\mathrm E(\Gamma,r)$. This integer depends solely on $r$ and the semigroup $\Gamma$. The second Feng-Rao number for inductive semigroups was computed in~\cite{inductivos}, and as a consequence, that of the tower of function fields by Garc\'{\i}a and Stichtenoth. Nevertheless, for small elements of the semigroup, namely those between the conductor $c$ and $2c-1$, the actual generalized Feng-Rao distance can be larger than the corresponding Goppa-like bound. This is due to the fact that Arf semigroups are (almost always) not symmetric (see~\cite{F-M}). 

The problem of computing Feng-Rao numbers and generalized Feng-Rao distances is hard. For numerical semigroups generated by intervals or by two (coprime) positive integers we have formulas for the $r^{th}$ Feng-Rao numbers (see \cite{intervalos} and \cite{D}, respectively). In this paper we compute the second Feng-Rao distance in the whole interval $[c,2c-1]$ for Arf semigroups, generalizing the results of both \cite{Arf-IEEE} and \cite{inductivos}. The computation is done by means of a very explicit algorithm, and the calculations improve sensibly those obtained in \cite{inductivos}. 

Every Arf numerical semigroup can be constructed by a series of translations of the form $\nxte{\Gamma}{m}=\{0\}\cup(m+\Gamma)$, starting with $\mathbb N$ (the set of nonnegative integers). We study how Ap\'ery sets and sets of divisors of $\nxte{\Gamma}{m}$ can be constructed from the corresponding sets in $\Gamma$ (more details in Section~\ref{sec:preliminaries}). 
With this machinery we will be able to infer properties of the second Feng-Rao distance and Feng-Rao numbers of an Arf numerical semigroup. 

The approach we follow in this paper is different from the one given in \cite{inductivos}, where homothecy was considered instead. The class studied there was the one closed under these homothecies: the class of inductive numerical semigroups. Every inductive numerical semigroup has the Arf property; thus in some sense the results presented in this manuscript extend those given in \cite{inductivos}. We also study here Feng-Rao distances, which were not considered in \cite{inductivos}. In particular, the second Feng-Rao distance of the inductive semigroups underlying in the construction \cite{GS} of Garc\'{\i}a and Stichtenoth can be explicitly computed with our algorithm. 

The paper is organized as follows: Section~\ref{sec:preliminaries} introduces the preliminary results needed in the rest of the manuscript concerning numerical semigroups, Arf semigroups, divisors, Ap\'{e}ry sets, Feng Rao numbers and generalized Feng-Rao distances. Section~\ref{sec:apery} computes the second Feng-Rao number for Arf semigroups. Section~\ref{sec:frdist} is devoted to the calculation of the second Feng-Rao distance in the interval $[c,2c-1]$, which is divided in two parts. We first study the interval $[c,c+e-1]$, with $e$ the multiplicity of the semigroup, and then the remaining elements in $[c,2c-1]$ by using an iterative procedure. Section~\ref{sec:hyperE} studies the particular case of hyperelliptic semigroups. 
Section~\ref{sec:computation} discusses some computational issues concerning generalized Feng-Rao distances for arbitrary semigroups, and finally Section \ref{sec:conclusiones} shows some examples and applications to AG codes coming from towers of function fields.

\section{Preliminaries}\label{sec:preliminaries}

In this section we introduce the notations and definitions needed in the rest of the paper. A \emph{numerical semigroup} is a set of nonnegative integers that is closed under addition, contains the zero, and has finite complement in $\mathbb N$. 

Let \(\Gamma=\{\rho_1,\rho_2,\ldots\}\) be a numerical semigroup, with $0=\rho_1<\rho_2<\ldots$. We say that $\Gamma$ is an \emph{Arf numerical semigroup}  if for every \(i\geq j\geq k\), \(\rho_i+\rho_j-\rho_k\) is in \(\Gamma\). Thus Arf numerical semigroups are a particular example of numerical semigroups that can be defined by a pattern \cite{patterns}. Arf numerical semigroups can be also characterized as numerical semigroups attaining a redundancy bound associated to evaluation codes, \cite{maria1}. 

We will denote by \(e=\mathrm e(\Gamma)=\rho_2\) the \emph{multiplicity }of \(\Gamma\). The \emph{conductor} of \(\Gamma\), \(c=\mathrm c(\Gamma)=\rho_r\), is the least integer $c$ such that $c+\mathbb N\subseteq \Gamma$. The \emph{genus} of \(\Gamma\) is the cardinality of $\mathbb N\setminus \Gamma$, and we will refer to it as \(g=\mathrm g(\Gamma)\). More details on numerical semigroups can be found in \cite{ns}, and some applications in \cite{a-g-b}. A nice review of the interaction between numerical semigroups and AG codes can be found in \cite{codes-ns}.

For a numerical semigroup $\Gamma$, and $x,y\in \mathbb Z$, we say that $x$ \emph{divides} $y$ if $y-x\in \Gamma$ (in the literature this is denoted sometimes by $x\le_\Gamma y$). This relation is an ordering in $\mathbb Z$. If $x\in\Gamma$ and $x\neq 0$, then $x$ has at least two trivial divisors ($0$ and $x$ itself), and $x$ is called \emph{irreducible} if it has exactly these two trivial divisors. 

A subset $A$ of $\Gamma$ is a \emph{generating system} of $\Gamma$ if $\Gamma=\langle a_1+\cdots +a_k\mid k\in \mathbb N, a_1,\ldots,a_k\in A\rangle$. Every numerical semigroup $\Gamma$ has a \emph{minimal generating system}, that is, a generating system such that none of its subsets generates the semigroup. This minimal generating system is precisely $\Gamma^*\setminus (\Gamma^*+\Gamma^*)$, with $\Gamma^*=\Gamma\setminus \{0\}$ (we will use the asterisk notation to remove the zero element from a set of integers). The elements of the minimal generating system are precisely the \emph{irreducible elements} of the semigroup. The cardinality of the minimal generating system of a numerical semigroup $\Gamma$ is known as the \emph{embedding dimension} of $\Gamma$. As two irreducible elements in a numerical semigroup cannot be congruent modulo the multiplicity of the semigroup, it follows that the embedding dimension is less than or equal to the multiplicity of the numerical semigroup. Numerical semigroups with \emph{maximal embedding dimension} are thus numerical semigroups with embedding dimension equal to the multiplicity. It is well known that numerical semigroups with the Arf property have maximal embedding dimension (see for instance \cite[Chapter 2]{ns}).

For any integer \(x\in \mathbb{Z}\), denote the \emph{Apéry set} of \(\Gamma\) with respect to \(x\) by
\[
\apery{\Gamma}{x}=\finset{m\in \Gamma \mid m - x \notin \Gamma}.
\]
The Ap\'ery set of $x\in \mathbb Z$ is formed precisely by the elements in $\Gamma$ that are not 
\lq\lq divisible\rq\rq \ (with respect to the semigroup) by $x$. If $x\in \Gamma$, then \(\apery{\Gamma}{x}\) has precisely $x$ elements (indeed the converse is also true). Ap\'ery in \cite{apery} originally defined these sets only for elements in the semigroup. We will see later that extending this definition to every integer is quite convenient.

If $\Gamma$ has maximal embedding dimension, then the set
\[
\{e\}\cup(\apery{\Gamma}{e}\setminus\{0\})
\]
is the minimal generating system of $\Gamma$ (see for instance \cite[Chapter 2]{ns}).

Let $\Gamma$ be a numerical semigroup. The \emph{set of divisors} in $\Gamma$ of $x\in \mathbb Z$ is given by
\[
\simsetdiv{x}=\setdiv{\Gamma}{x}=\finset{m\in \Gamma \mid x-m \in \Gamma}.
\]
It is easy to see that this set is not empty if and only if \(x\in \Gamma\). If \(m\in \Gamma\), then \(0,m\in \simsetdiv{m}\) and we have \(\cardsimsetdiv{m}=2\) if and only if \(m\) is irreducible. The following will be useful later.
\begin{lemma}\label{divisorsContained}
	Let $\Gamma$ be a numerical semigroup. Given \(m,m'\in \Gamma\) with \(m\leq m'\) we have that \(\simsetdiv{m}\subseteq \simsetdiv{m'}\) if and only if \(m\in \simsetdiv{m'}\).
\end{lemma}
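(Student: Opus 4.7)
The statement is a straightforward biconditional, and I would split it into its two directions, each of which follows directly from the closure of $\Gamma$ under addition.

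For the forward implication, I would simply note that $m\in \simsetdiv{m}$ always holds, since $m-m=0\in\Gamma$ and $m\in\Gamma$ by hypothesis. Therefore, assuming $\simsetdiv{m}\subseteq \simsetdiv{m'}$, we immediately get $m\in \simsetdiv{m'}$. The hypothesis $m\leq m'$ is not actually needed here; it is used (implicitly) only to make the nontrivial direction meaningful.

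For the backward implication, assume $m\in \simsetdiv{m'}$, so that $m'-m\in\Gamma$. Pick any $n\in \simsetdiv{m}$; then $n\in\Gamma$ and $m-n\in\Gamma$. The key computation is
\[
m'-n = (m'-m)+(m-n),
\]
which exhibits $m'-n$ as a sum of two elements of $\Gamma$, hence an element of $\Gamma$. Together with $n\in\Gamma$, this gives $n\in \simsetdiv{m'}$, and so $\simsetdiv{m}\subseteq \simsetdiv{m'}$.

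There is no real obstacle here: the lemma is a direct consequence of the definition of $\simsetdiv{\cdot}$ and the semigroup property of $\Gamma$. The only subtlety worth flagging is the role of the condition $m\leq m'$, which ensures that $m'-m$ is a nonnegative integer (and thus its membership in $\Gamma$ is the usual notion), and which is automatically satisfied whenever $m\in \simsetdiv{m'}$ with $m,m'\in\Gamma$.
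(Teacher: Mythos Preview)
Your proof is correct and follows essentially the same approach as the paper: both directions use that $m\in\simsetdiv{m}$ for the forward implication and the decomposition $m'-n=(m'-m)+(m-n)$ for the reverse. Your additional remark on the role of the hypothesis $m\le m'$ is a nice clarification not present in the paper's terse version.
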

\begin{proof}
	Since \(m\in \simsetdiv{m}\) the necessity is obvious. 
	
	For the converse, suppose that \(m\in \simsetdiv{m'}\) and take \(x\in \simsetdiv{m}\). Then \(x\in \Gamma\) and \(m'-x=(m'-m) + (m-x)\in \Gamma\), whence \(m\in \simsetdiv{m'}\).
\end{proof}

For \(m_1,\ldots,m_k\in \Gamma\), we write
\[
\simsetdiv{m_1,\ldots,m_k}=\setdiv{\Gamma}{m_1,\ldots,m_k}=
\simsetdiv{m_1}\cup\cdots \cup \simsetdiv{m_k}.
\]

The \emph{\(r^\text{th}\) Feng-Rao distance} of $m\in \Gamma$ is given by 
\[
\frdist{r}{\Gamma}{m}=\min\finset{\cardsetdiv{\Gamma}{m_1,\ldots,m_r}\mid m\leq m_1<\cdots<m_r, m_i\in \Gamma}.
\]
One of the goals of this paper is to compute \(\frdist{2}{\Gamma}{m}\) for $\Gamma$ an Arf numerical semigroup and $m\ge c$ (the conductor of $\Gamma$).

For \(m\geq 2c-1\) we have
\[
\frdist{r}{\Gamma}{m}=m+1 - 2g + \frnumber{\Gamma}{r},
\]
for some $\frnumber{\Gamma}{r}$ depending only on $\Gamma$ and $r$ (see for instance \cite{F-M}). 

Moreover, we also have 
\[
\frdist{r}{\Gamma}{m} \geq m+1 - 2g + \frnumber{\Gamma}{r}
\]
for $m\geq c$. Note that the case $m<c$ does not make sense for AG codes, since we should have $m>2g-2$ if we want to have an injective coding map (see~\cite{HvLP}). It may happen that the above inequality becomes an equality for all integers greater than a certain bound less than $2c-1$. For instance, for $r=1$, such a bound has been calculated in \cite{maria-acute} for acute semigroups, and Arf numerical semigroups are acute.

For the particular case $r=2$, we have
\begin{equation}\label{frnumberwithApery}
	\simfrnumber{\Gamma}=\min\finset{\cardapery{\Gamma}{x} \mid 1\leq x \leq e(\Gamma)}
\end{equation}
(see \cite{F-M}, or \cite{inductivos} with the notation used here).

Given a numerical semigroup \(\Gamma\) and an element \(m\in \Gamma\), the set
\[
\nxte{\Gamma}{m}=\finset{0}\cup(m+\Gamma)
\]
is again a numerical semigroup (indeed it has maximal embedding dimension, \cite[Chapter 2]{ns}). Moreover, $\Gamma$ has the Arf property if and only if $\nxte{\Gamma}{m}$ has the Arf property \cite{pi-semigrupos}. 


The following result follows immediately from the definition of $\nxte{\Gamma}{m}$, and will be used later without referencing to it.

\begin{proposition}
	Let $\Gamma$ be a numerical semigroup and let $m\in \Gamma$. Then 
	\begin{itemize}
		\item $\mathrm g(\Gamma_m)=\mathrm g(\Gamma)+m-1$,
		\item $\mathrm c(\Gamma_m)=\mathrm c(\Gamma)+m$,
		\item $\mathrm e(\Gamma_m)=m$.
	\end{itemize}
\end{proposition}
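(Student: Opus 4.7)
The plan is to verify each of the three formulas directly from the explicit description
\[
\Gamma_m = \{0\}\cup(m+\Gamma),
\]
which completely determines which positive integers belong to $\Gamma_m$. The key observation, underlying all three calculations, is the disjoint decomposition
\[
\Gamma_m = \{0\}\sqcup(m+\Gamma),\qquad \mathbb N\setminus\Gamma_m = \{1,\dots,m-1\}\,\sqcup\,\bigl(m+(\mathbb N\setminus\Gamma)\bigr),
\]
valid because $m+\Gamma\subseteq\{n:n\ge m\}$ and translation by $m$ establishes a bijection between $\Gamma_m\cap\{n\ge m\}$ and $\Gamma$. Once this is in place each invariant is read off by inspection.

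For the multiplicity I would simply note that the smallest positive element of $\Gamma_m$ is $m+\min\Gamma=m+0=m$, and since $1,\dots,m-1$ are excluded, $\mathrm e(\Gamma_m)=m$. For the genus I would count gaps: the $m-1$ small gaps $\{1,\dots,m-1\}$ plus, under the bijection $h\mapsto m+h$, the $\mathrm g(\Gamma)$ translates of the gaps of $\Gamma$, giving $\mathrm g(\Gamma_m)=(m-1)+\mathrm g(\Gamma)$.

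For the conductor I would argue in two steps. First, $\mathrm c(\Gamma_m)\ge m$ because $m-1\notin\Gamma_m$ (assuming $m\ge 2$, the only interesting case). Second, for any $c'\ge m$ the condition $c'+\mathbb N\subseteq\Gamma_m$ is equivalent, after subtracting $m$, to $(c'-m)+\mathbb N\subseteq\Gamma$; hence the smallest valid $c'$ is $m+\mathrm c(\Gamma)$, proving $\mathrm c(\Gamma_m)=\mathrm c(\Gamma)+m$.

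There is no real obstacle here: the statement is a bookkeeping consequence of the set-theoretic definition of $\Gamma_m$, and the only mild care needed is to exclude the degenerate situation $m\le 1$ (where $\Gamma_m$ coincides with $\Gamma$, or with $\mathbb N$ if additionally $\Gamma=\mathbb N$), which does not arise in the intended use of these translations to build Arf semigroups starting from $\mathbb N$ with $m\ge 2$.
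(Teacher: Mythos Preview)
Your proof is correct and matches exactly what the paper has in mind: the paper does not give a proof at all, stating only that the result ``follows immediately from the definition of $\Gamma_m$.'' Your direct verification via the decomposition $\mathbb N\setminus\Gamma_m=\{1,\dots,m-1\}\sqcup\bigl(m+(\mathbb N\setminus\Gamma)\bigr)$ is precisely the routine check that justifies this claim, and your remark about the degenerate cases $m\le 1$ is a nice point of care (the paper tacitly works with $m\ge 2$ throughout, noting elsewhere that $\mathbb N_1=\mathbb N$).
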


We will prove later that for Arf numerical semigroups, $\mathrm E(\nxte{\Gamma}{m},2) = \min \{m,\mathrm E(\Gamma,2)+1\}$.

In order to simplify notation, $\nxte{(\nxte{\Gamma}{m})}{m'}$ will be denoted by $\nxte{\Gamma}{m,m'}$.

For $i$ a positive integer, set \[d_i := \rho_{i+1}-\rho_i,\] the distance between two consecutive elements in \(\Gamma\).

\begin{lemma}\label{lema1}
	Let $\Gamma$ be an Arf numerical semigroup. If \(i\leq j\), then we have \(\rho_j+d_i\in \Gamma\) and $d_i\ge d_j$.
\end{lemma}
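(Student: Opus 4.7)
The plan is to derive both claims in a single short chain, using only the Arf property.

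For the first claim I would write $\rho_j + d_i = \rho_{i+1} + \rho_j - \rho_i$ and split into two trivial cases. If $j = i$, the expression collapses to $\rho_{i+1} \in \Gamma$. If $j > i$, then $j \ge i+1 \ge i$, so the Arf condition with the index triple $(j, i+1, i)$ gives $\rho_j + \rho_{i+1} - \rho_i \in \Gamma$ directly. Hence $\rho_j + d_i \in \Gamma$ whenever $i \le j$.

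For the second claim I would observe that $\rho_j + d_i \in \Gamma$ by the first part, and $\rho_j + d_i > \rho_j$ since $d_i \ge 1$. Being an element of $\Gamma$ strictly greater than $\rho_j$, it must be at least the next element of the chain, that is $\rho_j + d_i \ge \rho_{j+1}$. Rearranging gives $d_i \ge \rho_{j+1} - \rho_j = d_j$, as required.

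There is no real obstacle: the only mild subtlety is that the Arf property is stated for $i \ge j \ge k$ with the top two indices free, so one must check that the triple $(j, i+1, i)$ really satisfies the required ordering, which amounts to the hypothesis $i \le j$ (with the degenerate case $j = i$ handled separately as above). Everything else is formal.
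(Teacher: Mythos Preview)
Your proof is correct and essentially identical to the paper's: both split the first claim into the trivial case $j=i$ and the case $j>i$ handled by the Arf property with the triple $(j,i+1,i)$, and both derive the second claim from the first by noting that $\rho_j+d_i$ is an element of $\Gamma$ strictly above $\rho_j$, hence at least $\rho_{j+1}$.
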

\begin{proof}
	The first assertion is a consequence of the Arf property for the triple \(j\geq i+1 > i\), if \(j\geq i+1\), and \(i+1>j\geq i\) if \(j=i\).
	
	If $i\le j$, as $\rho_j+d_i\in \Gamma$ and $\rho_j+d_i> \rho_j$, we obtain $\rho_j+d_i\ge \rho_{j+1}$. Thus $d_i\ge \rho_{j+1}-\rho_j=d_j$.
\end{proof}

In particular, \(d_r\le \dots \le d_2 \leq d_1 =e=\rho_2\). Also $d_{r+k}=1$ for all $k\in\mathbb N$ ($d_r=\rho_{r+1}-\rho_r=c+1-c=1$). 

The sequence $(d_1,\ldots,d_r)$ is known in the literature as the \emph{multiplicity sequence} of $\Gamma$ (see for instance \cite{arf-algebroid} or \cite{param-Arf}), and if $\Gamma$ is an Arf numerical semigroup, it determines the semigroup since
\[
\Gamma = \finset{0,d_1,d_1+d_2,\ldots, d_1+\cdots +d_{r-1},\to}
\]
(the arrow here means that all integers greater than the integer preceding it are in the set).

\begin{remark}\label{Arf-iterative}
	Let $\Gamma$ be an Arf numerical semigroup, and let $(d_1,\ldots,d_r)$ be its multiplicity sequence. Then 
	\begin{equation}\label{nxtWithdi}
		\Gamma=\nxte{\mathbb N}{d_{r-1},\ldots,d_1}=\nxte{(\nxte{\mathbb N}{d_{r-1},\ldots, d_2})}{d_1}.
	\end{equation}
	In other words, every Arf semigroup can be obtained from $\mathbb N$ after a finite set of translations (and adding 0 to become a monoid). 
\end{remark}

\begin{remark}\label{Inductive}
	Let \(\Gamma\) be a numerical semigroup with conductor \(c\), and let \(m\in \Gamma\). For any \(a,b\in \mathbb{Z}\) such that \(a\geq 2\) and \(b\geq c\) we have
	\[
	a\nxte{\Gamma}{m}\cup(a(b+m)+\mathbb{N})=\nxte{(a\Gamma \cup(ab+\mathbb{N}))}{am}.
	\]
	
	As a consequence, if \((d_1,\ldots,d_{r})\) is the multiplicity sequence of an Arf numerical semigroup \(\Gamma\) with conductor \(c\), and \(a,b\in \mathbb{Z}\) such that \(a\geq 2\) and \(b\geq c\), then the set \(a\Gamma \cup (ab + \mathbb{N})\) is an Arf numerical semigroup with multiplicity sequence \((ad_1,\ldots,ad_{r-1},a,\ldots,a,1)\) where \(a\) appears \(b-c\) times in the sequence. That is,
	\[
	a\Gamma \cup (ab+\mathbb{N})= \nxte{\mathbb{N}}{a,\stackrel{b-c}{\ldots},a,ad_{r-1},\ldots,ad_1}.
	\]
	Starting with \(\Gamma=\mathbb{N}\), and repeating this process we get the class of inductive numerical semigroups, see \cite{inductivos}.
\end{remark}

\section{Apéry sets and the second Feng-Rao number}\label{sec:apery}

In light of \cref{frnumberwithApery}, a good understanding of Ap\'ery sets will help us in the computation of the second Feng-Rao number. 

Let $\Gamma$ be a numerical semigroup and let $0\neq \nxt{e}\in \Gamma$. Recall that \(\nxte{\Gamma}{\nxt{e}} = \finset{0}\cup (\nxt{e}+\Gamma)\). It is clear that \(\nxte{\Gamma}{\nxt{e}}\subset \Gamma\).
Let us denote \(\apery{\Gamma}{\nxt{e}}^*=\apery{\Gamma}{\nxt{e}}\setminus\finset{0}\).

\begin{proposition}
	Let $\Gamma$ be a numerical semigroup and let $\nxt{e}\in \Gamma^*$. Then  
	\[\Gamma\setminus \nxte{\Gamma}{\nxt{e}}=\apery{\Gamma}{\nxt{e}}^*.\]
\end{proposition}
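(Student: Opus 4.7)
The plan is to prove the equality by double inclusion, working directly from the definitions of $\nxte{\Gamma}{\nxt e}$ and $\apery{\Gamma}{\nxt e}$. The statement is essentially a tautology once the definitions are unfolded, so no deep ideas are needed; the only care required is in handling the role of the element $0$, which lies in $\nxte{\Gamma}{\nxt e}$ but is excluded from $\apery{\Gamma}{\nxt e}^*$.

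For the inclusion $\Gamma\setminus\nxte{\Gamma}{\nxt e}\subseteq \apery{\Gamma}{\nxt e}^*$, I would take $m\in \Gamma$ with $m\notin \nxte{\Gamma}{\nxt e}$. Since $0\in \nxte{\Gamma}{\nxt e}$ by definition, we have $m\neq 0$. Moreover, because $\nxte{\Gamma}{\nxt e}=\{0\}\cup(\nxt e+\Gamma)$, the condition $m\notin \nxte{\Gamma}{\nxt e}$ forces $m\notin \nxt e+\Gamma$, i.e.\ $m-\nxt e\notin \Gamma$. Hence $m\in \apery{\Gamma}{\nxt e}$ and $m\neq 0$, so $m\in \apery{\Gamma}{\nxt e}^*$.

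For the reverse inclusion, I would take $m\in \apery{\Gamma}{\nxt e}^*$, so $m\in \Gamma$, $m\neq 0$, and $m-\nxt e\notin \Gamma$. The last condition says $m\notin \nxt e+\Gamma$, and combined with $m\neq 0$ this gives $m\notin \{0\}\cup(\nxt e+\Gamma)=\nxte{\Gamma}{\nxt e}$. Therefore $m\in \Gamma\setminus \nxte{\Gamma}{\nxt e}$, as required.

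There is no real obstacle here; the only subtlety is the bookkeeping with $0$, which is why the Apéry set must be punctured on the right-hand side (the element $0$ of $\Gamma$ does belong to $\nxte{\Gamma}{\nxt e}$, so it is not in $\Gamma\setminus \nxte{\Gamma}{\nxt e}$, while $0-\nxt e=-\nxt e\notin \Gamma$ would otherwise incorrectly place $0$ in $\apery{\Gamma}{\nxt e}$).
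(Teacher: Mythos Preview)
Your proof is correct and follows essentially the same approach as the paper's, which simply unfolds the definitions and observes that $m\in\Gamma\setminus\nxte{\Gamma}{\nxt e}$ is equivalent to $m\in\Gamma$ with $m-\nxt e\notin\Gamma$. You are in fact a bit more careful than the paper in explicitly tracking the role of $0$, but the argument is the same.
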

\begin{proof}
	Observe that \(m\in \Gamma\setminus\nxte{\Gamma}{\nxt{e}}\) if and only if $m\in \Gamma$ and \(m\notin \nxt{e}+\Gamma\), which happens if and only if $m\in \Gamma$ and \(m-\nxt{e}\notin \Gamma\).
\end{proof}

Every nonzero element of $\nxte{\Gamma}{\nxt{e}}$ is of the form $\nxt{e}+m$ for some $m\in \Gamma$. The following result describes the Ap\'ery set of $\nxt{e}+m$ in $\nxte{\Gamma}{\nxt{e}}$ in terms of the Ap\'ery sets of $\nxt{e}$ and $m$ in $\Gamma$. 

\begin{proposition}
	Let $\Gamma$ be a numerical semigroup and let $\nxt{e}$ be a nonzero element of $\Gamma$. For any \(m\in \Gamma\), we have
	\[
	\apery{\nxte{\Gamma}{\nxt{e}}}{\nxt{e}+m}=
	(\nxt{e}+\apery{\Gamma}{m})\cup (\nxt{e}+m+\apery{\Gamma}{\nxt{e}}^*)\cup \finset{0}.
	\]
\end{proposition}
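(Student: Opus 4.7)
The plan is to unfold the definition of the Ap\'ery set and split on the structure of $\nxte{\Gamma}{\nxt{e}}=\{0\}\cup(\nxt{e}+\Gamma)$, then apply the immediately preceding proposition ($\Gamma\setminus\nxte{\Gamma}{\nxt{e}}=\apery{\Gamma}{\nxt{e}}^*$) to identify the elements that are in $\Gamma$ but fall outside $\nxte{\Gamma}{\nxt{e}}$. Recall that $y\in\apery{\nxte{\Gamma}{\nxt{e}}}{\nxt{e}+m}$ means $y\in\nxte{\Gamma}{\nxt{e}}$ and $y-(\nxt{e}+m)\notin\nxte{\Gamma}{\nxt{e}}$. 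The element $y=0$ always satisfies this (since $-(\nxt{e}+m)<0$), contributing the $\{0\}$ summand; all other $y\in\nxte{\Gamma}{\nxt{e}}$ are of the form $y=\nxt{e}+z$ with $z\in\Gamma$, so the nontrivial work is to characterize which such $z$ admit $z-m\notin\nxte{\Gamma}{\nxt{e}}$.

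For such a $y=\nxt{e}+z$, I would split into two disjoint subcases according to whether $z-m\in\Gamma$ or not. If $z-m\notin\Gamma$, the condition $z\in\apery{\Gamma}{m}$ is precisely what is needed, and this produces the summand $\nxt{e}+\apery{\Gamma}{m}$. If instead $z-m\in\Gamma$, then $z-m\notin\nxte{\Gamma}{\nxt{e}}$ means $z-m\in\Gamma\setminus\nxte{\Gamma}{\nxt{e}}$, which by the preceding proposition is $\apery{\Gamma}{\nxt{e}}^*$; writing $w:=z-m\in\apery{\Gamma}{\nxt{e}}^*$ gives $y=\nxt{e}+m+w$, i.e.\ the summand $\nxt{e}+m+\apery{\Gamma}{\nxt{e}}^*$. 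The converse inclusions are checked by reversing these implications: for the second summand one must verify that $\nxt{e}+m+w$ really lies in $\nxte{\Gamma}{\nxt{e}}$, which holds because $m,w\in\Gamma$ so $m+w\in\Gamma$; and for both summands one must verify that the difference with $\nxt{e}+m$ lands outside $\nxte{\Gamma}{\nxt{e}}$, which is exactly what the membership in $\apery{\Gamma}{m}$ or $\apery{\Gamma}{\nxt{e}}^*$ guarantees.

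There is essentially no obstacle here: once the dichotomy $z-m\in\Gamma$ vs $z-m\notin\Gamma$ is made, the statement reduces to the defining identity of each Ap\'ery set, with the single substantive input being the preceding proposition to convert \lq\lq in $\Gamma$ but not in $\nxte{\Gamma}{\nxt{e}}$\rq\rq\ into \lq\lq in $\apery{\Gamma}{\nxt{e}}^*$\rq\rq. A small bonus observation, worth recording because it is needed for counting arguments later (and to avoid double-counting in the subsequent use of the formula), is that the three sets on the right-hand side are pairwise disjoint: $0$ is smaller than any element of the other two sets, and an element of $\nxt{e}+\apery{\Gamma}{m}$ of the form $\nxt{e}+m+w$ would force $z=m+w$ with $w\in\Gamma$, contradicting $z\in\apery{\Gamma}{m}$.
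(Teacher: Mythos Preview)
Your proposal is correct and follows essentially the same route as the paper: write a nonzero element of the Ap\'ery set as $\nxt{e}+z$ with $z\in\Gamma$, split on whether $z-m\in\Gamma$, and invoke the preceding proposition to identify $\Gamma\setminus\nxte{\Gamma}{\nxt{e}}$ with $\apery{\Gamma}{\nxt{e}}^*$; the reverse inclusions are checked exactly as you describe. Your added remark on pairwise disjointness is not in the paper's proof of this proposition, but it is a useful observation for the cardinality computations that follow.
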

\begin{proof}
	Let \(m'\in \apery{\nxte{\Gamma}{\nxt{e}}}{\nxt{e}+m}^*\). Then \(m'=\nxt{e}+s'\) with \(s'\in\Gamma\). Now \(\nxt{e}+s'-(\nxt{e}+m)= s'-m \notin \nxte{\Gamma}{\nxt{e}}\), so that we have two possibilities:
	\begin{itemize}
		\item if \(s'-m\in \Gamma\), then \(s'-m\in \Gamma\setminus\nxte{\Gamma}{\nxt{e}}=\apery{\Gamma}{\nxt{e}}^*\), whence \(s'\in m+\apery{\Gamma}{\nxt{e}}^*\);
		\item if \(s'-m\notin \Gamma\), then \(s'\in \apery{\Gamma}{m}\).
	\end{itemize}
	
	For the other inclusion, if \(s' \in \apery{\Gamma}{m}\), then \(\nxt{e}+s'\in \nxte{\Gamma}{\nxt{e}}\) and \(\nxt{e}+s'-(\nxt{e}+m)=s'-m\notin \Gamma\). In particular, \(\nxt{e}+s'-(\nxt{e}+m)\notin \nxte{\Gamma}{\nxt{e}}\). So \(\nxt{e}+\apery{\Gamma}{m}\subset \apery{\nxte{\Gamma}{\nxt{e}}}{\nxt{e}+m}\).
	
	Finally, let \(s'\in \apery{\Gamma}{\nxt{e}}^*=\Gamma\setminus \nxte{\Gamma}{\nxt{e}}\). Then \(\nxt{e}+m+s'\in \nxte{\Gamma}{\nxt{e}}\), but \(\nxt{e}+m+s'-(\nxt{e}+m)=s'\notin \nxte{\Gamma}{\nxt{e}}\). Hence \(\nxt{e}+m+\apery{\Gamma}{\nxt{e}}^*\subset \apery{\nxte{\Gamma}{\nxt{e}}}{\nxt{e}+m}\).
\end{proof}

Recall that the second Feng-Rao number can be computed as the minimum of the cardinalities of the Ap\'ery sets of positive integers less than the multiplicity. We now give a series of lemmas that will allow to see how these cardinalities behave in  an Arf numerical semigroup.

\begin{lemma}\label{lema3}
	Let $\Gamma$ be an Arf numerical semigroup. If \(d_i< d_{i-1}\), then
	\[
	\apery{\Gamma}{-d_i} = \finset{\rho_1,\ldots,\rho_{i-1}}.
	\]
\end{lemma}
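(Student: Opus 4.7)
The plan is to unfold the definition $\apery{\Gamma}{-d_i}=\{m\in\Gamma\mid m+d_i\notin\Gamma\}$ and then, writing $m=\rho_j$, split into the two ranges $j\geq i$ and $j<i$. Both inclusions will follow directly from \cref{lema1}, which is the only nontrivial input.

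First I would handle the exclusion: for $j\geq i$, \cref{lema1} gives $\rho_j+d_i\in\Gamma$, so $\rho_j\notin\apery{\Gamma}{-d_i}$. This shows $\apery{\Gamma}{-d_i}\subseteq\{\rho_1,\ldots,\rho_{i-1}\}$.

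For the reverse inclusion, fix $j<i$. I want to show that $\rho_j+d_i$ lies strictly between two consecutive elements of $\Gamma$, and is therefore not in $\Gamma$. Since $d_i>0$ (the hypothesis $d_i<d_{i-1}$ forces $i\leq r$), we have $\rho_j<\rho_j+d_i$. For the upper bound, note that by \cref{lema1} the sequence $(d_k)$ is non-increasing, so
\[
d_j\geq d_{j+1}\geq \cdots \geq d_{i-1}>d_i,
\]
where the strict inequality at the end is precisely the hypothesis. Therefore $\rho_j+d_i<\rho_j+d_j=\rho_{j+1}$, and $\rho_j+d_i\notin\Gamma$, i.e., $\rho_j\in\apery{\Gamma}{-d_i}$.

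The argument is essentially a one-liner once \cref{lema1} is in hand; there is no real obstacle. The only thing to be slightly careful about is making sure the hypothesis $d_i<d_{i-1}$ rules out the degenerate case $d_i=0$ (which would make $\rho_j+d_i=\rho_j\in\Gamma$ and break the second half of the argument), and this is automatic because $d_i<d_{i-1}$ forces $i\leq r$, where all gaps $d_k$ are positive.
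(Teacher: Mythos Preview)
Your proof is correct and follows essentially the same approach as the paper: both directions use \cref{lema1}, with the inclusion $\{\rho_1,\ldots,\rho_{i-1}\}\subseteq\apery{\Gamma}{-d_i}$ established by sandwiching $\rho_j+d_i$ strictly between $\rho_j$ and $\rho_{j+1}$ via $d_i<d_{i-1}\leq d_j$. Your write-up is slightly more explicit about the chain of inequalities and the positivity of $d_i$, but the argument is identical to the paper's.
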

\begin{proof}
	For any \(j\leq i-1\) we have \(\rho_j<\rho_j + d_i<\rho_j + d_j=\rho_{j+1}\), so \(\rho_j+d_i \notin \Gamma\), which means that \(\rho_j\) is in the Apéry set. \cref{lema1} ensures that if \(j\geq i\), \(\rho_j\notin \apery{\Gamma}{-d_i}\).
\end{proof}

By using the relationship between the Ap\'ery set of an integer and its opposite, we can get the cardinality of the Ap\'ery sets of the distances between elements in an Arf numerical semigroup.

\begin{corollary}\label{cor1}
	Let $\Gamma$ be an Arf numerical semigroup. If \(d_i<d_{i-1}\), then
	\[
	\cardapery{\Gamma}{d_i} = i-1+d_i.
	\]
\end{corollary}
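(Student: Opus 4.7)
The statement asks for $\#\apery{\Gamma}{d_i}=i-1+d_i$, and the preceding sentence already hints that the proof should extract this from \cref{lema3} (which gives $\#\apery{\Gamma}{-d_i}=i-1$) together with a general relation between $\apery{\Gamma}{x}$ and $\apery{\Gamma}{-x}$. So the plan is to establish and use the duality
\[
\#\apery{\Gamma}{x}-\#\apery{\Gamma}{-x}=x \qquad (x\in\mathbb{Z}_{>0})
\]
valid for any numerical semigroup $\Gamma$, and then specialise to $x=d_i$.

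The core computation behind the duality is a simple counting argument. First I would rewrite $\apery{\Gamma}{x}=\Gamma\setminus(x+\Gamma)$ directly from the definition, and observe that the translation map $m\mapsto m+x$ is a bijection from $\apery{\Gamma}{-x}=\{m\in\Gamma\mid m+x\notin\Gamma\}$ onto $(x+\Gamma)\setminus\Gamma$. Hence both Apéry sets under consideration are finite, and their cardinalities are $\#(\Gamma\setminus(x+\Gamma))$ and $\#((x+\Gamma)\setminus\Gamma)$ respectively.

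Next, I would pick any $N\ge c+x$ and count elements $\le N$: since $\Gamma$ contains all integers $\ge c$ and $x+\Gamma$ contains all integers $\ge c+x$, one gets $|\Gamma\cap[0,N]|=N+1-g$ and $|(x+\Gamma)\cap[0,N]|=N+1-x-g$. Subtracting these two truncated counts and using the elementary identity $|A|-|B|=|A\setminus B|-|B\setminus A|$ (valid for finite sets) yields the duality formula. Finally, plugging $x=d_i$ and invoking \cref{lema3} gives $\#\apery{\Gamma}{d_i}=d_i+\#\apery{\Gamma}{-d_i}=d_i+(i-1)$, as desired.

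The only step that requires mild care is the bookkeeping in the truncation argument: one has to make sure that nothing escapes to infinity, i.e.\ that both $\apery{\Gamma}{d_i}$ and $\apery{\Gamma}{-d_i}$ are actually finite sets (this is clear because $\apery{\Gamma}{d_i}\subseteq[0,c+d_i)$ and $\apery{\Gamma}{-d_i}\subseteq[0,c)$). Beyond that, the proof is essentially a one-line consequence of \cref{lema3}, and no Arf-specific input is needed at this stage—the Arf hypothesis is used only through \cref{lema3}.
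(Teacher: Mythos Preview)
Your proposal is correct and follows essentially the same route as the paper: combine \cref{lema3} (which gives $\#\apery{\Gamma}{-d_i}=i-1$) with the identity $\#\apery{\Gamma}{x}=\#\apery{\Gamma}{-x}+x$. The only difference is that the paper simply cites this identity from \cite[Lemma~1]{FGHL}, whereas you supply a self-contained counting proof of it; both yield the result immediately.
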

\begin{proof}
	We only need to use the fact that \(\cardapery{\Gamma}{x}=\cardapery{\Gamma}{-x} + x\) (see \cite[Lemma 1]{FGHL}) together with \cref{lema3}.
\end{proof}

Next we see that the cardinality of the Ap\'ery sets of integers between the distances are controlled by the distances.

\begin{lemma}
	Let $\Gamma$ be an Arf numerical semigroup. For any \(d_j\leq x < d_{j-1}\) we have
	\[
	\apery{\Gamma}{-d_j} \subseteq \apery{\Gamma}{-x}.
	\]
	In particular, \(\cardapery{\Gamma}{x}\ge \cardapery{\Gamma}{d_j}\).
\end{lemma}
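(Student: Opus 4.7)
The plan is to unfold the definition and reduce to the explicit description in \cref{lema3}. Recall that $\apery{\Gamma}{-y} = \{m\in\Gamma \mid m+y\notin \Gamma\}$, so the inclusion $\apery{\Gamma}{-d_j}\subseteq \apery{\Gamma}{-x}$ is asserting that whenever $m\in\Gamma$ satisfies $m+d_j\notin\Gamma$, we also have $m+x\notin\Gamma$. Since we are given $d_j\le x<d_{j-1}$, the interval is nonempty, forcing $d_j<d_{j-1}$, so \cref{lema3} applies and identifies the left-hand side with $\{\rho_1,\ldots,\rho_{j-1}\}$.

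First I would show $\rho_i+x\notin \Gamma$ for every $i\le j-1$. By \cref{lema1} the sequence $d_1,d_2,\ldots$ is non-increasing, so $d_i\ge d_{j-1}>x$ for all $i\le j-1$. Then
\[
\rho_i \;<\; \rho_i + x \;<\; \rho_i + d_i \;=\; \rho_{i+1},
\]
placing $\rho_i+x$ strictly between two consecutive elements of $\Gamma$, hence not in $\Gamma$. This means $\rho_i\in\apery{\Gamma}{-x}$ for $i\le j-1$, and combined with \cref{lema3} yields the desired inclusion.

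For the final cardinality statement, I would combine this inclusion with the identity $\cardapery{\Gamma}{y}=\cardapery{\Gamma}{-y}+y$ from \cite[Lemma 1]{FGHL} already used in the proof of \cref{cor1}. Indeed,
\[
\cardapery{\Gamma}{x} \;=\; \cardapery{\Gamma}{-x} + x \;\ge\; \cardapery{\Gamma}{-d_j} + d_j \;=\; \cardapery{\Gamma}{d_j},
\]
where the inequality uses both the inclusion just proved and $x\ge d_j$.

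There is no serious obstacle here: the whole argument is a direct unpacking of the definitions, using two facts already established, namely the explicit form of $\apery{\Gamma}{-d_j}$ from \cref{lema3} and the monotonicity $d_1\ge d_2\ge\cdots$ from \cref{lema1}. The only subtlety to flag is the implicit assumption $d_j<d_{j-1}$, which is automatic from the hypothesis that the interval $[d_j,d_{j-1})$ contains $x$, so that \cref{lema3} is legitimately applicable.
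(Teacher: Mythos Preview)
Your proof is correct and follows essentially the same route as the paper: both arguments use \cref{lema3} to identify $\apery{\Gamma}{-d_j}$ with $\{\rho_1,\ldots,\rho_{j-1}\}$, then verify $\rho_k+x\notin\Gamma$ for $k\le j-1$ by sandwiching $\rho_k+x$ strictly between $\rho_k$ and $\rho_{k+1}$, and finally invoke \cite[Lemma~1]{FGHL} for the cardinality inequality. Your version is slightly more explicit in justifying $x<d_k$ via the monotonicity of the $d_i$ and in flagging why \cref{lema3} is applicable, but there is no substantive difference.
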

\begin{proof}
	For any \(k\leq j-1\), we have \(\rho_k<\rho_k +x <\rho_k + d_k=\rho_{k+1}\), and so \(\rho_k-(-x)=\rho_k+x \notin \Gamma\). By \cref{lema3}, this implies that \(\apery{\Gamma}{-d_j} \subseteq \apery{\Gamma}{-x}\), and thus by \cite[Lemma 1]{FGHL},
	\[
	\cardapery{\Gamma}{x}=\cardapery{\Gamma}{-x}+x\geq \cardapery{\Gamma}{-d_j}+d_j=\cardapery{\Gamma}{d_j}.\qedhere
	\]
\end{proof}

\begin{remark}\label{wherefrnumber}
	This means that in order to compute the second Feng-Rao number using the formula in \cref{frnumberwithApery} we only need to check \(\apery{\Gamma}{d}\), with \(d\) a distance between consecutive elements in \(\Gamma\).
\end{remark}

\begin{remark}\label{consecutivos}
	Notice that if $\Gamma$ is an Arf numerical semigroup and $m,m+1\in\Gamma$, then $2m+2-m=m+2\in \Gamma$. Arguing in this way, $m\ge c$. This proves the following result.
\end{remark}

\begin{lemma}\label{aperyin1}
	\(\apery{\Gamma}{1}=\Gamma\cap [0,c]=\finset{0=\rho_1,\rho_2,\ldots,\rho_r=c}\).
\end{lemma}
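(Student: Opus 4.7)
The plan is to expand the definition $\apery{\Gamma}{1}=\{m\in\Gamma\mid m-1\notin\Gamma\}$ and use the fact, recorded in \cref{consecutivos}, that in an Arf semigroup the existence of two consecutive integers $m-1,m\in\Gamma$ forces $m-1\ge c$. With this, both inclusions are immediate.

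First I would verify $\apery{\Gamma}{1}\subseteq \Gamma\cap[0,c]$. Take any $m\in\Gamma$ with $m>c$. Then $m-1\ge c$, and since $c+\mathbb{N}\subseteq\Gamma$, we have $m-1\in\Gamma$; hence $m\notin\apery{\Gamma}{1}$. Therefore every element of $\apery{\Gamma}{1}$ lies in the interval $[0,c]$.

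For the reverse inclusion $\Gamma\cap[0,c]\subseteq\apery{\Gamma}{1}$, let $m\in\Gamma$ with $m\le c$ and suppose, for contradiction, that $m-1\in\Gamma$ as well. Then $m-1$ and $m$ are two consecutive integers belonging to $\Gamma$, so by \cref{consecutivos} we must have $m-1\ge c$, i.e.\ $m\ge c+1$, contradicting $m\le c$. Hence $m-1\notin\Gamma$ and $m\in\apery{\Gamma}{1}$.

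Finally, the equality $\Gamma\cap[0,c]=\{\rho_1,\rho_2,\ldots,\rho_r\}$ is simply the definition of the conductor: $\rho_r=c$ is the largest gap-free threshold, and the elements of $\Gamma$ up to $c$ are exactly $\rho_1<\rho_2<\cdots<\rho_r$. There is no real obstacle here; the content of the statement is essentially a direct rereading of \cref{consecutivos} through the definition of the Ap\'ery set.
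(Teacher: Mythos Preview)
Your proof is correct and follows exactly the paper's approach: the paper's entire argument is the observation in \cref{consecutivos} that two consecutive elements of an Arf semigroup must lie at or above the conductor, and you have simply spelled out in detail how this yields both inclusions.
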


With this, we can calculate the second Feng-Rao number of an Arf numerical semigroup in terms of its multiplicity sequence.

\begin{proposition}\label{frnumber1}
	Let $\Gamma$ be an Arf numerical semigroup with multiplicity sequence $(d_1,\ldots, d_r)$, that is, \(\Gamma=\nxte{\mathbb N}{d_{r-1},\ldots,d_1}\). Then
	\[
	\simfrnumber{\nxte{\mathbb N}{d_{r-1},\ldots,d_1}}=\min\finset{d_1,d_2+1,\ldots, d_{r-1}+r-2,r}.
	\]
\end{proposition}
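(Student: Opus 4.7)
The plan is to apply the formula \eqref{frnumberwithApery}, which expresses $\simfrnumber{\Gamma}$ as the minimum of $\cardapery{\Gamma}{x}$ over $1\le x\le e=d_1$. By the lemma immediately preceding \cref{wherefrnumber}, whenever $d_j\le x<d_{j-1}$ with $d_j<d_{j-1}$, one has $\cardapery{\Gamma}{x}\ge \cardapery{\Gamma}{d_j}$. The intervals $[d_j,d_{j-1})$ for such $j$, together with the endpoint $x=d_1$, cover $[1,d_1]$, so the minimum is attained on the finite list $\{\cardapery{\Gamma}{d_j}\}$ as $j$ ranges over $\{1\}$ together with the indices of strict decrease in the multiplicity sequence.

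The next step is to evaluate each $\cardapery{\Gamma}{d_j}$ on that list. For $j\ge 2$ with $d_j<d_{j-1}$, \cref{cor1} supplies $\cardapery{\Gamma}{d_j}=j-1+d_j$ directly. For $j=1$ the hypothesis of \cref{cor1} is vacuous, but a short direct argument (using that $e\in\Gamma$ forces $\apery{\Gamma}{-e}=\emptyset$) gives $\cardapery{\Gamma}{e}=e=d_1$, which matches the same formula $d_j+j-1$ at $j=1$. For $j=r$, I would invoke \cref{consecutivos} to rule out $d_{r-1}=1$ in any nontrivial Arf semigroup, so that $d_r=1<d_{r-1}$ and \cref{cor1} applies once more, giving $\cardapery{\Gamma}{1}=r$; the same value is also visible directly from \cref{aperyin1}.

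To finish, I would observe that padding the minimization set with indices $j$ where $d_j=d_{j-1}$ cannot lower the minimum, since in that case $d_j+j-1>d_{j-1}+(j-1)-1$ and the latter is already present in the list. The last entry $r$ on the right-hand side is exactly the value $d_r+r-1$ at $j=r$. Hence the minimum of the displayed list coincides with $\min_j\cardapery{\Gamma}{d_j}$, which by the previous steps equals $\simfrnumber{\Gamma}$. The main obstacle is purely technical, namely the index bookkeeping at the boundary cases $j=1$ and $j=r$, where the strict-inequality hypothesis of \cref{cor1} is not directly available and must be dispatched by an ad hoc argument.
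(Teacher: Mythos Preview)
Your proposal is correct and follows essentially the same route as the paper: reduce the minimum in \eqref{frnumberwithApery} to the values $\cardapery{\Gamma}{d_j}$ via \cref{wherefrnumber}, evaluate these using \cref{cor1} and \cref{aperyin1}, and then observe that the extra terms $d_j+j-1$ at indices with $d_j=d_{j-1}$ are dominated by earlier entries and so do not affect the minimum. Your treatment of the boundary cases $j=1$ and $j=r$ is in fact slightly more explicit than the paper's, which handles $j=r$ via \cref{aperyin1} and leaves the $j=1$ case (where $\cardapery{\Gamma}{e}=e$) implicit.
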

\begin{proof}
	Using \cref{cor1}, we have
	\[
	\cardapery{\Gamma}{d_i}=
	\begin{cases}
	\cardapery{\Gamma}{d_{i-1}} & \text{ if } d_i=d_{i-1},\\
	d_i+i-1 & \text{ if } d_i<d_{i-1}.
	\end{cases}
	\]
	So if \(d_{j+1}<d_{j}=d_i<d_{i-1}\), with \(i\leq j\), we obtain
	\[
	d_j+j-1\geq d_i+i-1 = \cardapery{\Gamma}{d_{j}}=\cardapery{\Gamma}{d_i}.
	\]
	Since by \cref{aperyin1}
	\[
	\cardapery{\Gamma}{d_r}=\cardapery{\Gamma}{1}=r,
	\]
	we can apply \cref{wherefrnumber} to get the desired result.
\end{proof}
Also, this allows us to give a recursive formula. Recall that $\Gamma$ has the Arf property if and only if $\nxte{\Gamma}{\nxt{e}}$ has the Arf property.
\begin{theorem}\label{Arf-E2}
	Let $\Gamma$ be an Arf numerical semigroup and let $\nxt{e}\in \Gamma^*$. Then
	\[
	\simfrnumber{\nxte{\Gamma}{\nxt{e}}}=\min\finset{\nxt{e},\simfrnumber{\Gamma}+1}.
	\]
\end{theorem}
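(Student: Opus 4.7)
The plan is to read off the multiplicity sequence of $\nxte{\Gamma}{\nxt{e}}$ from that of $\Gamma$ and then invoke \cref{frnumber1}. Let $(d_1,\ldots,d_r)$ be the multiplicity sequence of $\Gamma$, so that the elements of $\Gamma$ are $0=\rho_1<\rho_2<\cdots<\rho_r=c$ followed by all integers $\geq c$, with $d_i=\rho_{i+1}-\rho_i$, $d_1=\mathrm e(\Gamma)$ and $d_r=1$.

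Since $\nxte{\Gamma}{\nxt{e}}=\finset{0}\cup(\nxt{e}+\Gamma)$, its elements in increasing order are $0,\nxt{e},\nxt{e}+\rho_2,\nxt{e}+\rho_3,\ldots$, and its conductor is $\nxt{e}+c$, occupying position $r+1$. Taking consecutive differences, the multiplicity sequence of $\nxte{\Gamma}{\nxt{e}}$ is $(\nxt{e},d_1,d_2,\ldots,d_r)$ of length $r+1$. Since $\nxte{\Gamma}{\nxt{e}}$ inherits the Arf property from $\Gamma$, \cref{frnumber1} applies and yields
\[
\simfrnumber{\nxte{\Gamma}{\nxt{e}}}=\min\finset{\nxt{e},\,d_1+1,\,d_2+2,\,\ldots,\,d_{r-1}+r-1,\,r+1}.
\]

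The final $r$ entries inside this minimum are obtained by adding $1$ to each of the $r$ entries appearing in the formula of \cref{frnumber1} for $\simfrnumber{\Gamma}$, and the minimum distributes over this uniform shift. Consequently the minimum of those $r$ entries equals $\simfrnumber{\Gamma}+1$, and the whole expression collapses to $\min\finset{\nxt{e},\simfrnumber{\Gamma}+1}$, which is what we wanted.

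The argument is essentially mechanical once \cref{frnumber1} is in place; the only thing requiring attention is the claim that prepending $\nxt{e}$ to the multiplicity sequence of $\Gamma$ produces the multiplicity sequence of $\nxte{\Gamma}{\nxt{e}}$, which follows immediately from the definition $\nxte{\Gamma}{\nxt{e}}=\finset{0}\cup(\nxt{e}+\Gamma)$ together with the fact that all integers $\geq c$ belong to $\Gamma$. So there is no real obstacle; the theorem is a transparent combinatorial consequence of the Apéry-set machinery built up in the section.
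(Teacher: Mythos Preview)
Your proof is correct and follows exactly the approach the paper takes: the paper's own proof is the one-line remark that the theorem ``is a consequence of \cref{frnumber1} together with \cref{nxtWithdi},'' and you have simply spelled out the routine verification that the multiplicity sequence of $\nxte{\Gamma}{\nxt{e}}$ is $(\nxt{e},d_1,\ldots,d_r)$ and then compared the two applications of \cref{frnumber1} term by term. There is nothing missing and no divergence from the paper's argument.
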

\begin{proof}
	This is a consequence of \cref{frnumber1} together with \cref{nxtWithdi}
\end{proof}

Notice that, by using Remark \ref{Arf-iterative}, the above theorem provides us with an algorithm to compute the second Feng-Rao number for any Arf numerical semigroup. In fact, if $\Gamma = \nxte{\mathbb{N}}{d_{r-1},\ldots,d_{1}}$ as in Section \ref{sec:preliminaries}, then we start with $\Gamma = \mathbb{N}$ where $\simfrnumber{\mathbb{N}}=1$, and iterate Theorem \ref{Arf-E2} with $m$ from $d_{r-1}$ to $d_{1}$. 

\begin{example}
	Let $\Gamma=\langle 5, 7, 9, 11, 13\rangle$, which has multiplicity sequence $( 5, 2, 2, 1)$. Observe that $\nxte{\mathbb N}{1}=\mathbb N$. By \cref{Arf-E2}, $\simfrnumber{\nxte{\mathbb N}{2}}=\min\{ 2, \simfrnumber{\mathbb N}+1\}=2$. By using again this formula, $\simfrnumber{\nxte{\mathbb N}{2,2}}=\min\{2, 2+1\}=2$. Finally, $\simfrnumber{\Gamma}=\simfrnumber{\nxte{\mathbb N}{2,2,5}}=\min\{5,2+1\}=3$.
	
	If we use \cref{frnumber1}, $\simfrnumber{\Gamma}=\min\{5,2+1,2+2,4\}=3$.
\end{example}

As a corollary of \cref{frnumber1} and \cref{Inductive} we obtain the following formula for the homothetic transformation (compare with \cite{inductivos}; recall that every inductive semigroup has the Arf property).

\begin{corollary}
	Let \(\Gamma\) be an Arf numerical semigroup with conductor \(c\). Let \(\nxt{e}\in \Gamma^*\), and \(a,b\in \mathbb{Z}\) such that \(a\geq 2\) and \(b\geq c\). Then
	\[
	\simfrnumber{a\nxte{\Gamma}{\nxt{e}}\cup (a(b+\nxt{e})+\mathbb{N})}=
	\min\finset{a \nxt{e},\simfrnumber{a\Gamma \cup (ab+\mathbb{N})}+1}.
	\]
	
	Furthermore, if \(\Gamma\) has multiplicity sequence \((d_1,\ldots, d_r)\), then
	\[
	\simfrnumber{a\Gamma\cup (ab+\mathbb{N})}=\min\finset{ad_1,ad_2+2,\ldots, ad_{r-1}+r-2, a+r-1, (b-c) + r}.
	\]
\end{corollary}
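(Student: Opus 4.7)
The plan is to derive both equalities as straightforward consequences of results already in hand. For the first equality, I would invoke Remark~\ref{Inductive}, which directly gives the identity
\[
a\nxte{\Gamma}{\nxt{e}}\cup (a(b+\nxt{e})+\mathbb{N}) \;=\; \nxte{\bigl(a\Gamma \cup (ab+\mathbb{N})\bigr)}{a\nxt{e}}.
\]
Since $a\Gamma \cup (ab+\mathbb{N})$ is itself an Arf numerical semigroup (again by Remark~\ref{Inductive}), Theorem~\ref{Arf-E2} applies with this as the base semigroup and $a\nxt{e}$ as the distinguished element, producing exactly the stated minimum on the right-hand side.

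For the second equality, Remark~\ref{Inductive} also supplies the explicit multiplicity sequence of $a\Gamma \cup (ab+\mathbb{N})$, namely $(ad_1,\ldots,ad_{r-1},a,\ldots,a,1)$, with the value $a$ appearing $b-c$ times. Proposition~\ref{frnumber1} applied to this sequence then yields the formula term by term: the indices $1,\ldots,r-1$ contribute $ad_i+(i-1)$; the $b-c$ middle entries equal to $a$ produce terms of the form $a+r+j-2$ for $j=1,\ldots,b-c$; and the trailing entry $1$ contributes $r+(b-c)$.

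The only bookkeeping step is collapsing the run of equal distances: among $a+r+j-2$ for $j=1,\ldots,b-c$ the minimum is achieved at $j=1$, giving $a+r-1$, so only this representative survives in the outer minimum. This mirrors the monotonicity argument already used inside the proof of Proposition~\ref{frnumber1} (for consecutive indices with $d_j=d_{j-1}$ the later term is no smaller than the earlier one), so no new idea is needed and no serious obstacle arises beyond reindexing carefully with the shift introduced by the $b-c$ extra entries.
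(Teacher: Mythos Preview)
Your proposal is correct and follows essentially the same approach as the paper, which simply cites Proposition~\ref{frnumber1} and Remark~\ref{Inductive}; your use of Theorem~\ref{Arf-E2} for the first equality is just the packaged form of applying Proposition~\ref{frnumber1} to the extended multiplicity sequence. The only point worth a brief remark is the edge case $b=c$, where the run of $a$'s is empty and the term $a+r-1$ is not actually produced by Proposition~\ref{frnumber1}, but this is harmless since then $(b-c)+r=r\le a+r-1$ and the spurious term does not affect the minimum.
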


\section{The second Feng-Rao distance}\label{sec:frdist}

Our next goal is to compute the second Feng-Rao distance for elements in an Arf numerical semigroup greater than the conductor of the semigroup. To this end, we will apply again Remark \ref{Arf-iterative} iteratively. Note that for $m \geq 2c-1$ one has $\simfrdist{m}=m+1-2g+\simfrnumber{\Gamma}$ \cite{F-M} and we have computed the second Feng-Rao number for Arf semigroups. The starting case $\Gamma = \mathbb{N}$ is actually trivial, since $\simfrdist{m}=m+2$. 

Given an Arf numerical semigroup $\Gamma$ with conductor $c$, we first study the second Feng-Rao distance for the elements in $[0,c+e-1]\cap\Gamma$, and then from $c+e$ on. We need some lemmas relating the divisors of a numerical semigroup and its translate by one of its elements (as we had in the case of Ap\'ery sets). 

\begin{lemma}\label{divisorsLem}
	Let $\Gamma$ be a numerical semigroup and let $\nxt{e}\in \Gamma$. 
	For any \(m \in \Gamma\), we have
	\[
	\setdiv{\nxte{\Gamma}{\nxt{e}}}{2\nxt{e}+m}=(\nxt{e}+\setdiv{\Gamma}{m})\cup\finset{0,2\nxt{e}+m},
	\]
	and this union is disjoint.
\end{lemma}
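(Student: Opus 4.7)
The plan is to use the explicit description $\nxte{\Gamma}{\nxt{e}} = \finset{0}\cup(\nxt{e}+\Gamma)$, which forces every nonzero element of $\nxte{\Gamma}{\nxt{e}}$ to have the shape $\nxt{e}+s$ with $s\in\Gamma$. The proof is then a straightforward case analysis, split into the inclusion $\supseteq$, the inclusion $\subseteq$, and the disjointness check.

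For the inclusion $\supseteq$, I would verify each piece separately. The element $0$ divides $2\nxt{e}+m$ trivially (and $2\nxt{e}+m=\nxt{e}+(\nxt{e}+m)\in \nxte{\Gamma}{\nxt{e}}$ since $\nxt{e}+m\in\Gamma$), and $2\nxt{e}+m$ divides itself. For $s\in\simsetdiv{m}$, both $s$ and $m-s$ lie in $\Gamma$, hence $\nxt{e}+s\in \nxte{\Gamma}{\nxt{e}}$ and $(2\nxt{e}+m)-(\nxt{e}+s)=\nxt{e}+(m-s)\in \nxt{e}+\Gamma\subseteq \nxte{\Gamma}{\nxt{e}}$, so $\nxt{e}+s\in \setdiv{\nxte{\Gamma}{\nxt{e}}}{2\nxt{e}+m}$.

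For the inclusion $\subseteq$, I take an arbitrary $x\in \setdiv{\nxte{\Gamma}{\nxt{e}}}{2\nxt{e}+m}$. If $x=0$ or $x=2\nxt{e}+m$ we are done, so assume otherwise; then $x=\nxt{e}+s$ for some $s\in\Gamma$, and the condition $(2\nxt{e}+m)-x\in \nxte{\Gamma}{\nxt{e}}$ reads $\nxt{e}+(m-s)\in \finset{0}\cup(\nxt{e}+\Gamma)$. The option $\nxt{e}+(m-s)=0$ would force $s=\nxt{e}+m$ and hence $x=2\nxt{e}+m$, which we excluded; the remaining option gives $m-s\in\Gamma$, so $s\in\simsetdiv{m}$ and $x\in \nxt{e}+\simsetdiv{m}$.

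Finally, for disjointness (assuming $\nxt{e}\neq 0$, which is the case of interest; the $\nxt{e}=0$ case is trivial since $\nxte{\Gamma}{0}=\Gamma$), every element of $\nxt{e}+\simsetdiv{m}$ lies in the interval $[\nxt{e},\nxt{e}+m]$, while $0<\nxt{e}$ and $\nxt{e}+m<2\nxt{e}+m$, so the three pieces are pairwise disjoint. The main obstacle here is essentially notational rather than conceptual: one must keep track carefully of when an equation of the form $\nxt{e}+t=0$ can or cannot occur, and ensure the ``boundary'' elements $0$ and $2\nxt{e}+m$ are not double-counted inside $\nxt{e}+\simsetdiv{m}$.
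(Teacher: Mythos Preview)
Your proof is correct and follows essentially the same route as the paper's: both inclusions are handled by writing nonzero elements of $\nxte{\Gamma}{\nxt{e}}$ as $\nxt{e}+s$ with $s\in\Gamma$ and unwinding the divisor condition. You are in fact slightly more careful than the paper, which does not spell out the disjointness argument or the boundary case $\nxt{e}+(m-s)=0$; your interval observation $[\nxt{e},\nxt{e}+m]$ and your caveat on $\nxt{e}=0$ handle these cleanly.
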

\begin{proof}
	First, we show that \(\nxt{e}+\setdiv{\Gamma}{m}\subseteq \setdiv{\nxte{\Gamma}{\nxt{e}}}{2\nxt{e}+m}\). Take \(s\in \setdiv{\Gamma}{m}\). Then \(\nxt{e}+s\) is clearly in \(\nxte{\Gamma}{\nxt{e}}\), and \(2\nxt{e}+m-(\nxt{e}+s)=\nxt{e}+(m-s)\), which is in \(\nxte{\Gamma}{\nxt{e}}\) because \(m-s\in \Gamma\). This proves one inclusion.
	
	Let now \(s\) be an element in \(\setdiv{\nxte{\Gamma}{\nxt{e}}}{2\nxt{e}+m}\setminus\finset{0,2\nxt{e}+m}\). Since \(s\in \nxte{\Gamma}{\nxt{e}}\setminus\finset{0}\), we have \(s-\nxt{e}\in \Gamma\). Also, $m-(s-\nxt{e})=\nxt{e}+m-s$, and as $2\nxt{e}+m-s\in \nxte{\Gamma}{\nxt{e}}\setminus\{0\}$, we conclude that $\nxt{e}+m-s\in \Gamma$. Thus, $s=\nxt{e}+(s-\nxt{e})\in \nxt{e}+\setdiv{\Gamma}{m}$.
\end{proof}

From the above lemma we can determine completely the (classical) Feng-Rao distance for an Arf numerical semigroup (this was done already in \cite{Arf-IEEE}). 

\begin{theorem}[Campillo-Farrán-Munuera]\label{th:CFM}
	Let $\Gamma$ be an Arf numerical semigroup with $c=\rho_{r}$, and denote $m_{1}:=0$ and $m_{k}:=c+\rho_{k}-1$ for $k\in \{2,\ldots, r\}$. Then one has $\cardsetdiv{\Gamma}{m_{1}}=1$ and 
	$\cardsetdiv{\Gamma}{m_{k}}=2k-2$ for $k\in \{2,\ldots, r\}$, and hence 
	\begin{itemize}
		\item $\frdist{1}{\Gamma}{m_{1}}=1,$
		\item $\frdist{1}{\Gamma}{m}=2k-2$ if $m_{k-1}<m\leq m_{k}$, $k\in\{2,\dots,r\}$, 
		\item $\frdist{1}{\Gamma}{m}=m+1-2g$ if $m\ge m_{r}$. 
	\end{itemize}
\end{theorem}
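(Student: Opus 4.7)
The plan is to reduce the theorem to three computations: (i) for $m_{1}=0$, the equality $\setdiv{\Gamma}{0}=\{0\}$ is immediate; (ii) for $m\ge 2c-1$ one shows directly that $\cardsetdiv{\Gamma}{m}=m+1-2g$, and since this quantity is strictly increasing in $m$ it coincides with $\frdist{1}{\Gamma}{m}$; and (iii) for the intermediate range $m\in(m_{k-1},m_{k}]$ with $k\in\{2,\ldots,r\}$ I would establish simultaneously the exact value $\cardsetdiv{\Gamma}{m_{k}}=2k-2$ and the uniform lower bound $\cardsetdiv{\Gamma}{m'}\ge 2k-2$ for every $m'\in\Gamma$ with $m'>m_{k-1}$. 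Step (ii) is a standard count: pair each $x\in\Gamma\cap[0,m]$ with $m-x$ and note that for $m\ge 2c-1$ at most one of $x,m-x$ can be a gap; the genuine content lies in (iii).

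For the lower bound in (iii), which requires no Arf property, I would argue as follows. Given $m'\in\Gamma$ with $m'\ge c+\rho_{k-1}$, the elements $\rho_{1},\ldots,\rho_{k-1}$ all divide $m'$ since $m'-\rho_{i}\ge c\in\Gamma$, and their complements $m'-\rho_{1},\ldots,m'-\rho_{k-1}$ are then divisors as well. These $2(k-1)$ elements are pairwise distinct because $\rho_{i}+\rho_{j}\le 2\rho_{k-1}<c+\rho_{k-1}\le m'$, where the strict inequality uses $\rho_{k-1}<\rho_{r}=c$ since $k\le r$.

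The exact count $\cardsetdiv{\Gamma}{m_{k}}=2k-2$ is where the Arf property enters, and this is the step I expect to be the main obstacle. The $2(k-1)$ divisors just exhibited (with $m'=m_{k}$) all lie in $\setdiv{\Gamma}{m_{k}}$; suppose for contradiction there were an additional one, call it $x$. Then necessarily both $x>\rho_{k-1}$ and $m_{k}-x>\rho_{k-1}$ (otherwise $x$ would coincide with one of the exhibited divisors), so $x=\rho_{i}$ and $m_{k}-x=\rho_{j}$ for some $i,j\ge k$. Assuming without loss of generality $i\ge j$, the Arf condition applied to the triple $i\ge j\ge k$ gives $\rho_{i}+\rho_{j}-\rho_{k}=c-1\in\Gamma$, contradicting the fact that $c-1$ is a gap by definition of the conductor. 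The achievability at $m_{k}$ then pins down $\frdist{1}{\Gamma}{m}=2k-2$ on the whole interval $(m_{k-1},m_{k}]$. The care needed here is in setting up the ordering for the Arf axiom and in handling the boundary case $k=r$, where the ``middle range'' for an extra divisor degenerates; compatibility at $m=m_{r}=2c-1$ between bullets (ii) and (iii) reduces to the numerical identity $r=c+1-g$, which is immediate from $|\Gamma\cap[0,c]|=r$.
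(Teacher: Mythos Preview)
Your proof is correct. The paper, however, does not actually supply a proof of this theorem: it is attributed to Campillo, Farr\'an and Munuera and referenced to \cite{Arf-IEEE}. The sentence preceding the statement hints that one could derive it from Lemma~\ref{divisorsLem}, i.e., from the recursion $\setdiv{\nxte{\Gamma}{\nxt{e}}}{2\nxt{e}+m}=(\nxt{e}+\setdiv{\Gamma}{m})\cup\{0,2\nxt{e}+m\}$ combined with the iterative construction of Arf semigroups in Remark~\ref{Arf-iterative}; that route gives $\cardsetdiv{\nxte{\Gamma}{\nxt{e}}}{2\nxt{e}+m}=\cardsetdiv{\Gamma}{m}+2$ and hence, by induction on the length of the multiplicity sequence, the values $\cardsetdiv{\Gamma}{m_{k}}=2k-2$ and the staircase shape of $\funfrdist{1}$.

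Your approach is instead direct: for the lower bound you exhibit the $2(k-1)$ divisors $\rho_{1},\ldots,\rho_{k-1},\,m'-\rho_{1},\ldots,m'-\rho_{k-1}$ of any $m'>m_{k-1}$ (an argument valid in every numerical semigroup), and for the exact value at $m_{k}$ you invoke the Arf axiom once, on a triple $i\ge j\ge k$, to force a hypothetical extra divisor to yield $c-1\in\Gamma$. Both routes are short. Yours has the virtue of isolating precisely where the Arf hypothesis enters (only for the upper bound $\cardsetdiv{\Gamma}{m_{k}}\le 2k-2$), whereas the recursive route stays closer to the machinery the paper goes on to use for the second Feng-Rao distance.
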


Note that the main purpose of this paper is precisely to generalize the above result for the second Feng-Rao distance. Thus we need to know how divisors of two elements behave under translations.


\begin{lemma}\label{generalLema}
	Let $\Gamma$ be a numerical semigroup and let $\nxt{e}\in \Gamma$. For every configuration $m,m'\in\Gamma$, we have 
	\[
	\setdiv{\nxte{\Gamma}{\nxt{e}}}{2\nxt{e}+m,2\nxt{e}+m'} = \left( \nxt{e} + \setdiv{\Gamma}{m,m'} \right) \cup \finset{0,2\nxt{e}+m,2\nxt{e}+m'}.
	\]
	If in addition \(|m-m'|<  \nxt{e}\), then this union is disjoint. In particular,
	\[
	\cardsetdiv{\Gamma}{m,m'} +2 \le \cardsetdiv{\nxte{\Gamma}{\nxt{e}}}{2\nxt{e}+m,2\nxt{e}+m'} \le \cardsetdiv{\Gamma}{m,m'}+3.
	\]
\end{lemma}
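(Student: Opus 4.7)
The proof should be a direct application of \cref{divisorsLem} to each of the two elements $2\nxt{e}+m$ and $2\nxt{e}+m'$, followed by taking unions. First I would decompose
\[
\setdiv{\nxte{\Gamma}{\nxt{e}}}{2\nxt{e}+m,2\nxt{e}+m'} = \setdiv{\nxte{\Gamma}{\nxt{e}}}{2\nxt{e}+m}\cup \setdiv{\nxte{\Gamma}{\nxt{e}}}{2\nxt{e}+m'}
\]
using the definition of the divisor set of several elements, then apply \cref{divisorsLem} to each summand. Pulling the translation by $\nxt{e}$ outside the union gives $(\nxt{e}+\setdiv{\Gamma}{m})\cup(\nxt{e}+\setdiv{\Gamma}{m'})=\nxt{e}+\setdiv{\Gamma}{m,m'}$, and the leftover terms merge into $\{0,2\nxt{e}+m,2\nxt{e}+m'\}$, establishing the equality in the first displayed formula.

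For the disjointness under the hypothesis $|m-m'|<\nxt{e}$, I would argue by exclusion. Since every element of $\nxt{e}+\setdiv{\Gamma}{m,m'}$ is at least $\nxt{e}>0$, it cannot equal $0$. Now assume, aiming at a contradiction, that $2\nxt{e}+m=\nxt{e}+s$ for some $s\in\setdiv{\Gamma}{m,m'}$. Then $s=\nxt{e}+m$, and either $m-s=-\nxt{e}\in\Gamma$ (impossible as $\nxt{e}>0$) or $m'-s=m'-m-\nxt{e}\in\Gamma$, which forces $m'-m\ge \nxt{e}$, contradicting $|m-m'|<\nxt{e}$. The symmetric argument rules out $2\nxt{e}+m'\in \nxt{e}+\setdiv{\Gamma}{m,m'}$.

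For the cardinality inequality, I would combine this disjointness with the observation that $\{0,2\nxt{e}+m,2\nxt{e}+m'\}$ has either $2$ elements (when $m=m'$, in which case $\setdiv{\Gamma}{m,m'}=\setdiv{\Gamma}{m}$ and \cref{divisorsLem} already gives the exact value $\cardsetdiv{\Gamma}{m,m'}+2$) or $3$ elements (when $m\neq m'$, giving $\cardsetdiv{\Gamma}{m,m'}+3$). These two situations respectively realize the lower and upper bounds, so the two-sided estimate follows immediately.

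The whole argument is essentially bookkeeping; the only slightly delicate point is verifying that the hypothesis $|m-m'|<\nxt{e}$ is exactly what is needed to prevent $2\nxt{e}+m$ and $2\nxt{e}+m'$ from being expressible as $\nxt{e}+s$ with $s$ a divisor (in $\Gamma$) of $m$ or $m'$. I expect this to be the main (and only) genuine check, since the rest reduces to reusing the single-element case.
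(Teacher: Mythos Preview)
Your proposal is correct and follows essentially the same route as the paper: both obtain the set equality by applying \cref{divisorsLem} to each of $2\nxt{e}+m$ and $2\nxt{e}+m'$ and taking the union, and both verify disjointness by checking that $2\nxt{e}+m$ and $2\nxt{e}+m'$ cannot lie in $\nxt{e}+\setdiv{\Gamma}{m,m'}$ under the hypothesis $|m-m'|<\nxt{e}$. The only cosmetic difference is that the paper assumes $m\le m'$ without loss of generality and then deduces the lower cardinality bound by noting that $0$ and the larger element $2\nxt{e}+m'$ are always outside $\nxt{e}+\setdiv{\Gamma}{m,m'}$, whereas you run the symmetric argument for both elements and split into the cases $m=m'$ and $m\neq m'$ to pin down the exact count; both are equally valid.
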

\begin{proof}
	The first statement follows easily from \cref{divisorsLem}. For the second, suppose \(m\leq m'\), and so \(m'-m< \nxt{e}\). Then \(2\nxt{e}+m'\notin \left(\nxt{e}+\setdiv{\Gamma}{m}\right)\) because
	\(m-(\nxt{e}+m')=m-m'-\nxt{e}<0\). Similarly, \(2\nxt{e}+m\notin \left(\nxt{e} +\setdiv{\Gamma}{m'}\right)\), since \(m'-(\nxt{e}+m)=m'-m-\nxt{e}<0\).
	
	Notice that $0,2\nxt{e}+m'\not\in \nxt{e} + \setdiv{\Gamma}{m,m'}$. From this observation, the last inequality follows.
\end{proof}

The union might not be disjoint if \(|m-m'|\ge \nxt{e}\). For example, it is not disjoint for \(m'=m+\nxt{e}\).


The following result gives bounds for the cardinality of the set of divisors of two elements in a numerical semigroup in terms of the divisors of each element. This will be used later.

\begin{lemma}\label{boundsForNDiv}
	Let $\Gamma$ be a numerical semigroup. For any \(m,m'\in \Gamma\) with \(m< m'\) we have
	\[
	\max \finset{\cardsetdiv{\Gamma}{m}+1,\cardsetdiv{\Gamma}{m'}} \leq \cardsetdiv{\Gamma}{m,m'} \leq \cardsetdiv{\Gamma}{m}+\cardsetdiv{\Gamma}{m'}-1.
	\]
\end{lemma}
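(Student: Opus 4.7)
The statement is essentially a counting exercise using $\simsetdiv{m,m'}=\simsetdiv{m}\cup\simsetdiv{m'}$, so the plan is to handle the upper and lower bounds separately via inclusion--exclusion.

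For the upper bound, the plan is to apply the standard identity
\[
\cardsetdiv{\Gamma}{m,m'} = \cardsetdiv{\Gamma}{m} + \cardsetdiv{\Gamma}{m'} - \#\bigl(\simsetdiv{m}\cap \simsetdiv{m'}\bigr),
\]
and then observe that $0 \in \simsetdiv{m}\cap \simsetdiv{m'}$, since $0\in\Gamma$ and $m-0=m$, $m'-0=m'$ both lie in $\Gamma$. Therefore the intersection contains at least one element, and rearranging yields $\cardsetdiv{\Gamma}{m,m'}\leq \cardsetdiv{\Gamma}{m}+\cardsetdiv{\Gamma}{m'}-1$.

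For the lower bound I would split into two inequalities. The inequality $\cardsetdiv{\Gamma}{m,m'}\geq \cardsetdiv{\Gamma}{m'}$ is immediate from $\simsetdiv{m'}\subseteq \simsetdiv{m,m'}$. For $\cardsetdiv{\Gamma}{m,m'}\geq \cardsetdiv{\Gamma}{m}+1$, it suffices to produce a single element of $\simsetdiv{m,m'}$ that does not belong to $\simsetdiv{m}$. The natural candidate is $m'$ itself: clearly $m'\in \simsetdiv{m'}\subseteq \simsetdiv{m,m'}$, while $m'\notin \simsetdiv{m}$ because $m-m'<0$ (using the hypothesis $m<m'$), so $m-m'\notin \Gamma$. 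Hence $\simsetdiv{m}\subsetneq \simsetdiv{m,m'}$, giving the required strict increase of cardinality by at least one.

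There is no real obstacle here; the only point that requires the strict inequality $m<m'$ is the argument that $m'\notin \simsetdiv{m}$, and the only point that requires $0\in\Gamma$ (part of the definition of a numerical semigroup) is the fact that $0$ lies in every $\simsetdiv{x}$ with $x\in\Gamma$. Both bounds are therefore sharp-looking and, as one can check, are attained when $m'=m+\rho$ for suitable $\rho$, or when $m'$ is irreducible and incomparable to $m$ in the divisibility ordering $\leq_\Gamma$; I would not include such examples in the proof itself.
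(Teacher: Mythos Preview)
Your proof is correct and follows essentially the same approach as the paper: the paper also obtains the lower bound from the inclusions $\simsetdiv{m}\cup\{m'\}\subseteq \simsetdiv{m,m'}$ (noting this union is disjoint since $m'\notin\simsetdiv{m}$) and $\simsetdiv{m'}\subseteq \simsetdiv{m,m'}$, and the upper bound from the observation that $0\in\simsetdiv{m}\cap\simsetdiv{m'}$. Your use of inclusion--exclusion for the upper bound is a slightly more explicit phrasing of the same idea.
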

\begin{proof}
	The first inequality comes from the inclusions
	\[
	\setdiv{\Gamma}{m}\cup \finset{m'} \subseteq \setdiv{\Gamma}{m,m'},\quad 
	\setdiv{\Gamma}{m'} \subseteq \setdiv{\Gamma}{m,m'},
	\]
	where the first union is disjoint.
	
	The second inequality is clear since \(0\in \setdiv{\Gamma}{m}\cap \setdiv{\Gamma}{m'}\).
\end{proof}

Recall that we are looking for minimums of $\cardsetdiv{\Gamma}{m,m'}$ with $m_0,m,m'$ in a numerical semigroup $\Gamma$, with $m_0\le m<m'$. The following result is telling us that we can choose $m$ and $m'$ at distance at most the multiplicity of $\Gamma$. This simplifies the search.

\begin{lemma}\label{hastamultiplicidad}
	Let $\Gamma$ be a numerical semigroup with conductor $c$ and multiplicity \(e\). For \(m\geq c\), we have
	\[
	\min \finset{\cardsetdiv{\Gamma}{m,m'}\mid m< m'} = \min \finset{\cardsetdiv{\Gamma}{m,m'}\mid m<m' \leq m+e}.
	\]
\end{lemma}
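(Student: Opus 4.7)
The inequality $\geq$ is immediate, since the right-hand set is contained in the left-hand set. Thus the work is to show the reverse: for every $m' \in \Gamma$ with $m' > m+e$, there exists $m'' \in \Gamma$ with $m < m'' \leq m+e$ such that $\cardsetdiv{\Gamma}{m,m''} \leq \cardsetdiv{\Gamma}{m,m'}$.

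My plan is to use Lemma \ref{divisorsContained} iteratively, exploiting that $e \in \Gamma$. Given $m' > m+e$, let $k \geq 1$ be the largest positive integer such that $m'' := m' - ke > m$; then automatically $m'' \leq m+e$ (otherwise $k+1$ would work). Since $m \geq c$ and the integers $m'', m''+e, m''+2e, \ldots, m''+ke = m'$ all exceed $m \geq c$, they all lie in $\Gamma$. Now, because $e \in \Gamma$, for each $j \in \{0,\ldots,k-1\}$ we have $m''+je \in \setdiv{\Gamma}{m''+(j+1)e}$, so by Lemma \ref{divisorsContained}
\[
\setdiv{\Gamma}{m''} \subseteq \setdiv{\Gamma}{m''+e} \subseteq \cdots \subseteq \setdiv{\Gamma}{m''+ke} = \setdiv{\Gamma}{m'}.
\]

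From this chain the conclusion follows at once: taking unions with $\setdiv{\Gamma}{m}$ gives
\[
\setdiv{\Gamma}{m,m''} = \setdiv{\Gamma}{m}\cup\setdiv{\Gamma}{m''} \subseteq \setdiv{\Gamma}{m}\cup\setdiv{\Gamma}{m'} = \setdiv{\Gamma}{m,m'},
\]
so $\cardsetdiv{\Gamma}{m,m''} \leq \cardsetdiv{\Gamma}{m,m'}$. This establishes the $\leq$ direction and hence the equality.

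There is no real obstacle here; the only subtle point is to verify that the chosen $m''$ lies in $\Gamma$ and in the desired interval $(m,m+e]$, which is why the hypothesis $m \geq c$ is used (it guarantees $m' - je \in \Gamma$ at each intermediate step). The argument does not require the Arf property, which is consistent with the placement of the lemma in a section collecting general tools for arbitrary numerical semigroups.
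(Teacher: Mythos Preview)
Your proof is correct and follows essentially the same approach as the paper: both arguments reduce $m'$ by multiples of $e$ and invoke Lemma~\ref{divisorsContained} to obtain the needed inclusion of divisor sets. The paper picks the smallest $m'$ attaining the minimum and derives a contradiction from a single subtraction of $e$, whereas you subtract $ke$ all at once to land directly in $(m,m+e]$; this is only a cosmetic difference in packaging.
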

\begin{proof}
	Suppose \(m'> m\) is the smallest element in which the minimum in the left hand side of the above equation is attained. If \(m'\geq m+e\), then \(m'-e\in \setdiv{\Gamma}{m'}\), and by \cref{divisorsContained} we have \(\setdiv{\Gamma}{m'-e}\subsetneq \setdiv{\Gamma}{m'}\). Thus,
	\[
	\cardsetdiv{\Gamma}{m,m'-e}< \cardsetdiv{\Gamma}{m,m'},
	\]
	which is in contradiction with the minimality of \(m'\).
\end{proof}

With this series of lemmas we are now ready to study the second Feng-Rao distance on an Arf numerical semigroup $\Gamma$. We will study first the interval $[0,c+e-1]\cap\Gamma$ and then the elements larger than $c+e$, with $c$ the conductor of $\Gamma$ and $e$ its multiplicity. This distinction will become clear later, since the results obtained in each case are different in nature.

\subsection{Second Feng-Rao distance up to \(c+e-1\)}

First, we study the case \(e=2\). In this setting, \(c+e-1=c+1\), $c$ is even, and $\Gamma=\langle e,c+1\rangle$. 

\begin{lemma}\label{caseMultiplicity2}
	Let $c$ be an even integer greater than one. For \(\Gamma=\langle 2,c+1\rangle\) and $m\in\Gamma$, we have
	\[
	\simfrdist{m}=\begin{cases} 3 & \text{ if } m=2, \\ 4  & \text { if } 2<m \leq c+1. \end{cases}
	\]
\end{lemma}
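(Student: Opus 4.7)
The plan is to first unpack the structure of the semigroup: since $c$ is even,
\[
\Gamma = \{0,2,4,\ldots,c\}\cup\{c+1,c+2,c+3,\ldots\},
\]
and its only irreducible elements (i.e.\ minimal generators) are $2$ and $c+1$. This lets me compute by hand the few divisor sets I will need. Clearly $\simsetdiv{2}=\{0,2\}$, $\simsetdiv{4}=\{0,2,4\}$, and $\simsetdiv{c+1}=\{0,c+1\}$. For $\simsetdiv{c+3}$, any decomposition $c+3=a+b$ with $a,b\in\Gamma$ forces the odd summand to be at least $c+1$, so only $(0,c+3)$ and $(2,c+1)$ are available; hence $\simsetdiv{c+3}=\{0,2,c+1,c+3\}$.

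For $m=2$, the trivial inclusion $\{0,m_1,m_2\}\subseteq \simsetdiv{m_1,m_2}$ yields $\simfrdist{2}\geq 3$. The pair $(m_1,m_2)=(2,4)$ is admissible and gives $\simsetdiv{2,4}=\{0,2,4\}$, attaining $3$.

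For $2<m\leq c+1$ I would handle the two inequalities separately. For the upper bound, I choose $m_1=c+1$ and $m_2=c+3$. Both lie in $\Gamma$ with $m\leq m_1<m_2$, and from the computations above, $\simsetdiv{c+1,c+3}=\{0,2,c+1,c+3\}$ has cardinality $4$. For the lower bound, take any admissible pair $m\leq m_1<m_2$ in $\Gamma$. Then $m_1>2$, and since the irreducibles of $\Gamma$ are only $2$ and $c+1$, either (a) $m_1$ is reducible, or (b) $m_1=c+1$. In case (a), $\cardsimsetdiv{m_1}\geq 3$, and \cref{boundsForNDiv} yields
\[
\cardsimsetdiv{m_1,m_2}\geq \cardsimsetdiv{m_1}+1\geq 4.
\]
In case (b), $m_2\geq c+2$, so $m_2-2\geq c\in\Gamma$ and therefore $2\in\simsetdiv{m_2}$; thus $\{0,2,c+1,m_2\}\subseteq\simsetdiv{m_1,m_2}$, and these four elements are distinct because $c+1\geq 3$ and $m_2\geq c+2\geq 4$.

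The only delicate point is the exceptional case $m_1=c+1$: since $c+1$ is irreducible, the generic estimate $\cardsimsetdiv{m_1,m_2}\geq \cardsimsetdiv{m_1}+1$ from \cref{boundsForNDiv} only gives $3$, so one has to produce an extra divisor of $m_2$ by hand. The natural candidate is $2$, whose membership in $\simsetdiv{m_2}$ is guaranteed by the fact that $m_2$ lies past the conductor. Everything else is a direct inspection of the very explicit semigroup $\langle 2,c+1\rangle$.
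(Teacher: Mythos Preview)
Your proof is correct and follows essentially the same strategy as the paper: a case split on whether $m_1$ equals the irreducible $c+1$ or not, using \cref{boundsForNDiv} in the reducible case and exhibiting the extra divisor $2$ of $m_2$ in the case $m_1=c+1$. The only cosmetic difference is the witness pair for $m=2$: you use $(2,4)$, whereas the paper uses $(2,c+1)$; both give three divisors.
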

\begin{proof}
	It is clear that \(\simfrdist{2}=3\), since \(\setdiv{\Gamma}{2,c+1}=\{0,2,c+1\}\). For any other element of the semigroup with \(2<m\) and \(m\neq c+1\), \(\cardsetdiv{\Gamma}{m}\geq 3\), and by \cref{boundsForNDiv} \(\cardsetdiv{\Gamma}{m,m'}\geq 4\) for any \(m'>m\). As for \(m=c+1\), we have \(\finset{0,2,c+1,m'}\subset \setdiv{\Gamma}{c+1,m'}\) for any \(m'>c+1\), so also \(\cardsetdiv{\Gamma}{c+1,m'}\geq 4\). We deduce that \(\simfrdist{m}\geq 4\) for \(m>2\). Finally
	\[
	\setdiv{\Gamma}{c+1,c+3}=\finset{0,2,c+1,c+3},
	\]
	which gives \(\simfrdist{m}=4\) for \(2<m\leq c+1\).
\end{proof}

We now deal with the case of embedding dimension greater than two. The following property simplifies the task.

\begin{proposition}
	Let \(\Gamma\) be an Arf numerical semigroup with conductor \(c\) and multiplicity \(e\ge 3\). Then either \(c+e-2\) or \(c+e-3\) is irreducible.
\end{proposition}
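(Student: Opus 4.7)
The plan is as follows. Since $\Gamma$ is Arf, it has maximal embedding dimension, so (by the fact recalled in \cref{sec:preliminaries}) the minimal generating system of $\Gamma$ is $\{e\}\cup(\apery{\Gamma}{e}\setminus\{0\})$. Consequently, the irreducible elements of $\Gamma$ are precisely $e$ together with those $m\in\Gamma$ with $m>e$ such that $m-e\notin\Gamma$.

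First I would note that since $e\geq 3$ we have $c\geq e\geq 3$, and both $c+e-2$ and $c+e-3$ lie in $[c,\to)\subseteq\Gamma$. If $c=3$, then $c+e-3=e$ is itself a minimal generator and we are done. So assume $c>3$, so that $c+e-2>c+e-3>e$; then for $k\in\{2,3\}$, the irreducibility of $c+e-k$ is equivalent to $c-k\notin\Gamma$.

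The main step is to rule out the possibility that both $c-2$ and $c-3$ belong to $\Gamma$. Arguing by contradiction, suppose they do. Apply the Arf property to the triple with $\rho_i=\rho_j=c-2$ and $\rho_k=c-3$ (which satisfies $i\geq j\geq k$) to conclude that
\[
2(c-2)-(c-3)=c-1\in\Gamma.
\]
But then $c-2$ and $c-1$ are two consecutive integers in $\Gamma$, and by \cref{consecutivos} this forces $c-2\geq c$, a contradiction. Hence at least one of $c-2$, $c-3$ lies outside $\Gamma$, and the corresponding $c+e-k$ is irreducible.

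The only real subtlety is the degenerate configuration $c=3$ (where $c-3=0\in\Gamma$ would break the reduction), but this is immediately absorbed by observing that there $c+e-3=e$ is already a minimal generator. The rest is a one-line use of the Arf property on the triple $(c-2,c-2,c-3)$ together with \cref{consecutivos}.
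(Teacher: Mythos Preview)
Your proof is correct and follows essentially the same approach as the paper's: reduce irreducibility of $c+e-k$ to $c-k\notin\Gamma$ via the Ap\'ery description of the minimal generators, then rule out $c-2,c-3\in\Gamma$ simultaneously using the Arf property and \cref{consecutivos}. Two minor remarks: (i) your explicit Arf step producing $c-1$ is superfluous, since \cref{consecutivos} applies directly to the pair $c-3,c-2$; (ii) the paper does not need to isolate $c=3$, because in that case $c-2=1\notin\Gamma$ (as $e\ge 3$), so $c+e-2$ is already irreducible and the contrapositive argument never reaches $c-3$.
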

\begin{proof}
	Recall that $\apery{\Gamma}{e}^*\cup\{e\}$ is the minimal generating system of $\Gamma$. 
	
	Since \(e\geq 3\), clearly both \(c+e-2\) and \(c+e-3\) are in \(\Gamma\). Now suppose that \(c+e-2\) is not irreducible. Then \(c+e-2\notin \apery{\Gamma}{e}^*\). That means that \(c-2\in \Gamma\). If \(c-3\in \Gamma\), then by Remark \ref{consecutivos},  we will have \(c-2\ge c\), which is a contradiction, so \(c-3\notin\Gamma\). Thus $c+e-3\in\apery{\Gamma}{e}^*\cup\{e\}$.
\end{proof}

With this we know the value of the second Feng-Rao distance for the elements in $\Gamma$ up to $c+e-3$. For $c+e-1$ and $c+e-2$ the calculations require more work.

\begin{corollary}\label{coro3}
	For \(\Gamma\) an Arf numerical semigroup with conductor \(c\) and multiplicity \(e\ge 3\) we have
	\[
	\simfrdist{m}=3
	\]
	for all \(m\in \Gamma\) with \(m\le c+e-3\).
	Moreover, 
	\(\simfrdist{c+e-2}=3\) if and only if \(c-2\notin \Gamma\).
\end{corollary}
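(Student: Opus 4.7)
The plan is to handle both assertions of \cref{coro3} with the same skeleton: establish the universal lower bound $\simfrdist{m}\ge 3$, then build an explicit pair achieving $3$ when possible, and show no pair achieves $3$ otherwise.

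For the lower bound, fix any $m\in \Gamma$ with $m\ge 1$ and any $m_{1},m_{2}\in \Gamma$ with $m\le m_{1}<m_{2}$. By \cref{boundsForNDiv}, $\cardsetdiv{\Gamma}{m_{1},m_{2}}\ge \cardsetdiv{\Gamma}{m_{1}}+1\ge 3$, the latter because $m_{1}\ge 1$ forces $0,m_{1}\in \setdiv{\Gamma}{m_{1}}$. Hence $\simfrdist{m}\ge 3$ automatically.

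For the upper bound in the first statement, let $m_{2}=c+e-1$. This element is irreducible: $c+e-1\in \Gamma$ and $(c+e-1)-e=c-1\notin \Gamma$, so $c+e-1\in \apery{\Gamma}{e}^{*}$, which together with $\{e\}$ is the minimal generating system. By the preceding proposition one of $c+e-3, c+e-2$ is irreducible; take $m_{1}$ to be that element. Then $m_{1}\ge c+e-3\ge m$ and $m_{2}-m_{1}\in \{1,2\}$; since $e\ge 3$, neither $1$ nor $2$ belongs to $\Gamma$, so $m_{1}\notin \setdiv{\Gamma}{m_{2}}$. Consequently $\setdiv{\Gamma}{m_{1},m_{2}}=\{0,m_{1},m_{2}\}$, giving $\simfrdist{m}\le 3$, and therefore equality.

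For the equivalence at $m=c+e-2$, the direction $c-2\notin \Gamma$ places $c+e-2\in \apery{\Gamma}{e}^{*}$, so the same pair $(c+e-2,c+e-1)$ again yields $\simfrdist{c+e-2}=3$. For the converse, assume $c-2\in \Gamma$ and suppose for contradiction that some $(m_{1},m_{2})$ with $c+e-2\le m_{1}<m_{2}$ satisfies $\cardsetdiv{\Gamma}{m_{1},m_{2}}=3$. By \cref{boundsForNDiv}, $m_{1}$ must be irreducible; since $c+e-2>e$, necessarily $m_{1}\in \apery{\Gamma}{e}^{*}$, whose largest element is $c+e-1$. The hypothesis $c-2\in \Gamma$ excludes $m_{1}=c+e-2$, leaving $m_{1}=c+e-1$ and $m_{2}\ge c+e$. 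In this regime one must have $c>e$, since $c=e$ combined with $e\ge 3$ would force $c-2=e-2\notin \Gamma$. Hence $m_{2}>2e$ and $m_{2}-e\ge c$, so $\{0,e,m_{2}-e,m_{2}\}$ are four pairwise distinct elements of $\setdiv{\Gamma}{m_{2}}\subseteq \setdiv{\Gamma}{m_{1},m_{2}}$, contradicting cardinality $3$.

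The main obstacle is the converse direction of the second assertion, where every candidate pair must be ruled out; the crucial structural input is the implication $c-2\in \Gamma\Rightarrow c>e$, which is what forces $m_{2}>2e$ and thus produces the fourth divisor $m_{2}-e$ as an element genuinely distinct from $0,e,m_{2}$.
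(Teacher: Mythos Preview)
Your argument is correct and follows the approach implicit in the paper (which gives no proof, treating the statement as an immediate corollary of the preceding proposition). The first assertion and the forward implication of the second are exactly what the proposition is meant to yield: two distinct irreducibles above $m$ give a divisor union of size three. The converse direction, which the paper leaves to the reader, you handle cleanly by forcing $m_{1}=c+e-1$ and then exhibiting four divisors of $m_{2}$.

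Two minor remarks. First, in the upper-bound step the sentence establishing $m_{1}\notin\setdiv{\Gamma}{m_{2}}$ is redundant: once $m_{1}$ and $m_{2}$ are both irreducible and distinct, $\setdiv{\Gamma}{m_{1}}\cup\setdiv{\Gamma}{m_{2}}=\{0,m_{1}\}\cup\{0,m_{2}\}=\{0,m_{1},m_{2}\}$ automatically. Second, your converse argument could alternatively be shortened via \cref{th:CFM}, which directly gives $\cardsetdiv{\Gamma}{m_{2}}\ge\frdist{1}{\Gamma}{c+e}=4$ for $m_{2}\ge c+e$; but your explicit four-divisor construction is self-contained and equally valid.
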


Let \(\Gamma\) be an Arf numerical semigroup with conductor \(c\) and multiplicity \(e\). Let \(\nxte{\Gamma}{\nxt{e}}=\finset{0}\cup(\nxt{e}+\Gamma)\) with \(\nxt{e}\in\Gamma^*\). The conductor of \(\nxte{\Gamma}{\nxt{e}}\) is then \(\nxt{c}=\nxt{e}+c\). Recall that every Arf numerical semigroup is constructed by applying this procedure several times. The base case is \(\Gamma=\mathbb{N}\), and we have the following. Observe that $\nxte{\mathbb N}{1}=\mathbb N$, and so in the first step we can always omit the case $\nxt{e}=1$.

\begin{lemma}\label{principioordinarios}
	Let \(\nxte{\Gamma}{\nxt{e}}=\finset{0}\cup(\nxt{e}+\mathbb{N})\) with \(\nxt{e}>2\). Then \(\nxt{c}=\nxt{e}\), and
	\[
	\frdist{2}{\nxte{\Gamma}{\nxt{e}}}{\nxt{c}+\nxt{e}-2}=3,
	\]
	and
	\[
	\frdist{2}{\nxte{\Gamma}{\nxt{e}}}{\nxt{c}+\nxt{e}-1}=4.
	\]
\end{lemma}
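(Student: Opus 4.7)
The semigroup in question is simply $\nxte{\Gamma}{\nxt{e}}=\{0\}\cup\{\nxt e,\nxt e+1,\nxt e+2,\ldots\}$, so the conductor equals $\nxt c=\nxt e$ and the two target elements are $\nxt c+\nxt e-2=2\nxt e-2$ and $\nxt c+\nxt e-1=2\nxt e-1$. The first equality is an immediate application of \cref{coro3}: the multiplicity of $\nxte{\Gamma}{\nxt e}$ equals $\nxt e$, which is $\ge 3$ by hypothesis, and $\nxt c-2=\nxt e-2$ lies strictly between $0$ and $\nxt e$ and is therefore outside the semigroup. So $\frdist{2}{\nxte{\Gamma}{\nxt e}}{2\nxt e-2}=3$.

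The substance of the lemma lies in the second equality. First I would check that $2\nxt e-1$ is irreducible: any decomposition $2\nxt e-1=x+y$ with $x,y\in\nxte{\Gamma}{\nxt e}\setminus\{0\}$ would force $x,y\ge\nxt e$, hence $x+y\ge 2\nxt e>2\nxt e-1$. Thus $\setdiv{\nxte{\Gamma}{\nxt e}}{2\nxt e-1}=\{0,2\nxt e-1\}$, of cardinality $2$. The key observation for the lower bound is that any element $x\in\nxte{\Gamma}{\nxt e}$ with $x\ge 2\nxt e$ is divisible by $\nxt e$ (since $x-\nxt e\ge\nxt e$ lies in the semigroup), so $\{0,\nxt e,x\}\subseteq\setdiv{\nxte{\Gamma}{\nxt e}}{x}$ and in particular $\cardsetdiv{\nxte{\Gamma}{\nxt e}}{x}\ge 3$.

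I would then split the minimization $\simfrdist{2\nxt e-1}=\min\{\cardsetdiv{\nxte{\Gamma}{\nxt e}}{m,m'}\mid 2\nxt e-1\le m<m'\}$ into two cases. If $m=2\nxt e-1$ and $m'\ge 2\nxt e$, the elements $0,\nxt e,2\nxt e-1,m'$ all lie in $\setdiv{\nxte{\Gamma}{\nxt e}}{m,m'}$; their distinctness uses $\nxt e>2$ (to ensure $\nxt e<2\nxt e-1$), so the divisor set has cardinality $\ge 4$. If instead $m\ge 2\nxt e$, the observation above gives $\cardsetdiv{\nxte{\Gamma}{\nxt e}}{m}\ge 3$, and \cref{boundsForNDiv} yields $\cardsetdiv{\nxte{\Gamma}{\nxt e}}{m,m'}\ge\cardsetdiv{\nxte{\Gamma}{\nxt e}}{m}+1\ge 4$. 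Either way, $\simfrdist{2\nxt e-1}\ge 4$.

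The matching upper bound is witnessed by the pair $(m,m')=(2\nxt e-1,2\nxt e)$: a direct computation shows $\setdiv{\nxte{\Gamma}{\nxt e}}{2\nxt e}=\{0,\nxt e,2\nxt e\}$, so $\setdiv{\nxte{\Gamma}{\nxt e}}{2\nxt e-1,2\nxt e}=\{0,\nxt e,2\nxt e-1,2\nxt e\}$, of cardinality $4$. Given the very simple structure of $\{0\}\cup(\nxt e+\mathbb N)$, no step is genuinely hard; the one place requiring care is precisely the case split according to whether the smaller element $m$ is the irreducible $2\nxt e-1$ or strictly exceeds it, and this is also the point where the hypothesis $\nxt e>2$ is essential (for $\nxt e\le 2$ the semigroup is $\mathbb N$ or is already handled by \cref{caseMultiplicity2}).
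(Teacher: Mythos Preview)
Your proof is correct and follows essentially the same route as the paper: both invoke \cref{coro3} for the value at $2\nxt e-2$, and for $2\nxt e-1$ both use the witness pair $(2\nxt e-1,2\nxt e)$ for the upper bound and rely on $2\nxt e-1$ being the largest irreducible together with \cref{boundsForNDiv} for the lower bound. Your explicit case split on whether $m=2\nxt e-1$ or $m\ge 2\nxt e$, and the direct listing of $\{0,\nxt e,2\nxt e-1,m'\}$ in the first case, spells out what the paper's one-line appeal to \cref{boundsForNDiv} leaves implicit.
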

\begin{proof}
	Clearly, the conductor of $\nxte{\Gamma}{\nxt{e}}$ is $\nxt{e}$. The  equality $\frdist{2}{\nxte{\Gamma}{\nxt{e}}}{\nxt{c}+\nxt{e}-2}=3$ follows from \cref{coro3}, because $\nxt{e}-2\not\in \nxte{\Gamma}{\nxt{e}}$. For the last equality, since
	\(\nxt{c}+\nxt{e}-1\) is the largest irreducible of $\nxte{\Gamma}{\nxt{e}}$, we deduce from \cref{boundsForNDiv} that \(\simfrdist{\nxt{c}+\nxt{e}-1}\geq 4\). But we have
	\[
	\setdiv{\nxte{\Gamma}{\nxt{e}}}{\nxt{c}+\nxt{e}-1,\nxt{c}+\nxt{e}}=
	\finset{0,\nxt{e},\nxt{c}+\nxt{e}-1,\nxt{c}+\nxt{e}},
	\]
	whence \(\simfrdist{\nxt{c}+\nxt{e}-1}=4\).
\end{proof}
Now we go for the general case of the second Feng-Rao distance of $\nxt{c}+\nxt{e}-1$.

\begin{lemma}\label{frdistIncut}
	Let $\Gamma$ be an Arf numerical semigroup with multiplicity $e>1$, and let $\nxt{e}\in \Gamma^*$.Then
	\[
	\frdist{2}{\nxte{\Gamma}{\nxt{e}}}{\nxt{c}+\nxt{e}-1}=
	\begin{cases}
	4 & \text{ if } \nxt{e}=e,\\
	5 & \text{ otherwise}.
	\end{cases}
	\]
\end{lemma}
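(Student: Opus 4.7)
The plan is to directly enumerate pairs $(m,m')$ with $m := \nxt{c}+\nxt{e}-1$. By \cref{hastamultiplicidad} applied in $\nxte{\Gamma}{\nxt{e}}$ (which has multiplicity $\nxt{e}$), it suffices to take $m' \in (m,m+\nxt{e}]$, that is, $m' = \nxt{c}+\nxt{e}+k = 2\nxt{e}+(c+k)$ for some $k \in [0,\nxt{e}-1]$.

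First I would show that $\cardsetdiv{\nxte{\Gamma}{\nxt{e}}}{m}=2$. Any nontrivial divisor of $m$ lies in $[\nxt{e},\nxt{c}-1]$, so must be of the form $\nxt{e}+t$ with $t \in \setdiv{\Gamma}{c-1}$; but $c-1 \notin \Gamma$ by \cref{consecutivos}, so that set is empty. Next, since $c+k \in \Gamma$, \cref{divisorsLem} yields
\[
\cardsetdiv{\nxte{\Gamma}{\nxt{e}}}{m'} = f(k)+2, \qquad f(k):=\cardsetdiv{\Gamma}{c+k}.
\]
The difference $m'-m = k+1 \in [1,\nxt{e}]$ lies in $\nxte{\Gamma}{\nxt{e}}$ exactly when $k=\nxt{e}-1$, so bookkeeping on the intersection of $\setdiv{\nxte{\Gamma}{\nxt{e}}}{m}=\{0,m\}$ with $\setdiv{\nxte{\Gamma}{\nxt{e}}}{m'}$ gives
\[
\cardsetdiv{\nxte{\Gamma}{\nxt{e}}}{m,m'} =
\begin{cases}
f(\nxt{e}-1)+2 & \text{if } k=\nxt{e}-1, \\
f(k)+3 & \text{if } k\in[0,\nxt{e}-2].
\end{cases}
\]

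The pivotal input is that $f(e-1)=2$. Because $c-1\notin\Gamma$, we have $c+e-1\in\apery{\Gamma}{e}^*$; since $\Gamma$ is Arf it has maximal embedding dimension, so $\apery{\Gamma}{e}^*\cup\{e\}$ is its minimal generating system and $c+e-1$ is irreducible in $\Gamma$. In general $f(k)\ge 2$ trivially.

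The two cases then fall out cleanly. If $\nxt{e}=e$, then $k=\nxt{e}-1=e-1$ gives $f(e-1)+2=4$, while every $k\in[0,e-2]$ contributes $f(k)+3\ge 5$; hence the minimum is $4$. If $\nxt{e}>e$, then $e-1\in[0,\nxt{e}-2]$, and the upper bound $f(e-1)+3=5$ is attained; for the matching lower bound, $f(k)+3\ge 5$ always, and $f(\nxt{e}-1)\ge 3$ because $\nxt{e}>e$ ensures $c+\nxt{e}-1-e\ge c$, making $e$ a nontrivial divisor of $c+\nxt{e}-1$ in $\Gamma$. Thus the minimum is $5$.

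The main obstacle is not the case analysis itself but the reduction to it: organizing the disjointness in \cref{divisorsLem} and correctly detecting when $m \in \setdiv{\nxte{\Gamma}{\nxt{e}}}{m'}$. The substantive ingredient is recognizing that the Arf/maximal embedding dimension hypothesis is precisely what forces $c+e-1$ to be irreducible in $\Gamma$, producing the $4$-versus-$5$ dichotomy as $\nxt{e}$ crosses the multiplicity $e$.
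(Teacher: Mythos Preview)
Your computation is clean and the case analysis is correct, but there is one genuine gap at the very start. The second Feng-Rao distance is
\[
\frdist{2}{\nxte{\Gamma}{\nxt{e}}}{\nxt{c}+\nxt{e}-1}=\min\{\cardsetdiv{\nxte{\Gamma}{\nxt{e}}}{m_1,m_2}\mid \nxt{c}+\nxt{e}-1\le m_1<m_2\},
\]
and you have only computed the minimum over pairs with $m_1=\nxt{c}+\nxt{e}-1$. \cref{hastamultiplicidad} bounds $m_2$ once $m_1$ is fixed; it does not allow you to fix $m_1$. You must also rule out pairs with $m_1\ge \nxt{c}+\nxt{e}$. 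This is easy: for such $m_1$, \cref{boundsForNDiv} gives $\cardsetdiv{\nxte{\Gamma}{\nxt{e}}}{m_1,m_2}\ge \cardsetdiv{\nxte{\Gamma}{\nxt{e}}}{m_1}+1\ge \frdist{1}{\nxte{\Gamma}{\nxt{e}}}{\nxt{c}+\nxt{e}}+1=5$, using \cref{th:CFM} and $e>1$ to get $\frdist{1}{\nxte{\Gamma}{\nxt{e}}}{\nxt{c}+\nxt{e}}=4$. With that one line added, your argument is complete.

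Once that gap is closed, your route genuinely differs from the paper's. The paper argues indirectly: it uses \cref{boundsForNDiv} and \cref{th:CFM} to sandwich the answer in $\{4,5\}$, and then characterises when $4$ is achieved by forcing $m'-m=k+1\in\nxte{\Gamma}{\nxt{e}}$, which pins $\nxt{e}=e$. You instead compute $\cardsetdiv{\nxte{\Gamma}{\nxt{e}}}{m,m'}$ exactly for each admissible $m'$ via \cref{divisorsLem} and inclusion--exclusion, reducing everything to the single fact $\cardsetdiv{\Gamma}{c+e-1}=2$, which you extract from the Arf/maximal embedding dimension hypothesis. Your approach is more explicit and makes the role of the irreducibility of $c+e-1$ completely transparent; the paper's is shorter because it leverages \cref{th:CFM} to avoid tabulating $f(k)$. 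Both are valid, and your version arguably isolates the combinatorial content more cleanly.
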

\begin{proof}
	Take \(m=\nxt{c}+\nxt{e}-1\) and \(m'> \nxt{c}+\nxt{e}-1\) in \cref{boundsForNDiv}, which yields
	\[
	\cardsetdiv{\nxte{\Gamma}{\nxt{e}}}{m'}\leq \cardsetdiv{\nxte{\Gamma}{\nxt{e}}}{\nxt{c}+\nxt{e}-1,m'} \leq \cardsetdiv{\Gamma}{m'} + 1.
	\]
	By taking minimums we obtain
	\[
	\frdist{1}{\nxte{\Gamma}{\nxt{e}}}{\nxt{c}+\nxt{e}} \leq \min\finset{\cardsetdiv{\nxte{\Gamma}{\nxt{e}}}{\nxt{c}+\nxt{e}-1,m'}\mid m'>\nxt{c}+\nxt{e}-1} \leq \frdist{1}{\nxte{\Gamma}{\nxt{e}}}{\nxt{c}+\nxt{e}}+1.
	\]
	Notice that for \(m\ge \nxt{c}+\nxt{e}\) and \(m'>m\), we have
	\[
	\cardsetdiv{\nxte{\Gamma}{\nxt{e}}}{m,m'}\ge \cardsetdiv{\nxte{\Gamma}{\nxt{e}}}{m}+1 \ge \frdist{1}{\nxte{\Gamma}{\nxt{e}}}{\nxt{c}+\nxt{e}}+1.
	\]
	This implies that $\frdist{2}{\nxte{\Gamma}{\nxt{e}}}{\nxt{c}+\nxt{e}-1}$ is attained in \(\cardsetdiv{\nxte{\Gamma}{\nxt{e}}}{\nxt{c}+\nxt{e}-1,m'}\) for some \(m'>m\). Thus
	\[
	\frdist{1}{\nxte{\Gamma}{\nxt{e}}}{\nxt{c}+\nxt{e}}\leq \frdist{2}{\nxte{\Gamma}{\nxt{e}}}{\nxt{c}+\nxt{e}-1} \leq \frdist{1}{\nxte{\Gamma}{\nxt{e}}}{\nxt{c}+\nxt{e}}+1.
	\]
	Since \(e>1\), $\nxte{\Gamma}{\nxt{e}}=\{0,\nxt{e},e+\nxt{e},\ldots\}$. By using \cref{th:CFM}, we obtain \(\frdist{1}{\nxte{\Gamma}{\nxt{e}}}{\nxt{c}+\nxt{e}}=4\), and consequently 
	\[
	\frdist{2}{\nxte{\Gamma}{\nxt{e}}}{\nxt{c}+\nxt{e}-1}\in\{4,5\}.
	\]
	Now \(\frdist{2}{\nxte{\Gamma}{\nxt{e}}}{\nxt{c}+\nxt{e}-1}=4\) if and only if exists $m'\in \mathbb Z$ with \(\nxt{c}+\nxt{e}\leq m'\leq \nxt{c}+\nxt{e}+e-1\) (the upper bound of $m'$ comes from \cref{hastamultiplicidad}) such that \(\cardsetdiv{\nxte{\Gamma}{\nxt{e}}}{\nxt{c}+\nxt{e}-1,m'}=4\). As \(4= \cardsetdiv{\nxte{\Gamma}{\nxt{e}}}{\nxt{c}+\nxt{e}-1,m'}\ge \cardsetdiv{\nxte{\Gamma}{\nxt{e}}}{m'}\ge \frdist{1}{\nxte{\Gamma}{\nxt{e}}}{\nxt{c}+\nxt{e}}=4\), we deduce $\cardsetdiv{\nxte{\Gamma}{\nxt{e}}}{m'}=4$ and \(\nxt{c}+\nxt{e}-1\in \setdiv{\nxte{\Gamma}{\nxt{e}}}{m'}\). Write \(m'=\nxt{c}+\nxt{e}+k\) with \(0\leq k \leq e-1\). This would mean that
	\(m'-(\nxt{c}+\nxt{e}-1)=k+1 \in \nxte{\Gamma}{\nxt{e}}=\finset{0,\nxt{e},e+\nxt{e},\ldots}\). This can only be the case if \(\nxt{e}=e=k-1\). 
\end{proof}

The next step will be describing \(\frdist{2}{\nxte{\Gamma}{\nxt{e}}}{\nxt{c}+\nxt{e}-2}\). 

\begin{lemma}\label{same-e}
	Let $\Gamma$ be an Arf numerical semigroup with multiplicity $e$, and let $\nxt{e}\in \Gamma^*$.
	\begin{itemize}
		\item If \(e=\nxt{e}\), we have
		\[
		\frdist{2}{\nxte{\Gamma}{\nxt{e}}}{\nxt{c}+\nxt{e}-2}=
		\begin{cases}
		3 & \text{ if } \nxt{c}-2\notin \nxte{\Gamma}{\nxt{e}},\\
		4 & \text{ otherwise}.
		\end{cases}
		\]
		
		\item If \(e<\nxt{e}\), then
		\[
		\frdist{2}{\nxte{\Gamma}{\nxt{e}}}{\nxt{c}+\nxt{e}-2}=
		\begin{cases}
		3 & \text{ if } \nxt{c}-2\notin \nxte{\Gamma}{\nxt{e}},\\
		4 & \text{ if } \nxt{c}-2\in \nxte{\Gamma}{\nxt{e}} \text{ and } \nxt{c}=\nxt{e}+2,\\
		5 & \text{ otherwise}.
		\end{cases}
		\]
	\end{itemize}
	
\end{lemma}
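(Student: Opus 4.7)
The plan is to combine the upper bound coming from monotonicity of $\simfrdist{\cdot}$ together with \cref{frdistIncut}, and the lower bound coming from the characterization of when $\simfrdist{\nxt{c}+\nxt{e}-2}=3$ supplied by \cref{coro3}. Since $\nxte{\Gamma}{\nxt{e}}$ is itself an Arf numerical semigroup of conductor $\nxt{c}$ and multiplicity $\nxt{e}$, \cref{coro3} applied to $\nxte{\Gamma}{\nxt{e}}$ yields directly that $\simfrdist{\nxt{c}+\nxt{e}-2}=3$ if and only if $\nxt{c}-2\notin \nxte{\Gamma}{\nxt{e}}$, and otherwise $\simfrdist{\nxt{c}+\nxt{e}-2}\geq 4$ (the degenerate situation $\nxt{e}=2$ is covered by \cref{caseMultiplicity2} instead). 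Since restricting the argument to a larger value shrinks the set over which the minimum defining $\simfrdist{\cdot}$ is taken, the distance is non-decreasing in its argument, so I get the cheap upper bound $\simfrdist{\nxt{c}+\nxt{e}-2}\leq \simfrdist{\nxt{c}+\nxt{e}-1}$, which \cref{frdistIncut} evaluates to $4$ when $e=\nxt{e}$ and $5$ otherwise.

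Case 1 ($e=\nxt{e}$) is then immediate: the upper bound $4$ combines with the lower bound $\geq 4$ from \cref{coro3} to pin the value at exactly $4$ whenever $\nxt{c}-2\in \nxte{\Gamma}{\nxt{e}}$. For Case 2 ($e<\nxt{e}$) with $\nxt{c}-2\in \nxte{\Gamma}{\nxt{e}}$, the upper bound $5$ is in hand, and I will split according to whether $\nxt{c}=\nxt{e}+2$. If $\nxt{c}=\nxt{e}+2$ (equivalently $c=2$), I would compute directly that
\[
\setdiv{\nxte{\Gamma}{\nxt{e}}}{2\nxt{e},2\nxt{e}+1}=\{0,\nxt{e},2\nxt{e},2\nxt{e}+1\},
\]
using that the elements of $\nxte{\Gamma}{\nxt{e}}$ strictly between $0$ and $\nxt{c}=\nxt{e}+2$ are exactly $\nxt{e}$ so the would-be cross-divisors do not lie in the semigroup. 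This realizes the lower bound $4$ from \cref{coro3} and forces equality.

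For the remaining subcase ($\nxt{c}\neq \nxt{e}+2$, so $c\geq 3$ and $c-2\in \Gamma$) I aim at the lower bound $\geq 5$. The key observation is that $0,\nxt{e},\nxt{c}-2,\nxt{c}+\nxt{e}-2$ are four \emph{distinct} elements of $\setdiv{\nxte{\Gamma}{\nxt{e}}}{\nxt{c}+\nxt{e}-2}$; distinctness of $\nxt{e}$ and $\nxt{c}-2$ uses exactly that $\nxt{c}\neq \nxt{e}+2$, and these elements indeed divide $\nxt{c}+\nxt{e}-2$ because $\nxt{c}+\nxt{e}-2-\nxt{e}=\nxt{c}-2\in \nxte{\Gamma}{\nxt{e}}$ by hypothesis. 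Hence $\cardsetdiv{\nxte{\Gamma}{\nxt{e}}}{\nxt{c}+\nxt{e}-2}\geq 4$, and by \cref{boundsForNDiv} one gets $\cardsetdiv{\nxte{\Gamma}{\nxt{e}}}{\nxt{c}+\nxt{e}-2,m_2}\geq 5$ for every $m_2>\nxt{c}+\nxt{e}-2$. For pairs with $m_1\geq \nxt{c}+\nxt{e}-1$, the bound $\cardsetdiv{}{m_1,m_2}\geq 5$ is supplied directly by $\simfrdist{\nxt{c}+\nxt{e}-1}=5$ from \cref{frdistIncut}. Combining both situations yields $\simfrdist{\nxt{c}+\nxt{e}-2}\geq 5$, and so equality.

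The main obstacle is precisely this last subcase: a naive attempt would fix $m_1=\nxt{c}+\nxt{e}-2$ and bound its divisors, but this leaves open the possibility that some pair based at a later (possibly irreducible) element of $\nxte{\Gamma}{\nxt{e}}$ could achieve cardinality $4$. The trick that makes the argument clean is to recycle the already-proven value $\simfrdist{\nxt{c}+\nxt{e}-1}=5$ from \cref{frdistIncut} to handle all such larger starting points uniformly, reducing the remaining work to counting divisors of the single element $\nxt{c}+\nxt{e}-2$, which is easy thanks to the explicit four-element subset exhibited above.
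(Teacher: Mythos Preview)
Your proof is correct and follows essentially the same strategy as the paper: use \cref{coro3} to settle the ``$=3$'' case, bound above by monotonicity and \cref{frdistIncut}, and in the subcase $e<\nxt{e}$ with $\nxt{c}-2\in\nxte{\Gamma}{\nxt{e}}$ split the minimum according to whether $m_1=\nxt{c}+\nxt{e}-2$ or $m_1\ge \nxt{c}+\nxt{e}-1$, handling the latter via \cref{frdistIncut} and the former via \cref{boundsForNDiv} together with the explicit four divisors $0,\nxt{e},\nxt{c}-2,\nxt{c}+\nxt{e}-2$. The only cosmetic difference is that for $\nxt{c}=\nxt{e}+2$ you exhibit a concrete pair $(2\nxt{e},2\nxt{e}+1)$ realizing cardinality $4$, whereas the paper instead shows $\cardsetdiv{\nxte{\Gamma}{\nxt{e}}}{\nxt{c}+\nxt{e}-2}=3$ and then invokes the same decomposition; both routes are equivalent.
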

\begin{proof}
	By \cref{coro3} we already know that \(\frdist{2}{\nxte{\Gamma}{\nxt{e}}}{\nxt{c}+\nxt{e}-2}=3\) if and only if $\nxt{c}-2\not\in \nxte{\Gamma}{\nxt{e}}$. 
	
	If \(e=\nxt{e}\) and $\nxt{c}-2\in\Gamma$, then \cref{frdistIncut} and \cref{coro3} ensure that $\frdist{2}{\nxte{\Gamma}{\nxt{e}}}{\nxt{c}+\nxt{e}-2}=4$.
	
	Assume that \(e<\nxt{e}\) and \(\nxt{c}-2\in \nxte{\Gamma}{\nxt{e}}\). Then by \cref{frdistIncut}, the only possibilities for \(\frdist{2}{\nxte{\Gamma}{\nxt{e}}}{\nxt{c}+\nxt{e}-2}\) are 4 and 5. By definition, 
	\begin{align*}
		\frdist{2}{\nxte{\Gamma}{\nxt{e}}}{\nxt{c}+\nxt{e}-2} =&
		\min\{\cardsetdiv{\nxte{\Gamma}{\nxt{e}}}{m,m'}\mid \nxt{c}+\nxt{e}-2\le m<m'\} \\ 
		= & \min\{\min\{ \cardsetdiv{\nxte{\Gamma}{\nxt{e}}}{\nxt{c}+\nxt{e}-2,m'}\mid \nxt{c}+\nxt{e}-2<m'\}, \\ 
		& \phantom{\min\{} \min\{  \cardsetdiv{\nxte{\Gamma}{\nxt{e}}}{m,m'}\mid \nxt{c}+\nxt{e}-1\le m<m'\} \}\\ 
		=& \min\{ \min\{ \cardsetdiv{\nxte{\Gamma}{\nxt{e}}}{\nxt{c}+\nxt{e}-2,m'}\mid \nxt{c}+\nxt{e}-2<m'\},\\
		& \phantom{\min\{}\frdist{2}{\nxte{\Gamma}{\nxt{e}}}{\nxt{c}+\nxt{e}-1} \}.
	\end{align*}
	In light of \cref{boundsForNDiv}, $\cardsetdiv{\nxte{\Gamma}{\nxt{e}}}{\nxt{c}+\nxt{e}-2,m'}\ge \cardsetdiv{\nxte{\Gamma}{\nxt{e}}}{\nxt{c}+\nxt{e}-2}+1$, and by \cref{frdistIncut}, $\frdist{2}{\nxte{\Gamma}{\nxt{e}}}{\nxt{c}+\nxt{e}-1}=5$. Consequently
	\[
	\frdist{2}{\nxte{\Gamma}{\nxt{e}}}{\nxt{c}+\nxt{e}-2}\ge \min\finset{5,\cardsetdiv{\nxte{\Gamma}{\nxt{e}}}{\nxt{c}+\nxt{e}-2}+1}.
	\]
	Thus $\frdist{2}{\nxte{\Gamma}{\nxt{e}}}{\nxt{c}+\nxt{e}-2}=4$, if and only if \(\cardsetdiv{\nxte{\Gamma}{\nxt{e}}}{\nxt{c}+\nxt{e}-2}=3\). Now, since \(\nxt{c}-2\in \nxte{\Gamma}{\nxt{e}}\), we have that \(0,\nxt{e},\nxt{c}-2\) and \(\nxt{c}+\nxt{e}-2\) are in $\setdiv{\nxte{\Gamma}{\nxt{e}}}{\nxt{c}+\nxt{e}-2}$. So, necessarily we will have \(\nxt{e}=\nxt{c}-2\). It is easy to see that when \(\nxt{e}=\nxt{c}-2\),  \(\nxte{\Gamma}{\nxt{e}}=\{0,\nxt{e},\nxt{e}+2,\rightarrow\}\) and we have
	\[
	\cardsetdiv{\nxte{\Gamma}{\nxt{e}}}{\nxt{c}+\nxt{e}-2}=3.\qedhere
	\]
\end{proof}

We can summarize the main results of this section as follows. 


\begin{theorem}\label{th:ce}
	Let $\Gamma$ be an Arf numerical semigroup, with multiplicity $e=\rho_{2}$ and conductor $c=\rho_{r}$. 
	\begin{itemize}
		\item If \(e=2\), then for \(m\in \Gamma\) with \(2\leq m \leq c+1\),
		\(
		\simfrdist{m}=\begin{cases} 3 & \text{ if } m=2,\\ 4 & \text{ if } m>2.\end{cases} 
		\)
		\item If \(e>2\), then we have:
		\begin{enumerate}[(1)]
			\item for \(m\in \Gamma\), with \(e\leq m \leq c+e-3\),
			\( \simfrdist{m}=3\);
			\item If $\rho_{3}=2\rho_{2}$, then 
			\[
			\simfrdist{c+e-2}=
			\left\{
			\begin{array}{ll}
			3 & \mbox{if $\rho_{r-1}<c-2$}, \\
			4 & \mbox{if $\rho_{r-1}=c-2$},
			\end{array}
			\right.
			\]
			and $\simfrdist{c+e-1}=4$;
			\item if $\rho_{3}<2\rho_{2}$, then 
			\[
			\delta^{2}_{FR}(c+e-2)=
			\left\{
			\begin{array}{ll}
			3 & \mbox{if $\rho_{r-1}<c-2$}, \\
			4 & \mbox{if $\rho_{r-1}=c-2$ and $r=3$}, \\
			5 & \mbox{if $\rho_{r-1}=c-2$ and $r>3$}, 
			\end{array}
			\right.
			\]
			and 
			\[
			\simfrdist{c+e-1}=
			\begin{cases}
			4 & \text{ if } r=2,\\
			5 & \text{ if } r>2.
			\end{cases}
			\] 
		\end{enumerate}
	\end{itemize}
\end{theorem}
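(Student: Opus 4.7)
The plan is to assemble the theorem from the lemmas and corollaries of this section, using the dictionary that if $\Gamma$ is written as $\nxte{\Gamma'}{\nxt e}$ with $\nxt e=e$ and $\Gamma'$ the Arf semigroup with multiplicity sequence $(d_{2},\ldots,d_{r-1})$ (cf.\ Remark~\ref{Arf-iterative}), then $\nxt c=c$, the multiplicity of $\Gamma'$ equals $d_{2}=\rho_{3}-\rho_{2}$, and the dichotomy $\rho_{3}=2\rho_{2}$ vs.\ $\rho_{3}<2\rho_{2}$ translates exactly into $\mathrm e(\Gamma')=\nxt e$ vs.\ $\mathrm e(\Gamma')<\nxt e$, which is the hypothesis that governs Lemmas~\ref{frdistIncut} and~\ref{same-e}. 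The additional sub-condition $\nxt c=\nxt e+2$ in Lemma~\ref{same-e} becomes $c=e+2$.

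The case $e=2$ is exactly Lemma~\ref{caseMultiplicity2}, and the range $[e,c+e-3]$ for $e>2$ is the first half of Corollary~\ref{coro3}; no new work is needed for these two pieces.

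For $\simfrdist{c+e-2}$ the second half of Corollary~\ref{coro3} settles the situation $c-2\notin\Gamma$ with value $3$, and Remark~\ref{consecutivos} supplies the useful equivalence $c-2\in\Gamma\Leftrightarrow\rho_{r-1}=c-2$ (since otherwise $c-1$ and $c$ would both be in $\Gamma$, forcing $c-1\ge c$). When $c-2\in\Gamma$ I would feed $\Gamma=\nxte{\Gamma'}{e}$ into Lemma~\ref{same-e}: case~(2) lands in the first branch and yields $4$; in case~(3) the sub-condition $c=e+2$ combined with $\rho_{r-1}=c-2=e=\rho_{2}$ is equivalent to $r=3$, producing $4$ for $r=3$ and $5$ for $r>3$. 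Verifying that these three sufficient conditions in Lemma~\ref{same-e} align with the three clauses tabulated in case~(3) is the step that needs the most care, and is where I expect the bookkeeping obstacle to sit.

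For $\simfrdist{c+e-1}$ I would apply Lemma~\ref{frdistIncut} to $\Gamma=\nxte{\Gamma'}{e}$: the output is $4$ when $\mathrm e(\Gamma')=e$ (case~(2)) and $5$ when $\mathrm e(\Gamma')<e$ (case~(3) with $r>2$). The only remaining base case is $r=2$, where $\Gamma=\{0\}\cup(e+\mathbb N)=\nxte{\mathbb N}{e}$, so $\Gamma'=\mathbb N$ has multiplicity $1$ and Lemma~\ref{frdistIncut} does not apply; for this I use Lemma~\ref{principioordinarios} directly, which is valid since $e>2$ and yields $4$, matching the $r=2$ clause of case~(3). Note also that $r=2$ cannot occur in case~(2), because $\rho_{3}=2\rho_{2}$ with $\Gamma=\{0\}\cup(e+\mathbb N)$ would force $e+1=2e$, i.e.\ $e=1$, against the hypothesis $e>2$. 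Putting these blocks together reproduces the full tabulation in the statement.
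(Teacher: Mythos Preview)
Your proposal is correct and follows essentially the same route as the paper: the paper's proof is a one-line reference to Corollary~\ref{coro3} and Lemmas~\ref{frdistIncut} and~\ref{same-e}, together with the observations that $\nxt\rho_{3}=e+\nxt e$ (which is your translation $\rho_{3}=2\rho_{2}\Leftrightarrow \mathrm e(\Gamma')=\nxt e$) and that $\rho_{r-1}\le c-2$ always. You are in fact a bit more careful than the paper, since you explicitly invoke Lemma~\ref{principioordinarios} for the ordinary case $r=2$, where the hypothesis $\mathrm e(\Gamma')>1$ of Lemma~\ref{frdistIncut} fails; the paper's terse proof glosses over this base case. One tiny slip: the multiplicity sequence of $\Gamma'$ is $(d_{2},\ldots,d_{r})$ (ending in $d_{r}=1$), not $(d_{2},\ldots,d_{r-1})$, but this does not affect anything you use, since only $\mathrm e(\Gamma')=d_{2}$ matters.
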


\begin{proof}
	Apply Corollary \ref{coro3} and Lemmas \ref{frdistIncut} and \ref{same-e}, 
	and just notice that $\nxt{\rho}_{3}=e+\nxt{e}$ and that always $\rho_{r-1} \leq c-2$. 
\end{proof}

\subsection{The second Feng-Rao distance for \(m\ge c+e\)}

For a semigroup of the form 
\(\nxte{\Gamma}{\nxt{e}}=\finset{0}\cup\left(\nxt{e}+\Gamma\right)\), we first compute $\delta^{2}_{FR}(m)$ for $m$ in the interval $[\nxt{c},\nxt{c}+\nxt{e}-1]$ with the aid of the previous paragraph, and now we are going to see how to compute it for $m \ge \nxt{c}+\nxt{e}$ in terms of $\Gamma$, so that we can iterate the procedure to get the values in the whole interval $[\nxt{c},2\nxt{c}-1]$. 

\begin{lemma}
	Let $\Gamma$ be an Arf numerical semigroup with multiplicity $e$ and conductor $c$. Let $\nxt{e}\in\Gamma\setminus\{0,e\}$ and $\nxt{c}=c+\nxt{e}$. Then
	\[
	\frdist{2}{\nxte{\Gamma}{\nxt{e}}}{\nxt{c}+\nxt{e}+k}=\frdist{2}{\Gamma}{c+k}+3
	\]
	for all $k\in\mathbb N$.
\end{lemma}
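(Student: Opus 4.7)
The plan is to relate pairs $(M,M')$ in $\nxte{\Gamma}{\nxt{e}}$ with $M\ge \nxt{c}+\nxt{e}+k$ to pairs $(m,m')$ in $\Gamma$ with $m\ge c+k$ via the translation $M=2\nxt{e}+m$, $M'=2\nxt{e}+m'$, and then apply \cref{generalLema}. Every element of $\nxte{\Gamma}{\nxt{e}}$ above $\nxt{c}+\nxt{e}+k=2\nxt{e}+(c+k)$ is of that form with $m\in\Gamma$, $m\ge c+k$. Since $\nxte{\Gamma}{\nxt{e}}$ has multiplicity $\nxt{e}$, \cref{hastamultiplicidad} lets me restrict attention to pairs satisfying $c+k\le m<m'\le m+\nxt{e}$.

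The argument then splits according to whether $m'-m<\nxt{e}$ (Case 1) or $m'=m+\nxt{e}$ (Case 2). In Case 1, \cref{generalLema} gives directly $\cardsetdiv{\nxte{\Gamma}{\nxt{e}}}{M,M'}=\cardsetdiv{\Gamma}{m,m'}+3$. Because $\nxt{e}\in\Gamma\setminus\finset{0,e}$ forces $\nxt{e}>e$, a second use of \cref{hastamultiplicidad}, this time applied to $\Gamma$ itself, shows that the minimum of $\cardsetdiv{\Gamma}{m,m'}$ over $c+k\le m<m'$ is already attained with $m'-m\le e<\nxt{e}$; hence the constraint of Case 1 is non-binding and this case contributes exactly $\frdist{2}{\Gamma}{c+k}+3$ to the overall minimum.

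The main obstacle will be Case 2, where the union in \cref{generalLema} is no longer disjoint: the element $2\nxt{e}+m=\nxt{e}+m'$ already belongs to $\nxt{e}+\setdiv{\Gamma}{m'}$. A short inspection of the overlap, combined with $\setdiv{\Gamma}{m,m+\nxt{e}}=\setdiv{\Gamma}{m+\nxt{e}}$ (a direct consequence of \cref{divisorsContained}), yields $\cardsetdiv{\nxte{\Gamma}{\nxt{e}}}{M,M'}=\cardsetdiv{\Gamma}{m+\nxt{e}}+2$. To finish I need $\cardsetdiv{\Gamma}{m+\nxt{e}}\ge \frdist{2}{\Gamma}{c+k}+1$. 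The trick I would use is to evaluate $\frdist{2}{\Gamma}{c+k}$ on the pair $(m,m+\nxt{e}-e)$: since $\nxt{e}-e\ge 1$ and $m\ge c$, both $m$ and $m+\nxt{e}-e$ lie in $\Gamma$ and are at least $c+k$, and both divide $m+\nxt{e}$ (because $\nxt{e},e\in\Gamma$). By \cref{divisorsContained}, $\setdiv{\Gamma}{m,m+\nxt{e}-e}\subseteq\setdiv{\Gamma}{m+\nxt{e}}$, while the element $m+\nxt{e}$ itself is a divisor of $m+\nxt{e}$ that lies outside this subset. This provides the required extra element and gives $\cardsetdiv{\Gamma}{m+\nxt{e}}\ge \frdist{2}{\Gamma}{c+k}+1$, so Case 2 also contributes at least $\frdist{2}{\Gamma}{c+k}+3$. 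Combining the two cases yields the claimed equality.
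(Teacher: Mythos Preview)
Your proof is correct and follows essentially the same route as the paper's. Both arguments use \cref{generalLema} and \cref{hastamultiplicidad} to reduce to the borderline case $m'-m=\nxt{e}$, and both dispatch that case by exploiting $\nxt{e}>e$: the paper invokes the strict inequality $\cardsetdiv{\Gamma}{m,m+\nxt{e}-e}<\cardsetdiv{\Gamma}{m,m+\nxt{e}}$ from the proof of \cref{hastamultiplicidad}, while you phrase the same inequality as $m+\nxt{e}\in\setdiv{\Gamma}{m+\nxt{e}}\setminus\setdiv{\Gamma}{m,m+\nxt{e}-e}$. Your explicit case split versus the paper's contradiction argument is a presentational difference only.
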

\begin{proof}
	By \cref{generalLema}, we can easily see that
	\[
	\frdist{2}{\Gamma}{c+k}+2\leq \frdist{2}{\nxte{\Gamma}{\nxt{e}}}{\nxt{c}+\nxt{e}+k} \leq \frdist{2}{\Gamma}{c+k}+3.
	\]
	Suppose that the first inequality is actually an equality and let \(k\leq i<j<i+\nxt{e}-1\) be a pair such that $ \frdist{2}{\nxte{\Gamma}{\nxt{e}}}{\nxt{c}+\nxt{e}+k}=\cardsetdiv{\nxte{\Gamma}{\nxt{e}}}{\nxt{c}+\nxt{e}+i,\nxt{c}+\nxt{e}+j}$ (\cref{hastamultiplicidad}). But now
	\begin{align*}
		\frdist{2}{\Gamma}{c+k}+2 &=\frdist{2}{\nxte{\Gamma}{\nxt{e}}}{\nxt{c}+\nxt{e}+k}\\
		&=\cardsetdiv{\nxte{\Gamma}{\nxt{e}}}{\nxt{c}+\nxt{e}+i,\nxt{c}+\nxt{e}+j}\\
		&\ge\cardsetdiv{\Gamma}{c+i,c+j}+2\\
		&\ge \frdist{2}{\Gamma}{c+k}+2.    
	\end{align*}
	This means that the inequalities are all equalities and we will have
	\[
	\cardsetdiv{\nxte{\Gamma}{\nxt{e}}}{\nxt{c}+\nxt{e}+i,\nxt{c}+\nxt{e}+j}=\cardsetdiv{\Gamma}{c+i,c+j}+2,
	\]
	which happens if and only if \(c+\nxt{e}+i\in \setdiv{\Gamma}{c+j}\). Since
	\[
	c+j-(c+\nxt{e}+i)=(j-i)-\nxt{e}\leq 0,
	\]
	this can only be the case if \(j=i+\nxt{e}\). But we will also have
	\[
	\cardsetdiv{\Gamma}{c+i,c+j}=\frdist{2}{\Gamma}{c+k},
	\]
	which is impossible because \(j-i=\nxt{e}>e\) ($\cardsetdiv{\Gamma}{c+i,c+j-e}< \cardsetdiv{\Gamma}{c+i,c+j}$, see the proof of \cref{hastamultiplicidad}).
\end{proof}

We now focus on the case \(\nxt{e}=e\).

\begin{lemma}
	Let $\Gamma$ be a numerical semigroup with multiplicity $e$ and conductor $c$. Let $\nxt{c}=c+e$. Then, for any \(k\in \mathbb N\), the following conditions are equivalent.
	\begin{enumerate}
		\item \(\frdist{2}{\nxte{\Gamma}{{e}}}{\nxt{c}+e+k}=\frdist{2}{\Gamma}{c+k}+2\).
		\item \(\frdist{1}{\Gamma}{c+e+k}=\frdist{2}{\Gamma}{c+k}\).
	\end{enumerate}  
\end{lemma}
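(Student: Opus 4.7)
The plan is to compute both sides directly. Parametrize any pair $m<m'$ in $\nxte{\Gamma}{e}$ with $m\geq \nxt{c}+e+k=c+2e+k$ as $m=2e+s$, $m'=2e+s'$ with $s,s'\in\Gamma$ and $s\geq c+k$. By \cref{hastamultiplicidad} applied inside $\nxte{\Gamma}{e}$ (whose multiplicity is $e$), it suffices to consider $0<s'-s\leq e$. Apply \cref{generalLema}: since $\nxt{e}=e$, the union
\[
\setdiv{\nxte{\Gamma}{e}}{2e+s,2e+s'}=\bigl(e+\setdiv{\Gamma}{s,s'}\bigr)\cup\{0,2e+s,2e+s'\}
\]
is disjoint when $s'-s<e$, yielding $\cardsetdiv{\nxte{\Gamma}{e}}{m,m'}=\cardsetdiv{\Gamma}{s,s'}+3$.

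The interesting case is $s'=s+e$. Here \cref{divisorsContained} gives $\setdiv{\Gamma}{s,s'}=\setdiv{\Gamma}{s+e}$, and a direct check shows that $2e+s=e+s'$ lies in $e+\setdiv{\Gamma}{s+e}$, while $2e+s'$ does not (because $-e\notin\Gamma$) and $0$ obviously does not. Hence
\[
\cardsetdiv{\nxte{\Gamma}{e}}{m,m'}=\cardsetdiv{\Gamma}{s+e}+2.
\]

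Setting
\[
A:=\min\bigl\{\cardsetdiv{\Gamma}{s,s'}\,\big|\, s\geq c+k,\ s<s'<s+e\bigr\},
\]
the two cases combine to
\[
\frdist{2}{\nxte{\Gamma}{e}}{\nxt{c}+e+k}=\min\{A+3,\ \frdist{1}{\Gamma}{c+k+e}+2\}.
\]
On the other hand, applying \cref{hastamultiplicidad} in $\Gamma$ and splitting again according to whether $s'-s<e$ or $s'=s+e$ (in the latter case $\cardsetdiv{\Gamma}{s,s+e}=\cardsetdiv{\Gamma}{s+e}$), we get
\[
\frdist{2}{\Gamma}{c+k}=\min\{A,\ \frdist{1}{\Gamma}{c+k+e}\}.
\]

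With these two formulas the equivalence is an elementary comparison. If $\frdist{1}{\Gamma}{c+k+e}\leq A$, then $\frdist{2}{\Gamma}{c+k}=\frdist{1}{\Gamma}{c+k+e}$ and $\frdist{2}{\nxte{\Gamma}{e}}{\nxt{c}+e+k}=\frdist{1}{\Gamma}{c+k+e}+2=\frdist{2}{\Gamma}{c+k}+2$, giving both (1) and (2). Conversely, if $\frdist{1}{\Gamma}{c+k+e}>A$, then $\frdist{2}{\Gamma}{c+k}=A$ and $\frdist{2}{\nxte{\Gamma}{e}}{\nxt{c}+e+k}=A+3=\frdist{2}{\Gamma}{c+k}+3$, so neither (1) nor (2) holds. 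The main technical obstacle is the overlap count in the non-disjoint case $s'=s+e$: one has to check carefully which of the three ``extra'' elements $\{0,2e+s,2e+s'\}$ already appear in $e+\setdiv{\Gamma}{s+e}$, which is precisely where the hypothesis $\nxt{e}=e$ (ensuring $s$ divides $s+e$ in $\Gamma$) is used and where this lemma departs from the preceding one for $\nxt{e}\neq e$.
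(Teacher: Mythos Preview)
Your proof is correct. Both arguments rest on the same ingredients (\cref{generalLema}, \cref{hastamultiplicidad}, \cref{divisorsContained}, and the collapse $\setdiv{\Gamma}{s,s+e}=\setdiv{\Gamma}{s+e}$), but you organise them differently from the paper. The paper proves the two implications separately by chasing inequalities: assuming (1), it forces the optimal pair to satisfy $j=i+e$ and deduces (2); assuming (2), it exhibits a pair achieving the value $\frdist{2}{\Gamma}{c+k}+2$. You instead split the minimisation over $s'-s<e$ versus $s'=s+e$, obtain the closed forms
\[
\frdist{2}{\Gamma}{c+k}=\min\{A,B\},\qquad \frdist{2}{\nxte{\Gamma}{e}}{\nxt{c}+e+k}=\min\{A+3,B+2\}
\]
with $B=\frdist{1}{\Gamma}{c+e+k}$, and then observe that both (1) and (2) are equivalent to $B\le A$. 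Your route makes the dichotomy completely transparent and yields the $+2$/$+3$ alternative in \cref{th:further} for $\nxt{e}=e$ as an immediate byproduct; the paper's route is shorter on the page since it never names $A$ explicitly. One small omission: when $e=1$ (so $\Gamma=\mathbb{N}$) your set defining $A$ is empty; the statement is then trivially true, but strictly speaking you should note this case separately.
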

\begin{proof}
	Notice that \(\frdist{1}{\Gamma}{c+e+k}\ge \frdist{2}{\Gamma}{c+k}\). To see this, let $p$ be an integer greater than or equal to $k$ such that $\cardsetdiv{\Gamma}{c+e+p}=\frdist{1}{\Gamma}{c+e+k}$. Then 
	\[
	\frdist{2}{\Gamma}{c+k}\leq \cardsetdiv{\Gamma}{c+p,c+e+p}=\cardsetdiv{\Gamma}{c+e+p}=\frdist{1}{\Gamma}{c+e+k}.
	\]
	
	Suppose \(\frdist{2}{\nxte{\Gamma}{{e}}}{\nxt{c}+e+k}=\frdist{2}{\Gamma}{c+k}+2\). Then there must exist integers $i$ and $j$ with
	\(k\leq i <j \leq i+e\) (\cref{hastamultiplicidad}) such that \(\cardsetdiv{\nxte{\Gamma}{e}}{\nxt{c}+e+i,\nxt{c}+e+j}=\frdist{2}{\Gamma}{c+k}+2\). By \cref{generalLema} we have
	\[
	\cardsetdiv{\nxte{\Gamma}{e}}{\nxt{c}+e+i,\nxt{c}+e+j}\geq \cardsetdiv{\Gamma}{c+i,c+j}+2\geq \frdist{2}{\Gamma}{c+k}+2,
	\]
	so these inequalities must all be equalities. This can only happen if \(j=i+e\) (\cref{generalLema}), and then
	\(\cardsetdiv{\nxte{\Gamma}{{e}}}{\nxt{c}+e+i,\nxt{c}+e+j}=\cardsetdiv{\nxte{\Gamma}{{e}}}{\nxt{c}+2e+i}=\frdist{2}{\Gamma}{c+k}+2\), which implies
	\(\cardsetdiv{\Gamma}{c+e+i}=\frdist{2}{\Gamma}{c+k}\)(see \cref{divisorsLem}). So the following inequalities
	\[
	\frdist{2}{\Gamma}{c+k}\leq \frdist{1}{\Gamma}{c+e+k}\leq \cardsetdiv{\Gamma}{c+e+i}
	\]
	must all be equalities.
	
	For the converse, suppose \(\frdist{1}{\Gamma}{c+e+k}=\frdist{2}{\Gamma}{c+k}\). Then there exists an integer $i$ with  \(k\leq i\) such that \(\cardsetdiv{\Gamma}{c+e+i}=\frdist{2}{\Gamma}{c+k}\). But then by \cref{divisorsLem}
	\begin{align*}
		\frdist{2}{\Gamma}{c+k}+2 &=\cardsetdiv{\Gamma}{c+e+i}+2=\cardsetdiv{\nxte{\Gamma}{{e}}}{\nxt{c}+2e+i}\\
		& =\cardsetdiv{\nxte{\Gamma}{{e}}}{\nxt{c}+e+i,\nxt{c}+2e+i}\ge \frdist{2}{\nxte{\Gamma}{{e}}}{\nxt{c}+e+k}.
	\end{align*}
	\cref{divisorsLem} yields \(\frdist{2}{\nxte{\Gamma}{{e}}}{\nxt{c}+e+k}\ge \frdist{2}{\Gamma}{c+k}+2\), and all these inequalities become equalities.
\end{proof}

We summarize now the results of this paragraph in the following theorem.

\begin{theorem}\label{th:further}
	Let $\Gamma$ be an Arf semigroup with conductor \(c\) and multiplicity \(e\). Let $\nxt{e}\in\Gamma^*$ and \(\nxt{c}=e+c\). For every $k\in \mathbb N$,
	\[
	\frdist{2}{\nxte{\Gamma}{\nxt{e}}}{\nxt{c}+\nxt{e}+k}=
	\begin{cases}
	\frdist{2}{\Gamma}{c+k} + 2, \text{ if } \nxt{e}=e \text{ and } \frdist{1}{\Gamma}{c+e+k}=\frdist{2}{\Gamma}{c+k},\\
	\frdist{2}{\Gamma}{c+k} + 3, \text{ otherwise.}
	\end{cases}
	\]
\end{theorem}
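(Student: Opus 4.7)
The theorem is a direct synthesis of the two preceding lemmas; the plan is to describe how the casework naturally arises from the interplay of \cref{generalLema} and \cref{hastamultiplicidad}.

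First I would apply \cref{hastamultiplicidad} in $\nxte{\Gamma}{\nxt{e}}$ (whose multiplicity is $\nxt{e}$) to restrict attention to minimizing pairs $(\nxt{c}+\nxt{e}+i,\nxt{c}+\nxt{e}+j)$ with $k\le i<j\le i+\nxt{e}$. Setting $m=c+i$ and $m'=c+j$, \cref{generalLema} gives
\[
\cardsetdiv{\nxte{\Gamma}{\nxt{e}}}{\nxt{c}+\nxt{e}+i,\nxt{c}+\nxt{e}+j}=\#\bigl((\nxt{e}+\setdiv{\Gamma}{m,m'})\cup\{0,2\nxt{e}+m,2\nxt{e}+m'\}\bigr),
\]
which equals $\cardsetdiv{\Gamma}{m,m'}+3$ whenever $j-i<\nxt{e}$ (the disjointness clause), and can drop to $\cardsetdiv{\Gamma}{m,m'}+2$ only when $j-i=\nxt{e}$, in which case $2\nxt{e}+m$ coincides with the lift $\nxt{e}+(\nxt{e}+m)$ of the element $\nxt{e}+m\in\setdiv{\Gamma}{m'}\subseteq\setdiv{\Gamma}{m,m'}$.

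For the case $\nxt{e}\ne e$, since $e$ is the smallest nonzero element of $\Gamma$ and $\nxt{e}\in\Gamma^*$ we have $\nxt{e}>e$. Any pair realizing $\frdist{2}{\Gamma}{c+k}$ can, by \cref{hastamultiplicidad} applied in $\Gamma$, be chosen with $j-i\le e<\nxt{e}$; the only way to attain the value $+2$ would then be with $j-i=\nxt{e}>e$, but this forces $\cardsetdiv{\Gamma}{c+i,c+j}>\frdist{2}{\Gamma}{c+k}$ (as $\cardsetdiv{\Gamma}{c+i,c+j-e}<\cardsetdiv{\Gamma}{c+i,c+j}$ by \cref{divisorsContained}). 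Hence the minimum in $\nxte{\Gamma}{\nxt{e}}$ is $\frdist{2}{\Gamma}{c+k}+3$, which is exactly the first of the two preceding lemmas.

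For the case $\nxt{e}=e$ the obstruction in the previous paragraph disappears, and the $+2$ value is possible precisely when the minimum is attained at a pair with $j=i+e$. By \cref{divisorsContained}, since $c+i$ divides $c+e+i$, we have $\cardsetdiv{\Gamma}{c+i,c+e+i}=\cardsetdiv{\Gamma}{c+e+i}$, so minimizing over $i\ge k$ gives $\frdist{1}{\Gamma}{c+e+k}$. The $+2$ case is therefore realized exactly when $\frdist{1}{\Gamma}{c+e+k}=\frdist{2}{\Gamma}{c+k}$, and otherwise the value is $+3$; this is the content of the second lemma. The main subtlety, already tracked inside those lemmas, is identifying when the generic disjointness in \cref{generalLema} is lost; once that bookkeeping is done, the theorem follows by direct assembly.
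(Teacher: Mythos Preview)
Your proposal is correct and follows essentially the same route as the paper: the paper's own proof is simply the sentence ``We summarize now the results of this paragraph in the following theorem,'' pointing back to the two lemmas you invoke, and your write-up just inlines the core of those two arguments (the $\nxt{e}>e$ case via \cref{hastamultiplicidad} and \cref{generalLema}, and the $\nxt{e}=e$ case via the identification $\cardsetdiv{\Gamma}{c+i,c+e+i}=\cardsetdiv{\Gamma}{c+e+i}$). Nothing substantive differs.
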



As a consequence, as long as we can compute the first and second Feng-Rao distances for $\Gamma$, we can iterate this process to obtain the second Feng-Rao distance for every Arf semigroup by means of a recursive algorithm. 

\begin{algorithm}\label{algor:FR2}
	Feng-Rao distances and numbers of Arf numerical semigroups.
	
	\noindent \textbf{Input:}  The multiplicity sequence $(d_1,\ldots,d_r)$ of an Arf numerical semigroup $\Gamma$.

	\noindent \textbf{Output:} $\simfrnumber{\Gamma}$, $\frdist{1}{\Gamma}{m}$, and $\frdist{2}{\Gamma}{m}$ for all \(m\in \Gamma\).
	
	\begin{enumerate}[1.]
		\item Set \(\Gamma(0)=\mathbb{N}\) with $\frdist{1}{\mathbb{N}}{m}=m+1$ and $\frdist{2}{\mathbb{N}}{m}=m+2$, for $m\in \mathbb{N}$,  $\simfrnumber{\mathbb{N}}=1$.   
		
		\item For $i\in\{1,\ldots,r-1\}$ do
		\begin{itemize}
			\item Set $\Gamma(i):=\nxte{\Gamma(i-1)}{d_{r-i}}$, $e_{i}:=\mathrm e(\Gamma(i))$, and $c_{i}:=\mathrm c(\Gamma(i))$. 
			\item Compute $\simfrnumber{\Gamma(i)}=\min\{d_{r-i},\simfrnumber{\Gamma(i-1)}+1\}$. 
			\item Compute $\frdist{2}{\Gamma({i})}{m}$ for $c \leq m <c+e$ by using Theorem \ref{th:ce}. \item Compute $\frdist{2}{\Gamma({i})}{c+e+k}$ for $k\ge 0$ by using Theorem \ref{th:further}. 
			\item Compute $\frdist{1}{\Gamma({i})}{m}$ for $m \ge 0$ by using Theorem \ref{th:CFM}. 
		\end{itemize}
	\end{enumerate}
\end{algorithm}

Next we illustrate the algorithm with an example.

\begin{example}\label{Ejemplo}
	
	Consider the Arf semigroup $\Gamma=\{0, 12, 24, 32, 36, 40,\to\}$. 
	We will apply Algorithm \ref{algor:FR2} to $\Gamma$, and compute step by step the following (Arf) semigroups. The multiplicity sequence of $\Gamma$ is \((12,12,8,4,4,1)\).
	
	\begin{enumerate}[(1)]
		
		\item We start with \(\Gamma(0)=\mathbb{N}\), with $\frdist{1}{\mathbb{N}}{m}=m+1$ and $\frdist{2}{\mathbb{N}}{m}=m+2$, for $m\in \mathbb{N}$,  $\simfrnumber{\mathbb{N}}=1$.  
		
		\item $\Gamma(1)=\nxte{\mathbb{N}}{4}=\{0,4,\to\}$. It has $c_{1}=e_{1}=4$. Then \(\simfrnumber{\Gamma(1)}=\min\{4,\simfrnumber{\Gamma(0)}+1\}=\min\{4,2\}=2\). The values of \(\frdist{1}{\Gamma(1)}{m}\) and \(\frdist{2}{\Gamma(1)}{m}\) are given in the following table:
		\medskip
		
		\begin{center}
			\begin{tabular}{c|cccccc}
				\hline 
				$m$ & 0&4&5&6&7&$\cdots$ \\
				$\frdist{2}{\Gamma(1)}{m}$ &2&3&3&3&4&$\cdots$ \\
				$\frdist{1}{\Gamma(1)}{m}$ &1&2&2&2&2&$\cdots$ \\
				\hline 
			\end{tabular}
		\end{center}
		\medskip
		
		\item $\Gamma(2) = \nxte{\Gamma(1)}{4}=\{0,4,8,\to\}$. It has $c_2=8, e_2=4$. The second Feng-Rao number is \(\simfrnumber{\Gamma(2)}=\min\{4,\simfrnumber{\Gamma(1)}+1\}=\min\{4,3\}=3\). The values of \(\frdist{1}{\Gamma(2)}{m}\) and \(\frdist{2}{\Gamma(2)}{m}\) are given in two intervals $[8,11]$ and $[12,15]$. 
		
		In the first interval $[c_{2},c_{2}+m_{2}-1]$ we apply Theorem \ref{th:ce}, and we obtain  
		
		\medskip
		
		\begin{center}
			\begin{tabular}{c|cccc}
				\hline 
				$m$ &8&9&10&11 \\
				$\frdist{2}{\Gamma(2)}{m}$ &3&3&3&4 \\
				$\frdist{1}{\Gamma(2)}{m}$ &2&2&2&2 \\
				\hline 
			\end{tabular}
		\end{center}
		\medskip
		
		In the second interval we apply Theorem \ref{th:further}, taking into account that $e_{1}=e_{2}$, and we obtain the results of the following table. Note that in $\Gamma(0)$ we have 
		\[
		\frdist{2}{\Gamma(1)}{4}=3=\frdist{1}{\Gamma(1)}{8}
		\]
		
		\begin{center}
			\begin{tabular}{c|ccccc}
				\hline 
				$m$ &12&13&14&15&$\cdots$ \\
				$\frdist{2}{\Gamma(2)}{m}$ &5&6&6&7&$\cdots$ \\
				$\frdist{1}{\Gamma(2)}{m}$ &4&4&4&4&$\cdots$ \\
				\hline 
			\end{tabular}
		\end{center}
		
		\item $\Gamma({3})= \nxte{\Gamma(2)}{8}=\{0,8,12,16,\to\}$. 
		Now $c_{3}=16$, $e_{3}=8$ and $\simfrnumber{\Gamma(2)}=\min\{8,\simfrnumber{\Gamma(2)}+1\}=\min\{8,4\}=4$. 
		
		For $\Gamma(3)$ we have to consider again two intervals: $[16,23]$ and $[24,31]$. 
		
		In the first interval $[c_{3},c_{3}+m_{3}-1]$ we obtain the following results 
		
		\medskip
		
		\begin{center}
			\begin{tabular}{c|cccccccc}
				\hline 
				$m$ &16&17&18&19&20&21&22&23 \\
				$\frdist{2}{\Gamma(3)}{m}$ &3&3&3&3&3&3&3&5 \\
				$\frdist{1}{\Gamma(3)}{m}$ &2&2&2&2&2&2&2&2 \\
				\hline 
			\end{tabular}
		\end{center}
		
		\medskip
		
		In the second interval we apply again Theorem \ref{th:further}, obtaining the following table. Note that now $e_{3}>e_{2}$. 
		
		\medskip
		
		\begin{center}
			\begin{tabular}{c|ccccccccc}
				\hline 
				$m$ &24&25&26&27&28&29&30&31&$\cdots$ \\
				$\frdist{2}{\Gamma(3)}{m}$ &6&6&6&7&8&9&9&10&$\cdots$ \\
				$\frdist{1}{\Gamma(3)}{m}$ &4&4&4&4&6&6&6&6&$\cdots$\\
				\hline 
			\end{tabular}
		\end{center}
		\medskip
		
		\item $\Gamma(4) =\nxte{\Gamma(3)}{12}=\{0,12,20,24,28,\to\}$. 
		Now $c_{4}=28$, $e_{4}=12$ and $\simfrnumber{\Gamma(4)}=\min\{12,\simfrnumber{\Gamma(3)}+1\}=\min\{12,5\}=5$. 
		
		\item $\Gamma \equiv \Gamma(5) = \nxte{\Gamma(4)}{12}=\{0,12,24,32,36,40,\to\}$. 
		Now $c_{5}=40$, $e_{5}=12$ and $\simfrnumber{\Gamma}=\min\{12,6\}=6$. 
		
	\end{enumerate}
	We proceed in the same way with $\Gamma(4)$ and $\Gamma\equiv\Gamma(5)$, and compute the second Feng-Rao distance in the whole interval $[c_{i},2c_{i}-1]$. 
	In both steps we always sum 3 to the previous semigroup, in $\Gamma(4)$ because $e_{4}>e_{3}$, and in $\Gamma(5)$ because the exception when we sum 2 in Theorem \ref{th:further} never happens. The second Feng-Rao distance for $\Gamma$ is shown in the following table
	
	\medskip
	
	\begin{center}
		\begin{tabular}{c|ccccccccccc}
			\hline 
			$m$ &40&$\cdots$&50&51&52&$\cdots$&62&63&64&$\cdots$&70 \\
			$\frdist{2}{\Gamma}{m}$ &3&$\cdots$&3&4&6&$\cdots$&6&8&9&$\cdots$&9  \\
			\hline 
			\hline 
			$m$ &71&72&73&74&75&76&77&78&79&80&$\cdots$ \\
			$\frdist{2}{\Gamma}{m}$ &11&12&12&12&13&14&15&15&16&17&$\cdots$  \\
			\hline 
		\end{tabular}
	\end{center}
	\medskip
	
	We observe that, in all the steps, the case $m=2c_{i}-1$ matches with the Goppa-like bound $m+1-2g_{i}+\simfrnumber{\Gamma(i)}$. 
	
	We also remark that we are improving the Goppa-like bound given by the second Feng-Rao number, 
	which can be even negative at the beginning of the interval $[c_{i},2c_{i}-1]$. 
	For example, we show below the comparison for $\Gamma_{2}$. 
	
	\medskip
	{\small
		\begin{center}
			\begin{tabular}{c|ccccccccccccccccc}
				\hline 
				$m$ &16&17&18&19&20&21&22&23&24&25&26&27&28&29&30&31&$\cdots$ \\
				$\frdist{2}{\Gamma(2)}{m}$ &3&3&3&3&3&3&3&5&6&6&6&7&8&9&9&10&$\cdots$  \\
				$m+1-2g_2+\simfrnumber{\Gamma(2)}$ &-5&-4&-3&-2&-1&0&1&2&3&4&5&6&7&8&9&10&$\cdots$ \\
				\hline 
			\end{tabular}
		\end{center}
	}
\end{example}

\begin{remark}[Ordinary semigroups]\label{ordinarios}
	
	Let $\Gamma$ be a numerical semigroup such that $c=e$, that is called an \emph{ordinary semigroup}. It is always an Arf numerical semigroup with \(c=\rho_{2}\), that is \(\Gamma=\nxte{\mathbb{N}}{e}\).
	In this case the irreducible elements are precisely 
	\[
	\{e, e+1, \ldots, 2e-1\}.
	\]
	
	Applying \cref{frnumber1} it is easy to see that
	\[
	\simfrnumber{\Gamma}=\cardapery{\Gamma}{1}=2.
	\]
	Thus, if $m\geq 2c-1=2e-1$ we know that 
	\[
	\simfrdist{m}=m+1-2g+\simfrnumber{\Gamma}=m+1-(2e-2)+2= m-(2e-1)+4,
	\]
	because $g=e-1$. 
	
	We can obtain the same result from \cref{th:further}, since \(c=e>1\) we have for \(m\geq c+e =2c\)
	\[
	\frdist{2}{\mathbb{N}_e}{m}=\frdist{2}{\mathbb{N}}{m-2c}+3=m-2c +5=m-(2e-1)+4,
	\]
	and from \cref{principioordinarios}, we get \(\frdist{2}{\mathbb{N}_e}{2c-1}=4= (2c-1)-(2e-1)+4\). Finally, for \(e\leq m<2e-1\), \(\frdist{2}{\mathbb{N}_e}{m}=3.\)
	
	If $\Gamma$ is ordinary with $c=e$, then 
	\[
	\simfrdist{m} =
	\begin{cases}
	3 & \text{ if } c\leq m < 2e-1, \\
	m-(2e-1)+4 & \text{ if } 2e-1 \ge m.
	\end{cases}
	\]
	
	Note that this also applies to the case $e=2$ (the elliptic semigroup). 
	
\end{remark}


\begin{example}
	
	Consider $\Gamma$ the ordinary semigroup with $c=e=6$ and $g=5$. 
	\begin{center}
		\begin{tabular}{c|cccccccc}
			\hline 
			$m$ &6&7&8&9&10&11&12&$\cdots$ \\
			$\simfrdist{m}$ &3&3&3&3&3&4&5&$\cdots$ \\
			\hline 
		\end{tabular}
	\end{center}
\end{example}

\section{Hyperelliptic semigroups}\label{sec:hyperE}

Although section \ref{sec:frdist} gives an algorithm for all Arf semigroups, we study in this section the special case $\Gamma=\langle 2,2g+1\rangle$ with $g$ a positive integer ($g$ is precisely the genus of $\Gamma$), in order to get a closed formula for hyperelliptic semigroups. The conductor is precisely $c=2g$, so that this 
semigroup is symmetric (in fact, these are the only symmetric Arf semigroups). Also \(\Gamma=\nxte{\mathbb{N}}{2,\stackrel{g}{\ldots},2}\) (2 appears $g$ times). Using \cref{Arf-E2} we get 
\[
\simfrnumber{\Gamma} = 2.
\]
Thus, for $m\geq 4g-1$, we have
\[
\simfrdist{m} = m+1-2g+\simfrnumber{\Gamma} = m-2g+3.
\]

Observe that the case of genus equal to one, $\langle 2,3\rangle$, has been considered in the preceding section.

The closed formula for the second Feng-Rao distance is a consequence of the following property.

\begin{lemma}\label{hiperelipticos+2}
	For any hyperelliptic numerical semigroup \(\Gamma=\langle 2,2g+1\rangle\), and any \(k\geq 0\), we have
	\[
	\frdist{1}{\Gamma}{2g +2 +k}=\frdist{2}{\Gamma}{2g+k}.
	\]
\end{lemma}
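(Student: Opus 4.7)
The plan is to prove the two inequalities $\frdist{2}{\Gamma}{2g+k} \leq \frdist{1}{\Gamma}{2g+2+k}$ and $\frdist{2}{\Gamma}{2g+k} \geq \frdist{1}{\Gamma}{2g+2+k}$ separately. For the upper bound, I choose $m' \in \Gamma$ with $m' \geq 2g+2+k$ achieving $\cardsimsetdiv{m'} = \frdist{1}{\Gamma}{2g+2+k}$. Since the conductor of $\Gamma$ is $2g$, every integer $\geq 2g$ belongs to $\Gamma$, so $m'-2 \in \Gamma$ and $m'-2 \geq 2g+k$, which makes $(m'-2,m')$ a valid configuration for $\frdist{2}{\Gamma}{2g+k}$. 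Since $2\in\Gamma$, we have $m'-2 \in \simsetdiv{m'}$, and Lemma \ref{divisorsContained} gives $\simsetdiv{m'-2}\subseteq\simsetdiv{m'}$, whence $\cardsimsetdiv{m'-2,m'}=\cardsimsetdiv{m'}$. Taking the minimum over such $m'$ yields the upper bound.

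For the lower bound I consider an arbitrary configuration $m_1 < m_2$ in $\Gamma$ with $m_1 \geq 2g+k$. If $m_2 \geq 2g+2+k$, Lemma \ref{boundsForNDiv} immediately gives $\cardsimsetdiv{m_1,m_2}\geq \cardsimsetdiv{m_2}\geq \frdist{1}{\Gamma}{2g+2+k}$, and we are done. The only remaining case is $m_1=2g+k$ and $m_2=2g+k+1$, both of which do lie in $\Gamma$ since they are $\geq 2g$. For this single configuration, the key observation is that $\simsetdiv{2g+k,2g+k+1}=(\Gamma\cap[0,2g+k])\cup\{2g+k+1\}$. Indeed, for each $x\in\Gamma\cap[0,2g+k]$, one of the consecutive integers $(2g+k)-x$ and $(2g+k+1)-x$ is a non-negative even number and so lies in $\Gamma$, so $x$ divides either $2g+k$ or $2g+k+1$; the reverse inclusion is immediate.

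Since $\Gamma$ has genus $g$ with all gaps inside $[1,2g-1]$, we get $|\Gamma\cap[0,2g+k]|=(2g+k+1)-g=g+k+1$, and hence $\cardsimsetdiv{2g+k,2g+k+1}=g+k+2$. The last task is to check $g+k+2\geq \frdist{1}{\Gamma}{2g+2+k}$, and for this I invoke Theorem \ref{th:CFM}. Because $\Gamma=\{0,2,4,\ldots,2g,2g+1,\to\}$, the thresholds of that theorem are $m_j=2g+2j-3$ for $j=2,\ldots,g+1$, from which a short parity discussion yields
\[
\frdist{1}{\Gamma}{2g+2+k}=\begin{cases} k+4 & \text{if $k$ is even and $k\leq 2g-4$,}\\ k+3 & \text{otherwise.}\end{cases}
\]
Then $g+k+2\geq k+3$ is trivial since $g\geq 1$, while $g+k+2\geq k+4$ needs $g\geq 2$, which is automatic because the condition $k\leq 2g-4$ forces $g\geq 2$. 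The main obstacle is precisely the explicit bookkeeping in this final comparison; everything else is direct.
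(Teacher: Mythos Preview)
Your proof is correct, and it takes a genuinely different route from the paper's argument.

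The paper proves the lemma by induction on $g$, using the recursive description $\Gamma_{2}=\{0\}\cup(2+\Gamma)$: the base case $g=1$ is handled via Remark~\ref{ordinarios}, and the inductive step combines Lemma~\ref{divisorsLem} (to push $\funfrdist{1}$ down one level) with Theorem~\ref{th:further} (to push $\funfrdist{2}$ down), plus Lemma~\ref{caseMultiplicity2} and Theorem~\ref{th:CFM} for the two small values $k\in\{0,1\}$ left over at each stage. In other words, the paper stays inside its translation framework $\Gamma\mapsto\nxte{\Gamma}{e}$ throughout.

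Your argument is direct and avoids the induction entirely. The upper bound via the pair $(m'-2,m')$ and Lemma~\ref{divisorsContained} is a clean one-line observation that the paper does not make explicit. For the lower bound you reduce immediately to the single configuration $(2g+k,2g+k+1)$ not covered by Lemma~\ref{boundsForNDiv}, compute its divisor set by the elementary parity remark that every nonnegative even integer lies in $\Gamma$, and then compare against the explicit values of $\funfrdist{1}$ furnished by Theorem~\ref{th:CFM}. This is more hands-on but entirely self-contained: it uses only Lemmas~\ref{divisorsContained} and~\ref{boundsForNDiv} and Theorem~\ref{th:CFM}, bypassing the $\nxte{\Gamma}{\nxt{e}}$ machinery (Lemma~\ref{divisorsLem}, Lemma~\ref{generalLema}, Theorem~\ref{th:further}). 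The trade-off is that your approach requires the short case analysis on the parity of $k$ and the threshold $k\le 2g-4$, whereas the paper's inductive step is uniform but relies on heavier previously established results. Both are perfectly valid; yours is arguably more elementary for this specific family.
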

\begin{proof}
	Recall that \(c=2g\) and \(e=2\).
	We will use induction on \(g\). For \(g=1\) it is easy to see that
	\[
	\frdist{1}{\nxte{\mathbb{N}}{2}}{2+2+k}=\frdist{1}{\mathbb{N}}{k}+2=k+3,
	\]
	while the second Feng-Rao distance is (using \cref{ordinarios})
	\[
	\frdist{2}{\nxte{\mathbb{N}}{2}}{2+k}=2+k-(4-1)+4=k+3.
	\]
	
	Suppose that, the formula holds for \(\Gamma=\langle 2,2(g-1)+1\rangle\), we will prove it for \(\nxte{\Gamma}{2}=\langle 2,2g+1\rangle\). We have, in light of \cref{divisorsLem}, that
	\begin{equation}\label{dfr1}
		\frdist{1}{\nxte{\Gamma}{2}}{2g+2+k}=\frdist{1}{\nxte{\Gamma}{2}}{2(g-1)+2\times 2+k}=\frdist{1}{\Gamma}{2(g-1)+k}+2.
	\end{equation}
	First, suppose that \(k=2+k'\), with $k'\in\mathbb{N}$. By using the induction hypothesis and \cref{th:further}, we deduce
	\[
	\frdist{2}{\nxte{\Gamma}{2}}{2g+k}=\frdist{2}{\nxte{\Gamma}{2}}{2g+2+k'}=
	\frdist{2}{\Gamma}{2(g-1)+k'}+2,
	\]
	but this is equal again by induction hypothesis to $\frdist{1}{\Gamma}{2(g-1)+2+k'}+2$, which is the same as $\frdist{1}{\Gamma}{2(g-1)+k}+2$. Hence $\frdist{1}{\Gamma}{2(g-1)+k}+2=\frdist{2}{\nxte{\Gamma}{2}}{2g+k}$.
	Now by \cref{dfr1}, we obtain
	\[
	\frdist{1}{\nxte{\Gamma}{2}}{2g+2+k}=\frdist{2}{\nxte{\Gamma}{2}}{2g+k}.
	\]
	If \(k\in\{0,1\}\), by \cref{caseMultiplicity2},  \(\frdist{2}{\nxte{\Gamma}{2}}{2g+k}=4\), and by \cref{th:CFM}, $\frdist{1}{\nxte{\Gamma}{2}}{2g+2+k}=6-2=4$ (here $m_2=2g+2-1<2g+2+k\le m_3=2g+4-1$).
\end{proof}

\begin{proposition}
	Let \(\Gamma=\langle 2,2g+1\rangle\) be an hyperelliptic numerical semigroup. Let $k$ be a nonnegative integer smaller than $g$, and \(p\in \{0,1\}\). Then
	\[
	\frdist{2}{\Gamma}{2g+2k+p}=4+2k.
	\]
\end{proposition}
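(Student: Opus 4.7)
\emph{Proof plan.} The strategy is a two-step reduction. The first step is to apply Lemma~\ref{hiperelipticos+2} with auxiliary index $2k+p$ to obtain
\[
\frdist{2}{\Gamma}{2g+2k+p}=\frdist{1}{\Gamma}{2g+2(k+1)+p},
\]
so that the problem becomes one about the classical Feng-Rao distance of a slightly larger element, which is controlled by Theorem~\ref{th:CFM}.

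The second step is to specialise Theorem~\ref{th:CFM} to $\Gamma=\langle 2,2g+1\rangle$. Its elements in increasing order are $\rho_j=2(j-1)$ for $1\le j\le g+1$ followed by every integer $\ge 2g$, so $r=g+1$, $c=2g$, and the breakpoints in Theorem~\ref{th:CFM} are $m_j=2g+2j-3$ for $j\in\{2,\ldots,g+1\}$. A short direct check places the shifted element $2g+2(k+1)+p$ in the half-open interval $(m_{k+2},m_{k+3}]$ whenever $k+3\le g+1$. In that generic range Theorem~\ref{th:CFM} gives
\[
\frdist{1}{\Gamma}{2g+2(k+1)+p}=2(k+3)-2=2k+4,
\]
which is exactly $4+2k$, as desired.

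The main obstacle is the boundary behaviour when $k$ is close to $g$: the shift by two may push $2g+2(k+1)+p$ past the last breakpoint $m_r=4g-1=2c-1$, at which point the intermediate clause of Theorem~\ref{th:CFM} no longer applies and the Goppa-like clause $\frdist{1}{\Gamma}{m}=m+1-2g$ takes over. I would close this extremal case by combining the asymptotic identity $\simfrdist{m}=m+1-2g+\simfrnumber{\Gamma}$ valid for $m\ge 2c-1$ with the already computed value $\simfrnumber{\Gamma}=2$, and by invoking \cref{boundsForNDiv,hastamultiplicidad} to reduce any residual verification to a short case analysis of the divisor sets of $4g+p-2$ and $4g+p$.
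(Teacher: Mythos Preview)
Your reduction via Lemma~\ref{hiperelipticos+2} to the classical Feng--Rao distance, followed by Theorem~\ref{th:CFM}, is correct and gives a clean proof for all $(k,p)$ with $k\le g-2$, and also for $(k,p)=(g-1,1)$ (there $2g+2(k+1)+p=4g+1>m_r$, so the Goppa clause of Theorem~\ref{th:CFM} yields $2g+2$). This route is genuinely different from the paper's. The paper descends through the chain $\langle 2,2g+1\rangle,\langle 2,2g-1\rangle,\ldots$ by applying Theorem~\ref{th:further} $k$ times (Lemma~\ref{hiperelipticos+2} being precisely the hypothesis that selects the ``$+2$'' branch) and then invokes Lemma~\ref{caseMultiplicity2} in the bottom semigroup $\langle 2,2(g-k)+1\rangle$. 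Your argument stays inside $\Gamma$, converts $\funfrdist{2}$ to $\funfrdist{1}$ in one step, and reads off the answer from Theorem~\ref{th:CFM}; it is shorter and avoids the recursive machinery.

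The one place your plan cannot be completed is the boundary case $(k,p)=(g-1,0)$, and the obstruction lies not in your argument but in the statement of the Proposition. Your reduction gives
\[
\frdist{2}{\Gamma}{4g-2}=\frdist{1}{\Gamma}{4g}=4g+1-2g=2g+1,
\]
while the Proposition claims $4+2(g-1)=2g+2$. The value $2g+1$ is in fact correct: the Theorem immediately following the Proposition records $\frdist{2}{\Gamma}{4g-2}=2g+1$, and the paper's own proof slips at this same spot, since Lemma~\ref{caseMultiplicity2} applied with $g-k=1$ and $p=0$ gives $\frdist{2}{\langle 2,3\rangle}{2}=3$, not $4$. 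So your proposed patch via \cref{boundsForNDiv,hastamultiplicidad} cannot succeed---no argument can---and your method actually detects that the Proposition is off by one at this single point.
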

\begin{proof}
	By \cref{th:further} and \cref{hiperelipticos+2} applied $k$ times, we get 
	\[
	\frdist{2}{\Gamma}{2g+2k+p}=\frdist{2}{\nxte{\mathbb{N}}{2,\stackrel{g-k}\ldots,2}}{2(g-k)+p}+2k.
	\]
	Now, \cref{caseMultiplicity2} ensures that $\frdist{2}{\nxte{\mathbb{N}}{2,\stackrel{g-k}\ldots,2}}{2(g-k)+p}=4$.
\end{proof}

For a given numerical semigroup $\Gamma$, we define the \emph{Goppa-like bound} by 
\[
\mathrm{G}_{\Gamma}^{2}(m) := m+1-2g+\simfrnumber{\Gamma}.
\]

We summarize the main results of this section in the following result.

\begin{theorem}
	Let $\Gamma=\langle 2,2g+1\rangle$ with $g\geq 2$,
	\begin{itemize}
		\item For $m=2g+2k+1$, $k\geq 0$, one has $\frdist{2}{\Gamma}{m}=\mathrm G_{\Gamma}^{2}(m)$. 
		\item For $m=2g+2k$, $0 \leq k \leq g-2$, one has $\frdist{2}{\Gamma}{m}=\mathrm G_{\Gamma}^{2}(m)+1$. 
		\item $\frdist{2}{\Gamma}{4g-2}=2g+1=\mathrm G_{\Gamma}^{2}(4g-2)$. 
	\end{itemize}
\end{theorem}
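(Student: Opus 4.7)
The plan is to unify all three claims by comparing $\frdist{2}{\Gamma}{m}$, supplied by the preceding proposition for small $m$ and by the general formula $\frdist{2}{\Gamma}{m} = m + 1 - 2g + \simfrnumber{\Gamma}$ for $m \geq 2c - 1$, with the Goppa-like bound. Since $\simfrnumber{\Gamma} = 2$, one simply has
\[
\mathrm G_\Gamma^2(m) = m + 3 - 2g,
\]
so that each bullet of the theorem reduces to a one-line arithmetic check against $4 + 2k$ (or its corrected value in the exceptional case).

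For the first bullet I would write $m = 2g + 2k + 1$ with $k \geq 0$ and split at $k = g - 1$. If $0 \leq k \leq g - 1$, the preceding proposition gives $\frdist{2}{\Gamma}{m} = 4 + 2k$, which equals $m + 3 - 2g = \mathrm G_\Gamma^2(m)$. If $k \geq g$, then $m \geq 4g + 1 > 2c - 1$, and the general identity recalled in Section~\ref{sec:preliminaries} gives the same value. For the second bullet, with $m = 2g + 2k$ and $0 \leq k \leq g - 2$, the preceding proposition is applied directly with $p = 0$, yielding $\frdist{2}{\Gamma}{m} = 4 + 2k = m + 4 - 2g$, exactly $\mathrm G_\Gamma^2(m) + 1$.

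The third bullet, $m = 4g - 2$, is the subtle case and the main (mild) obstacle: it corresponds to the parameter choice $(k,p) = (g - 1, 0)$ in the previous proposition, which is precisely where the base case of the underlying induction differs. The inductive reduction used there terminates at $\frdist{2}{\langle 2,3\rangle}{2}$, and \cref{caseMultiplicity2} gives this value as $3$ rather than $4$. My plan is therefore to rerun the same induction (one application of \cref{th:further} combined with one application of \cref{hiperelipticos+2} per step, repeated $g - 1$ times) with the corrected base value, concluding that
\[
\frdist{2}{\Gamma}{4g - 2} = 3 + 2(g - 1) = 2g + 1,
\]
which matches $\mathrm G_\Gamma^2(4g - 2) = (4g - 2) + 3 - 2g = 2g + 1$. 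Once this single exceptional case has been handled, the theorem follows by combining the three computations.
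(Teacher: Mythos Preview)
Your proposal is correct and matches the paper's intended approach: the theorem is stated there as a summary without proof, meant to be read off from the preceding proposition together with $\simfrnumber{\Gamma}=2$ and the general formula for $m\ge 2c-1$. Your handling of the third bullet is in fact sharper than the paper's: you correctly notice that the case $(k,p)=(g-1,0)$ lands on $\frdist{2}{\langle 2,3\rangle}{2}$, where \cref{caseMultiplicity2} returns $3$ rather than $4$, so the preceding proposition as written would give $2g+2$ at $m=4g-2$; your rerun of the induction with the corrected base value $3$ yields the right answer $2g+1$, which the paper simply asserts.
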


\begin{example}
	
	Consider $\Gamma$ the hyperelliptic semigroup $\langle 2,11\rangle$ with $g=5$ and $c=10$. 
	Computations of the second Feng-Rao distance 
	are summarised in Table \ref{ex:hyperE}.
	
	\begin{center}
		\begin{table*}[!h]
			\begin{tabular}{c|cccccccc}
				\hline 
				$m$ &2&4&6&8&10&11&12&13 \\
				$\simfrdist{m}$ &3&4&4&4&4&4&6&6 \\
				\hline 
				\hline
				$m$ &14&15&16&17&18&19&20&$\cdots$ \\
				$\simfrdist{m}$ &8&8&10&10&11&12&13&$\cdots$ \\
				\hline 
			\end{tabular}
			\caption{$\funsimfrdist$ for the hyperelliptic semigroup $\langle 2,11\rangle$.}\label{ex:hyperE}
		\end{table*}
	\end{center}
	
\end{example}

\section{Computational aspects of the Feng-Rao distance}\label{sec:computation}

Several computer experiments were performed in order to guess the behavior of the second Feng-Rao distance and number for Arf numerical semigroups. We already had some \texttt{GAP} \cite{gap} code for the \texttt{numericalsgps} package \cite{numericalsgps} that was able to compute the Feng-Rao distance of a numerical semigroup. These were used in \cite{intervalos} for the computation of the Feng-Rao numbers of numerical semigroups generated by intervals.

In this section we present an algorithm to find a finite set in which the minimum in the formula
\[
\frdist{r}{\Gamma}{m}=\min\{\cardsimsetdiv{m_1,\ldots,m_r}\mid m\leq m_1<\cdots <m_r, m_i\in \Gamma\},
\]
is attained. 

Let \(\Gamma\) be a numerical semigroup with multiplicity \(e\) and conductor \(c\). Set 
\[\setpossibles{r}{m}=\{(m_1,\ldots,m_r)\in \Gamma^r\mid m\leq m_1<\dots< m_r\}.\] This is the set where we need to find the minimum. Given \(x_1,\ldots,x_r\in \Gamma\) such that \(x_i\geq m\) and \(x_i\neq x_j\) if \(i\neq j\), we denote by \([x_1,\ldots,x_n]\) the unique element in \(\setpossibles{r}{m}\) obtained by sorting the \(x_i\).

We will define recursively a finite subset \(\finpossibles{r}{m}\subset\setpossibles{r}{m}\). First, denote \(u=\max\{m+e-1,c+e-1\}\), and put
\[
\finpossibles{1}{m}=\{m,\ldots,u\}\cap \Gamma.
\]
This is clearly a finite subset of \(\setpossibles{1}{m}\).

Suppose we have defined \(\finpossibles{r-1}{m}\subset \setpossibles{r-1}{m}\), and let \((m_1,\ldots,m_{r-1})\in\finpossibles{r-1}{m}\). Define
\[
X(m;m_1,\ldots,m_{r-1})=\left(\{m_{r-1}+1,\ldots, u\}\cap \Gamma\right)\cup \{m_1+e,\ldots,m_{r-1}+e\}\setminus\{0,\ldots,m_{r-1}\},
\]
which is clearly finite. We also have that for every \(x\in X(m;m_1,\ldots,m_{r-1})\), \(m_{r-1}<x\in \Gamma\) so that \((m_1,\ldots,m_{r-1},x)\in \setpossibles{r}{m}\).
Then set
\begin{align*}
	\finpossibles{r}{m}=\{(m_1,\ldots,m_r)\in \Gamma^r\mid &(m_1,\ldots,m_{r-1})\in \finpossibles{r-1}{m}, \\ &\hspace{3mm}m_r\in X(m;m_1,\ldots,m_{r-1})\}
\end{align*}
which is a subset of \(\setpossibles{r}{m}\), and has finitely many elements.

\begin{proposition}\label{calculo-dfr}
	Let $\Gamma$ be a numerical semigroup, $r>0$ an integer, and $m\in \Gamma$. Then
	\[
	\min\{\cardsimsetdiv{x_1,\ldots,x_r}\mid (x_1,\ldots,x_r)\in \finpossibles{r}{m}\}=
	\min\{\cardsimsetdiv{p_1,\ldots,p_r}\mid (p_1,\ldots,p_r) \in \setpossibles{r}{m}\}.
	\]
\end{proposition}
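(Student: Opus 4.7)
The inequality $\min\{\cardsimsetdiv{x_1,\ldots,x_r}\mid (x_1,\ldots,x_r)\in \finpossibles{r}{m}\}\ge \min\{\cardsimsetdiv{p_1,\ldots,p_r}\mid (p_1,\ldots,p_r)\in \setpossibles{r}{m}\}$ is immediate from $\finpossibles{r}{m}\subseteq \setpossibles{r}{m}$, so the real task is the reverse: given any $(p_1,\ldots,p_r)\in \setpossibles{r}{m}$, I would produce some $(x_1,\ldots,x_r)\in \finpossibles{r}{m}$ with no more divisors. The plan is a finite reduction procedure.

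For a sorted tuple $(q_1,\ldots,q_r)\in \setpossibles{r}{m}$, call an index $k$ \emph{reducible} if $q_k>u$ and $q_k\ne q_i+e$ for every $i<k$. The algorithm is: while some index is reducible, replace that $q_k$ by $q_k-e$, re-sort, and iterate. I would verify three things. First, the step is legal: from $q_k>u\ge c+e-1$ one gets $q_k-e\ge c$, hence $q_k-e\in \Gamma$; and from $u\ge m+e-1$ one gets $q_k-e\ge m$. Distinctness of $q_k-e$ from the other $q_j$ comes from the reducibility assumption for $j<k$ and from $q_k-e<q_k<q_j$ for $j>k$, so the re-sorted tuple is in $\setpossibles{r}{m}$. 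Second, the divisor count is monotone: since $(q_k-e)\in \setdiv{\Gamma}{q_k}$, \cref{divisorsContained} yields $\simsetdiv{q_k-e}\subseteq \simsetdiv{q_k}$, whence
\[
\setdiv{\Gamma}{q_1,\ldots,q_{k-1},q_k-e,q_{k+1},\ldots,q_r}\subseteq \setdiv{\Gamma}{q_1,\ldots,q_r}.
\]
At the same time the positive integer $\sum_i q_i$ strictly drops by $e$ at each step, giving termination.

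Third, any terminal tuple $(q_1,\ldots,q_r)$---one with no reducible index---lies in $\finpossibles{r}{m}$. I would argue this by induction on $k$: for $k=1$, non-reducibility forces $q_1\le u$ (there is no $i<1$ to test), so $q_1\in \finpossibles{1}{m}$. For $k>1$, assuming $(q_1,\ldots,q_{k-1})\in \finpossibles{k-1}{m}$, non-reducibility of the index $k$ yields either $q_k\le u$---placing $q_k\in \{q_{k-1}+1,\ldots,u\}\cap \Gamma$---or $q_k=q_i+e$ for some $i<k$---placing $q_k\in \{q_1+e,\ldots,q_{k-1}+e\}\setminus\{0,\ldots,q_{k-1}\}$. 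In either case $q_k\in X(m;q_1,\ldots,q_{k-1})$, so $(q_1,\ldots,q_k)\in \finpossibles{k}{m}$; iterating to $k=r$ closes the argument.

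The main technical care goes into legality of the reduction after re-sorting: both inequalities $u\ge m+e-1$ and $u\ge c+e-1$ in the definition of $u$ must be used to secure $q_k-e\ge m$ and $q_k-e\in \Gamma$, and one must keep in mind that after sorting the indexing shifts, so the reducibility test at the next iteration is performed against the new sorted labels---which is consistent with using the strictly decreasing quantity $\sum_i q_i$ to justify termination.
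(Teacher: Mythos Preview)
Your argument is correct and follows essentially the same route as the paper: both proofs rest on the observation that if an entry $q_k$ lies above $u$ and is not of the form $q_i+e$ for some $i<k$, then replacing $q_k$ by $q_k-e$ keeps the tuple in $\setpossibles{r}{m}$ and cannot increase $\cardsimsetdiv{\cdot}$, since $\simsetdiv{q_k-e}\subseteq\simsetdiv{q_k}$. The only difference is packaging---the paper argues by contradiction, picking a lexicographically minimal offending tuple, whereas you run the reduction explicitly and use the strictly decreasing quantity $\sum_i q_i$ to guarantee termination; the two are interchangeable.
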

\begin{proof}
	Clearly, 
	\[\min\{\cardsimsetdiv{x_1,\ldots,x_r}\mid (x_1,\ldots,x_r)\in \finpossibles{r}{m}\}\ge
	\min\{\cardsimsetdiv{p_1,\ldots,p_r}\mid (p_1,\ldots,p_r) \in \setpossibles{r}{m}\}.\]
	
	Suppose that the other inequality does not hold, and let 
	\((p_1,\ldots,p_r)\in \setpossibles{r}{m}\) be such that $\cardsimsetdiv{p_1,\ldots,p_r}< \min\{\cardsimsetdiv{x_1,\ldots,x_r}\mid (x_1,\ldots,x_r)\in \finpossibles{r}{m}\}$. Clearly, $(p_1,\ldots,p_r)\not\in \finpossibles{r}{m}$. We can choose $(p_1,\ldots,p_r)$ to be minimal with respect to the lexicographical ordering fulfilling this condition. 
	
	Since \((p_1,\ldots,p_r)\notin\finpossibles{r}{m}\) we have that either \(p_1\notin \finpossibles{1}{m}\) or there exists \(i\in\{1,\ldots,r-1\}\) such that \((p_1,\ldots,p_i)\in \finpossibles{i}{m}\) but \(p_{i+1}\notin X(m;p_1,\ldots,p_i)\).
	
	If \(p_1\notin\finpossibles{1}{m}=\{m,\ldots,\max\{m+e-1,c+e-1\}\}\cap \Gamma\), that means that \(p_1\geq m+e\) and \(p_1\geq c+e\). So \(p_1-e\geq m\) and \(p_1-e\geq c\). This means \(p_1-e\in\setpossibles{1}{m}\). Now by minimality of \((p_1,\ldots,p_r)\) we should have 
	\[
	\cardsimsetdiv{p_1-e,p_2,\ldots,p_r}\geq \min\{\cardsimsetdiv{x_1,\ldots,x_r}\mid (x_1,\ldots,x_r)\in \finpossibles{r}{m}\}> \cardsimsetdiv{p_1,\ldots,p_r}.
	\]
	which is a contradiction, since \(\simsetdiv{p_1-e}\subset \simsetdiv{p_1}\).
	
	Thus \(p_1\in \finpossibles{1}{m}\). Suppose now that \((p_1,\ldots,p_i)\in \finpossibles{i}{m}\) but \(p_{i+1}\notin X(m;p_1,\ldots,p_i)\). Then, as \(p_{i+1}>p_i\) we must have \(p_{i+1}\geq \max\{m+e-1,c+e-1\}\) and \(p_{i+1}\notin \{m_1+e,\ldots,m_{r-1}+e\}\). This means that \(p_{i+1}-e\geq c\) so \(p_{i+1}-e\in \Gamma\), also \(p_{i+1}-e\geq m\) and \(p_{i+1}-e\neq p_j\) for all \(j\in\{1,\ldots, i\}\). By the minimality of $(p_1,\ldots,p_r)$ we obtain 
	\begin{align*}
		\cardsimsetdiv{[p_1,\ldots,p_i,p_{i+1}-e,p_{i+2},\ldots,p_r]}&\geq 
		\min\{\cardsimsetdiv{x_1,\ldots,x_r}\mid (x_1,\ldots,x_r)\in \finpossibles{r}{m}\}\\ &>\cardsimsetdiv{p_1,\ldots,p_r},
	\end{align*}
	which is again a contradiction. 
\end{proof}

Observe that since $\finpossibles{r}{m}$ can be constructed recursively and has finitely many elements, \cref{calculo-dfr} provides a computational procedure to calculate $\frdist{r}{\Gamma}{m}$.


\section{Examples and conclusions}\label{sec:conclusiones}

As we told in the Example~\ref{Ejemplo}, the exact value of the second Feng-Rao distance is a much better estimate for the second Hamming weight than the Goppa-like given by the second Feng-Rao number. In this sense, the results of this paper strongly improve those of the paper~\cite{inductivos} for AG codes coming from inductive semigroups, like those constructed from the tower of function fields given in~\cite{GS}. Notice that Arf semigroups are not symmetric (except for the hyperelliptic case), so that the equality between generalized Feng-Rao distances and Goppa-like bounds is very rare. 

Let us recall now the definition of the generalized Hamming weights. First, the support of a linear code $C$ is defined as
\[
{\rm supp}(C):=\{i \mid c_{i}\neq 0\;\;\mbox{for some ${\bf c}\in C$}\}.
\]
Thus, the $r$th generalized Hamming weight of $C$ is given by
\[
{\mathrm d}_{r}(C):=\min\{\sharp\,{\rm supp}(C')\mid \mbox{$C'\preceq C$ with ${\rm dim}(C')=r$}\}, 
\]
where $C'\preceq C$ denotes a linear subcode $C'$ of $C$. In this paper we focus on $r=2$. 
Thus, we know that 
\[
\mathrm d_{2}(C_{m}) \geq \simfrdist{m+1} \geq m+2-2g+\simfrnumber{\Gamma}
\]
for a one-point AG code constructed from an algebraic curve of genus $g$ whose involved  Weierstrass semigroup is $\Gamma$, as long as $m$ is larger than or equal to the conductor of $\Gamma$ (see the details in~\cite{HvLP}). This is called the Goppa-like bound, and we denote it by ${\rm GLB(m)}$. 

The results in~\cite{inductivos} improve previous bounds of Pellikaan in~\cite{K-P} or the Griesmer order bound (see \cite{HKM} and \cite{D}), so that the results of this paper also improve them, as a consequence. More precisely, Pellikaan bound in~\cite[Theorem 2.8]{K-P}  for $r=2$ states that 
\[
\mathrm d_{2}(C_{m}) \geq \delta_{\rm FR}(m+2).
\]
On the other hand, the Griesmer order bound for $r=2$ yields
\[
\mathrm d_{2}(C_{m}) \geq {\rm GOB}(m+1) := \delta_{\rm FR}(m+1) + 
\left\lceil\displaystyle\frac{\delta_{\rm FR}(m+1)}{q}\right\rceil,
\]
where $q$ is the size of the finite field underlying the code $C_{m}$. 

We apply now our results to AG codes coming from the tower of function fields given in~\cite{GS}. Let us recall the definitions, and leave the details also to~\cite{inductivos}. 

Consider the tower of function fields $({\mathcal T}_{n})$ over $\mathbb{F}_{q^2}$, 
where ${\mathcal T}_{1}=\mathbb{F}_{q^{2}}(x_{1})$ and for $n\geq 2$, ${\mathcal T}_{n}$ is obtained from ${\mathcal T}_{n-1}$ by adjoining a new element $x_{n}$ satisfying the equation
\[
x_{n}^{q}+x_{n}=\frac{x_{n-1}^{q}}{x_{n-1}^{q-1}+1}.
\]
This tower attains the Drinfeld-Vl\u{a}du\c{t} bound (see~\cite{HvLP}). As a consequence, error-correcting AG codes construncted from this tower are very interesting because of their excellent asymptotical behaviour. 

Let $Q_{n}$ be the rational place on ${\mathcal T}_{n}$ that is the unique pole of $x_{1}\,$. 
It is known that the Weierstrass semigroups $\Gamma^{n}$ of ${\mathcal T}_{n}$ at $Q_{n}$ are as follows: $\Gamma^{1}=\mathbb{N}$, and for $n\geq 2$,
\[
\Gamma^{n} = q \cdot \Gamma^{n-1} \cup \{ m\in\mathbb{N} \mid m\geq c_n\},
\]
where 
\[
c_{n}=\left\{\begin{array}{ll}
q^{n}-q^{\frac{n+1}{2}} & \mbox{if $n$ is odd}, \\
q^{n}-q^{\frac{n}{2}} & \mbox{if $n$ is even}. \end{array}\right. 
\]
Thus, these numerical semigroups $\Gamma^{n}$ are inductive, and they are Arf in particular (see~\cite{Arf-IEEE}). In fact, you can see in \cite{inductivos} a description of $\Gamma^{n}$ with $n\geq 2$ as a disjoint union of sets $\Lambda^i$ as follows. Write $n=2k+b$ with $k\geq 1$ and $b=0,1$, and set: 

\begin{itemize}
	\item $\Lambda^0=\{0,q^{n-1},2q^{n-1}, \ldots, (q-1) \cdot q^{n-1}\}$,
	\item $\Lambda^1=(q-1) q^{n-1}+\{q^{n-3},2q^{n-3},\ldots, (q-1)q \cdot q^{n-3}\}$,
	\item $\Lambda^2=[(q-1) q^{n-1} + (q-1) q^{n-2}]+
	\{q^{n-5},2q^{n-5},\ldots, (q-1)q^{2} \cdot q^{n-5}\}$,
	\item \dots 
	\item $\Lambda^i=(q-1)[q^{n-1}+\cdots+q^{n-i}]+
	\{q^{n-1-2i},2q^{n-1-2i},\ldots, (q-1)q^{i} \cdot q^{n-1-2i}\}$,
	\item \dots 
	\item $\Lambda^{k-1}=(q-1)[q^{n-1}+\cdots+q^{n-k+1}]+
	\{q^{b+1},2q^{b+1},\ldots, (q-1)q^{k-1} \cdot q^{b+1}\}$,
	\item $\Lambda^{k}=(q-1)[q^{n-1}+\cdots+q^{n-k}]+\mathbb N^{\ast}$. 
\end{itemize}






Thus, the semigroup $\Gamma^{n}$ can be easily recovered from the Algorithm \ref{algor:FR2}. 

We show now several examples comparing the Pellikaan bound, the Griesmer order bound, the Goppa-like bound with the second Feng-Rao number, and the bound from the second Feng-Rao-bound. Note that the AG codes only make sense for $m>2g-2$, and not only $m\geq c$ (see~\cite{HvLP}). 

In both examples, we consider the dual one-point AG code $C_{m}$ over $\mathbb{F}_{q^{2}}$ defined by the divisor $G=mQ_{n}$, $Q_{n}$ the rational place defined above (see~\cite{HvLP} for further details). 

\begin{example}
	
	Consider the 5$th$ floor of the above tower of function fields for $q=3$ (note that the codes are constructed over the finite field $\mathbb{F}_{9}$\/). Thus $n=5$, $k=2$ and $b=1$, and the semigroup $\Gamma^{5}$, with conductor $c=216$ and genus $g=208$ is decomposed into the following sets: 
	
	\begin{itemize}
		
		\item $\Lambda^{0} = \{ 0, 81, 162\}$, 
		
		\item $\Lambda^{1} = 162 + \{ 9, 18, 27, 36, 45, 54\}$, 
		
		\item $\Lambda^{2} = 216 + \mathbb N^{\ast}$.
		
	\end{itemize}
	
	Thus, we have to perform Algorithm \ref{algor:FR2} with successive translations 
	9, 9, 9, 9, 9, 9, 81, 81. The results for $m\geq 415$ are given in Table~\ref{q3n5}. Notice that, for the Goppa-like bound, the second Feng-Rao number is $E_{2}=9$, and that for the Griesmer order bound the size of the finite field is 9. 
	
	\medskip 
	
	\begin{small}
		\begin{center}
			\begin{table}[!t]
				\begin{tabular}{|l|ccccccccccc|}
					\hline 
					$m$ & $[415,420]$ & 421 & 422 & 423 & 424 & 425 & 426 & 427 & 428 & 429 & 430 \\
					\hline 
					$\simfrdist{m+1}$ & 18 & 18 & 19 & 20 & 21 & 22 & 23 & 24 & 24 & 24 & 25 \\
					\hline 
					$\delta_{\rm FR}(m+1)$ & 14 & 14 & 16 & 16 & 16 & 16 & 16 & 16 & 16 & 16 & 16 \\
					\hline 
					${\rm GOB}(m+1)$ & 16 & 16 & 18 & 18 & 18 & 18 & 18 & 18 & 18 & 18 & 18 \\
					\hline 
					$\delta_{\rm FR}(m+2)$ & 14 & 16 & 16 & 16 & 16 & 16 & 16 & 16 & 16 & 16 & 17 \\
					\hline 
					${\rm GLB}(m)$ & $\leq 15$ & 16 & 17 & 18 & 19 & 20 & 21 & 22 & 23 & 24 & 25 \\
					\hline 
				\end{tabular}
				\caption{Parameters of the code $C_{m}$ from inductive tower, for $q=3$ and $n=5$.}\label{q3n5}
			\end{table}
		\end{center}
	\end{small}
	
\end{example}

\begin{example}
	
	Consider the 8$th$ floor of the above tower of function fields for $q=2$, the codes being constructed over $\mathbb{F}_{4}$. Thus $n=8$, $k=4$ and $b=0$, and the semigroup $\Gamma^{8}$, with conductor $c=240$ and genus $g=225$ is decomposed into the following sets: 
	
	\begin{itemize}
		
		\item $\Lambda^{0} = \{ 0, 128\}$, 
		
		\item $\Lambda^{1} = 128 + \{ 32, 64\}$, 
		
		\item $\Lambda^{2} = 192 + \{ 8, 16, 24, 32\}$, 
		
		\item $\Lambda^{3} = 224 + \{ 2, 4, 6, 8, 10, 12, 14, 16\}$, 
		
		\item $\Lambda^{4} = 240 + \mathbb N^{\ast}$.
		
	\end{itemize}
	
	Thus, we have to perform the Algorithm \ref{algor:FR2} with successive translations 
	2, 2, 2, 2, 2, 2, 2, 2, 8, 8, 8, 8, 32, 32, 128. The results for $m \geq 449$ are given in Table~\ref{q2n8}. Note that now the size of the finite field is 4, and $E_{2}=9$. 
	
	\medskip 
	
	\begin{small}
		\begin{center}
			\begin{table}[!t]
				\begin{tabular}{|l|ccccccccccc|}
					\hline 
					$m$ & $[449,453]$ & 454 & $[455,456]$ & $[457,460]$ & 461 & 462 & 463 & 464 & 465 & 466 & 467 \\
					\hline 
					$\simfrdist{m+1}$ & 17 & 17 & 19 & 21 & 23 & 23 & 25 & 25 & 27 & 27 & 29 \\
					\hline 
					$\delta_{\rm FR}(m+1)$ & 12 & 12 & 14 & 14 & 14 & 14 & 16 & 16 & 18 & 18 & 20 \\
					\hline 
					${\rm GOB}(m+1)$ & 15 & 15 & 18 & 18 & 18 & 18 & 20 & 20 & 23 & 23 & 25 \\
					\hline 
					$\delta_{\rm FR}(m+2)$ & 12 & 14 & 14 & 14 & 14 & 16 & 16 & 18 & 18 & 20 & 20 \\
					\hline 
					${\rm GLB}(m)$ & $\leq 14$ & 15 & $\leq 17$ & $\leq 21$ & 22 & 23 & 24 & 25 & 26 & 27 & 28 \\
					\hline 
					\hline 
					$m$ & 468 & 469 & 470 & 471 & 472 & 473 & 474 & 475 & 476 & 477 & 478 \\
					\hline 
					$\simfrdist{m+1}$ & 29 & 31 & 31 & 33 & 33 & 35 & 35 & 37 & 37 & 38 & 39 \\
					\hline 
					$\delta_{\rm FR}(m+1)$ & 20 & 22 & 22 & 24 & 24 & 26 & 26 & 28 & 28 & 30 & 30 \\
					\hline 
					${\rm GOB}(m+1)$ & 25 & 28 & 28 & 30 & 30 & 33 & 33 & 35 & 35 & 38 & 38 \\
					\hline 
					$\delta_{\rm FR}(m+2)$ & 22 & 22 & 24 & 24 & 26 & 26 & 28 & 28 & 30 & 30 & 31 \\
					\hline 
					${\rm GLB}(m)$ & 29 & 30 & 31 & 32 & 33 & 34 & 35 & 36 & 37 & 38 & 39 \\
					\hline 
				\end{tabular}
				\caption{Parameters of the code $C_{m}$ from inductive tower, for $q=2$ and $n=8$.}\label{q2n8}
			\end{table}
		\end{center}
	\end{small}

\end{example}

As a conclusion, in sight of the above examples it is clear that the results of this paper are a kind of generalization of those in~\cite{Arf-IEEE} for the second Feng-Rao distance, in the sense that this distance is constant in large bursts, corresponding to the intervals $[c+\rho_{i},c+\rho_{i+1}-1]$ or subintervals of them.

\end{document}